\definecolor{c1}{RGB}{0,90,187}
\definecolor{c2}{RGB}{255,213,0}
\theoremstyle{definition}
\newtheorem{theorem}{Theorem}[section]
\newtheorem{corollary}[theorem]{Corollary}
\newtheorem{definition}[theorem]{Definition}
\newtheorem{lemma}[theorem]{Lemma}
\newtheorem{proposition}[theorem]{Proposition}
\newtheorem{example}[theorem]{Example}
\theoremstyle{remark}
\newtheorem{remark}[theorem]{Remark}
\newcommand{\ee}{\mathrm{e}}
\newcommand{\ii}{\mathrm{i}}
\newcommand{\sfq}{\mathsf{q}}
\newcommand{\sfp}{\mathsf{p}}
\numberwithin{equation}{section}
\title{\textbf{Universal Cusp Scaling in Random Partitions}} 
\author[a]{\textsc{Taro Kimura}}
\author[a,b]{\textsc{Ali Zahabi}}
\affil[a]{%
Institut de Math\'ematiques de Bourgogne, Universit\'e de Bourgogne, CNRS, France}
\affil[b]{%
London Institute for Mathematical Sciences, Royal Institution, London W1S 4BS, UK}
\date{}
\begin{document}

\maketitle

\begin{abstract}
We study the universal scaling limit of random partitions obeying the Schur measure.
Extending our previous analysis~\cite{Kimura:2020sud}, we obtain the higher-order Pearcey kernel describing the multi-critical behavior in the cusp scaling limit. 
We explore the gap probability associated with the higher Pearcey kernel, and derive the coupled nonlinear differential equation and the asymptotic behavior in the large gap limit.
\end{abstract}

%\newpage
\tableofcontents
\vspace{1.5em}
\hrule

\section{Introduction and summary}

Universality of the eigenvalue statistics in the scaling limit is one of the key concepts in the study of random matrices~\cite{Mehta:2004RMT,Forrester:2010,Akemann:2011RMT}.
For example, the Tracy--Widom distribution~\cite{Tracy:1992rf}, which was originally introduced to describe the largest eigenvalue statistics of the Gaussian Unitary Ensemble (also known as edge statistics), is now applied to a large number of statistical problems in various contexts.
Typically, the eigenvalue density function $\rho(x)$ of the standard Hermitian single random matrix is obtained by solving a second order algebraic equation with respect to the auxiliary function, called the resolvent.
Hence, in the vicinity of the spectral edge $x_*$, the density function shows square root singularity, $\rho(x) \sim (x_* - x)^{\frac{1}{2}}$, where the eigenvalue statistics is well described by the Airy kernel~\cite{Forrester:1993vtx,Nagao:1993JPSJ}.
Replacing such a square root singularity with a higher order one, $\rho(x) \sim (x_* - x)^{\frac{1}{p}}$ with $p$ even, which would be realized in the fermion momenta distribution in the non-harmonic trap~\cite{LeDoussal:2018dls}, one may obtain the higher-order analog of the Airy kernel and the corresponding Tracy--Widom distribution~\cite{Claeys:2009CPAM,Akemann:2012bw,LeDoussal:2018dls,Cafasso:2019IMRN}.

A similar singular behavior is observed at the cusp point,  appearing in the collision limit of the eigenvalue supports, that we call the cusp statistics.
In this case, we shall apply the Pearcey kernel, an analog of the Airy kernel using the Pearcey integral function~\cite{Pearcey:1946PM}, to describe the eigenvalue statistics in the vicinity of the cusp singularity~\cite{Brezin:1998zz,Brezin:1998PREb}.
See also~\cite{Brezin:2016eax}.
Similar to the edge statistics, the cusp statistics is widely discussed in various contexts, including stochastic process~\cite{Tracy:2006CMP} and asymptotics of the combinatorial problems~\cite{Okounkov:2006CMP}.

%\subsection*{Summary of the results}

In this paper, we study the random partition distribution based on the Schur measure~\cite{Okounkov:2001SM}, which depends on two sets of parameters, $\mathsf{X} = (\mathsf{x}_i)_{i \in \mathbb{N}}$ and $\mathsf{Y} = (\mathsf{y}_i)_{i \in \mathbb{N}}$.
It has been known that the Schur measure random partition is a discrete determinantal point process:
The correlation function is obtained as a determinant of the associated discrete kernel $K(x,y)$ for $x,y \in \mathbb{Z} + \frac{1}{2}$.
See \eqref{eq:Schur_kernel} for the definition of the kernel.
In our previous paper~\cite{Kimura:2020sud} (see also~\cite{Betea:2020}), we showed that this kernel is asymptotic to the higher Airy kernel in the scaling limit under the condition $\mathsf{X} = \mathsf{Y}$.
In this paper, we relax this condition and consider generic parameters $\mathsf{X}$ and $\mathsf{Y}$. 
Then, we obtain the following scaling limit of the Schur measure kernel.
\begin{proposition}[Proposition~\ref{prop:kernel_scaling_lim}]
We have the following asymptotic behavior of the Schur measure kernel under the multicritical condition \eqref{eq:multicrit_cond},
 \begin{align}
  \lim_{\epsilon \to 0}
  \qty(\frac{\alpha_p}{\epsilon})^{\frac{1}{p+1}}
  K\qty(\frac{\beta}{\epsilon} + \qty(\frac{\alpha_p}{\epsilon})^{\frac{1}{p+1}} x, \frac{\beta}{\epsilon} + \qty(\frac{\alpha_p}{\epsilon})^{\frac{1}{p+1}} y) = %\qty( \frac{\alpha_p}{\epsilon} )^{\frac{1}{p+1}}
  K_{p\text{-Airy}}(x,y) 
  \, ,
 \end{align} 
where $(\alpha_p, \beta)$ are the constants defined in \eqref{eq:alpha_beta}, and $K_{p\text{-Airy}}(\cdot,\cdot)$ is the higher Airy kernel constructed with the higher Airy functions (Definition~\ref{def:higher_Airy_kernel}).
\end{proposition}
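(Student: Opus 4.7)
The plan is to begin with the standard double-contour integral representation of the Schur-measure kernel from \eqref{eq:Schur_kernel},
\begin{equation*}
K(a,b) = \oint\frac{dz}{2\pi\ii}\oint\frac{dw}{2\pi\ii}\,
\frac{1}{z-w}\,\frac{\ee^{S(z)}}{\ee^{S(w)}}\,z^{-a-\frac{1}{2}}\,w^{b-\frac{1}{2}},
\end{equation*}
with nested contours $|z|>|w|$, where $S(z) = \sum_i \log(1-\mathsf{x}_i z) - \sum_i \log(1-\mathsf{y}_i/z)$ encodes the Schur weights. Substituting the scaled arguments $a = \beta/\epsilon + (\alpha_p/\epsilon)^{1/(p+1)} x$ and the analogous expression for $b$, the powers of $z,w$ combine with $S$ into an effective exponent of the form $\epsilon^{-1}\Phi_\epsilon(z) - \epsilon^{-1}\Phi_\epsilon(w)$, where $\Phi_\epsilon$ depends on $\beta$ and on the Schur data $\mathsf{X},\mathsf{Y}$ but whose $\epsilon\to 0$ limit is finite.

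The multicritical condition \eqref{eq:multicrit_cond} is by design the statement that $\Phi_\epsilon$ admits a saddle $z_*$ of order $p+1$; that is, $\Phi_\epsilon'(z_*) = \Phi_\epsilon''(z_*) = \cdots = \Phi_\epsilon^{(p)}(z_*) = 0$, with the first non-vanishing derivative $\Phi_\epsilon^{(p+1)}(z_*)$ normalised in terms of $\alpha_p$ according to \eqref{eq:alpha_beta}. Next I would localise both contours to a shrinking neighbourhood of $z_*$ and perform the substitutions $z = z_* + (\epsilon/\alpha_p)^{1/(p+1)}\zeta$ and $w = z_* + (\epsilon/\alpha_p)^{1/(p+1)}\omega$. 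Under these substitutions the exponent reduces to $\zeta^{p+1}/(p+1) - x\zeta$ (and the mirror expression in $\omega$) with an error of order $\epsilon^{1/(p+1)}$, while the measure factor $1/(z-w)$ contributes the prefactor $(\alpha_p/\epsilon)^{1/(p+1)}/(\zeta-\omega)$ that cancels the scaling factor on the left-hand side. Comparison with Definition~\ref{def:higher_Airy_kernel} then identifies the pointwise limit as $K_{p\text{-Airy}}(x,y)$.

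The main obstacle, and the genuinely new feature relative to the symmetric regime $\mathsf{X} = \mathsf{Y}$ treated in~\cite{Kimura:2020sud}, is to justify the contour deformation onto steepest-descent contours passing through $z_*$. In the symmetric case, the involution $z \mapsto 1/z$ pairs the $z$- and $w$-integrations and supplies a convenient symmetric descent path. When $\mathsf{X} \neq \mathsf{Y}$, the singular loci $z = 1/\mathsf{x}_i$ and $w = \mathsf{y}_j$ are generically distinct, and one must verify that both the $z$-contour and the $w$-contour can be simultaneously deformed to pass through $z_*$ along the rays $\arg(z-z_*) = (2k+1)\pi/(p+1)$ and $\arg(w-z_*) = 2k\pi/(p+1)$ respectively, while remaining disjoint outside an $O(\epsilon^{1/(p+1)})$ neighbourhood of $z_*$ and avoiding all poles of the integrand. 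Once this topological step is in place, uniform exponential decay of $\Re\Phi_\epsilon$ off $z_*$ combined with dominated convergence inside the neighbourhood yields the stated limit, uniformly on compact subsets of $(x,y) \in \mathbb{R}^2$.
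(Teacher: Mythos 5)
Your proposal is correct in outline, but it takes a genuinely different route from the paper. The paper does not touch the double contour integral at all in this proof: it starts from the summation formula $K(r,s) = \sum_{k=1}^\infty J(r+k-\tfrac{1}{2})\,\widetilde{J}(s+k-\tfrac{1}{2})$ of \eqref{eq:Schur_kernel_sum}, inserts the scaling variables so that each summand carries a factor $(\alpha_p/\epsilon)^{1/(p+1)}$ for each wave function and one factor $(\epsilon/\alpha_p)^{1/(p+1)}$ playing the role of the mesh, invokes the already-established scaling limits $J \to \phi$, $\widetilde{J} \to \psi$ (obtained in \S\ref{sec:scaling} from the difference/differential equations under the multicritical condition, with the particular solutions pinned down by matching the biorthonormality of Lemma~\ref{lemma:biorthonormal_wf} against Lemma~\ref{lem:biorthonormal}), and converts the sum over $k$ into the integral $\int_0^\infty \dd{z}\operatorname{Ai}_p(x+z)\widetilde{\operatorname{Ai}}_p(y+z)$ of Definition~\ref{def:higher_Airy_kernel}. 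Your steepest-descent analysis of the double integral instead localises at the degenerate saddle (which you should identify explicitly as $z_*=1$, where $\Phi'$ vanishes precisely because $\beta = \sum_n n(t_n + \tilde{t}_n)$ and the higher derivatives are the $\alpha_k$, so the multicritical condition is exactly the order-$(p+1)$ degeneracy you assert), and recovers the same integral form by expanding $1/(\zeta - \omega)$ as $\int_0^\infty e^{-u(\zeta-\omega)}\dd{u}$. What your route buys is that the descent contours automatically produce the specific contours $\gamma$ and $\tilde{\gamma}$ of \eqref{eq:Ai_contour}, i.e.\ it selects the correct solutions of the $p$-th order ODE without a separate normalisation argument; what it costs is precisely the contour-deformation step you flag, which is nontrivial when $\mathsf{X}\neq\mathsf{Y}$ and is left unexecuted in your sketch. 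The paper's route is more modular but equally non-rigorous at the analogous points (interchange of limit and infinite sum, and the heuristic saddle-point asymptotics of \S\ref{sec:Ai_asymp1}), a caveat the authors themselves state. Neither argument is complete at the level of a fully justified limit; yours at least names the missing lemma concretely.
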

The kernel $K_{p\text{-Airy}}$ is identified with the known higher Airy kernel for even $p$, while it gives the higher-order analog of the Pearcey kernel for odd $p$.
We emphasize that the higher Pearcey kernel appearing in the odd $p$ scaling limit is realized only when we relax the condition $\mathsf{X} = \mathsf{Y}$.

As in the case of the ordinary Airy kernel, the higher Airy kernel is defined as an integral of the bilinear form of the corresponding higher Airy functions, which can be rewritten in the following form.
\begin{proposition}[Chistoffel--Darboux-type formula, Proposition~\ref{prop:CD_formula}]
The $p$-Airy kernel is written as follows,
\begin{align}
    K_{p\text{-Airy}}(x,y) 
    & = \frac{1}{x - y} \sum_{k=1}^p (-1)^k \operatorname{Ai}_p^{(p-k)}(x) \widetilde{\operatorname{Ai}}_p^{(k-1)}(y)
    \, ,
%    \label{eq:CD_formula}
\end{align}
where $\operatorname{Ai}_p^{(k)}$ and $\widetilde{\operatorname{Ai}}_p^{(k)}$ are $k$-th derivatives of the $p$-Airy functions (Definition~\ref{eq:Airy_def}).
\end{proposition}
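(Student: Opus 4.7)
The plan is to mimic the classical derivation of the Christoffel--Darboux formula for the ordinary Airy kernel, adapted to the order-$p$ setting. The standard definition (to be given in Definition~\ref{def:higher_Airy_kernel}) should express $K_{p\text{-Airy}}$ as an integrated bilinear form,
\begin{equation*}
K_{p\text{-Airy}}(x,y) \;=\; \int_{0}^{\infty} \operatorname{Ai}_p(x+t)\, \widetilde{\operatorname{Ai}}_p(y+t)\, dt,
\end{equation*}
and the $p$-Airy functions should satisfy differential equations of the form $\operatorname{Ai}_p^{(p)}(u) = u\, \operatorname{Ai}_p(u)$ and $\widetilde{\operatorname{Ai}}_p^{(p)}(u) = (-1)^{p}\, u\, \widetilde{\operatorname{Ai}}_p(u)$. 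The whole argument will rest on these two inputs.

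First I would multiply the integral by $(x-y)$ and use the trivial identity $x-y = (x+t)-(y+t)$ to bring a factor of $(x+t)$ into the first Airy factor and $-(y+t)$ into the second. The differential equations then convert the linear factors into $p$-th derivatives, yielding
\begin{equation*}
(x-y)\, K_{p\text{-Airy}}(x,y) \;=\; \int_{0}^{\infty} \Bigl[ \operatorname{Ai}_p^{(p)}(x+t)\, \widetilde{\operatorname{Ai}}_p(y+t) \;-\; (-1)^{p}\, \operatorname{Ai}_p(x+t)\, \widetilde{\operatorname{Ai}}_p^{(p)}(y+t) \Bigr]\, dt.
\end{equation*}

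The main step is to recognize the integrand as a total derivative in $t$. Concretely, I would introduce
\begin{equation*}
\Phi(t) \;=\; \sum_{k=1}^{p} (-1)^{k+1}\, \operatorname{Ai}_p^{(p-k)}(x+t)\, \widetilde{\operatorname{Ai}}_p^{(k-1)}(y+t),
\end{equation*}
differentiate, and observe that after a telescoping shift of the summation index all interior terms cancel in pairs, leaving exactly the two boundary contributions $\operatorname{Ai}_p^{(p)}(x+t)\,\widetilde{\operatorname{Ai}}_p(y+t)$ (from $k=1$) and $-(-1)^{p}\operatorname{Ai}_p(x+t)\,\widetilde{\operatorname{Ai}}_p^{(p)}(y+t)$ (from $k=p$, after the shift). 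Thus $\Phi'(t)$ equals the bracketed integrand above.

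Finally I would integrate from $0$ to $\infty$. The decay of $\operatorname{Ai}_p$ and $\widetilde{\operatorname{Ai}}_p$ and their derivatives at $+\infty$ (standard from their contour-integral representations) kills $\Phi(\infty)$, so the integral reduces to $-\Phi(0)$, which upon relabeling $k \mapsto k$ gives
\begin{equation*}
(x-y)\, K_{p\text{-Airy}}(x,y) \;=\; \sum_{k=1}^{p} (-1)^{k}\, \operatorname{Ai}_p^{(p-k)}(x)\, \widetilde{\operatorname{Ai}}_p^{(k-1)}(y),
\end{equation*}
as claimed. I expect the main obstacle to be fixing the signs coming from the parity of $p$: the sign $(-1)^{p}$ in $\widetilde{\operatorname{Ai}}_p^{(p)} = (-1)^{p}\, u\, \widetilde{\operatorname{Ai}}_p$ is exactly what is needed to make the coefficient of $t$ in the linear combination cancel and leave behind the clean factor $(x-y)$; checking this with the correct contour conventions for $p$ odd, where $\operatorname{Ai}_p$ and $\widetilde{\operatorname{Ai}}_p$ are genuinely distinct functions, is the delicate point. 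The vanishing of the boundary term at $+\infty$ for every derivative order also has to be checked from the steepest-descent analysis of the defining contour integrals, but this is standard.
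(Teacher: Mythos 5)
Your telescoping computation is the standard Christoffel--Darboux argument, and it does work for even $p$ --- which is the case the paper simply cites from the literature. The genuine gap is in the odd-$p$ case, the new content of the proposition. You justify dropping the boundary term $\Phi(\infty)$ by appealing to ``the decay of $\operatorname{Ai}_p$ and $\widetilde{\operatorname{Ai}}_p$ and their derivatives at $+\infty$'', but for odd $p$ the function $\widetilde{\operatorname{Ai}}_p$ does \emph{not} decay: the saddle-point analysis in \S\ref{sec:Ai_asymp2} shows it grows like $\exp\left(\tfrac{p}{p+1}\sin(\tfrac{\pi}{2p})\,z^{1+1/p}\right)$, exactly cancelling the decay rate of $\operatorname{Ai}_p$. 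Each term $\operatorname{Ai}_p^{(p-k)}(x+t)\,\widetilde{\operatorname{Ai}}_p^{(k-1)}(y+t)$ therefore behaves for large $t$ like an $O(1)$ oscillation times $\exp\left(-c\,(x-y)\,t^{1/p}\right)$ with $c>0$: it is bounded only for $x\ge y$, blows up for $x<y$, and even where bounded it has no limit term by term, so the vanishing of $\Phi(\infty)$ would have to come from a cancellation across the whole alternating sum (this is essentially the separate identity \eqref{eq:diagonal_id}) rather than from decay. Relatedly, for odd $p$ the integral \eqref{eq:Airy_kernel_int} is only conditionally convergent and the kernel is really defined by the symmetrized expression \eqref{eq:Airy_kernel_int2}, so already the step of inserting the factor $(x+t)-(y+t)$ under the integral sign needs justification. (A minor bookkeeping point: the differential equations in Definition~\ref{def:Airy_def} carry $n$-dependent signs, $\operatorname{Ai}_p^{(p)}=(-1)^n z\operatorname{Ai}_p$ and $\widetilde{\operatorname{Ai}}_p^{(p)}=(-1)^{n-1}z\widetilde{\operatorname{Ai}}_p$, not the uniform signs you assume; only the relative sign matters for the telescoping, but the overall factor must be tracked.)

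The paper's proof is built precisely to avoid this boundary-at-infinity problem. It first establishes the special value \eqref{eq:xK(x,0)} by integrating by parts with the symmetrized combination $\mathscr{A}(z)=\tfrac12\left(\operatorname{Ai}_p(z+x)-\operatorname{Ai}_p(z-x)\right)$, using parity at $z=0$ (where $\widetilde{\operatorname{Ai}}_p^{(2k)}(0)=0$) to kill half the boundary terms. It then observes that both sides of \eqref{eq:CD_formula} satisfy $\frac{d}{dz}K(x+z,y+z)=-\operatorname{Ai}_p(x+z)\widetilde{\operatorname{Ai}}_p(y+z)$ and $(\partial_x+\partial_y)K=-\operatorname{Ai}_p(x)\widetilde{\operatorname{Ai}}_p(y)$, so they can differ only by a function of $x-y$, which the $y=0$ identity forces to vanish. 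To repair your argument you would need to run the telescoping on the symmetrized kernel and control the $t\to\infty$ limit of the full sum rather than of individual terms, at which point you have essentially reconstructed the paper's proof.
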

This formula reproduces the known result for even $p$~\cite{LeDoussal:2018dls}, and generalizes the result for $p=3, 5$ presented in~\cite{Brezin:1998zz,Brezin:1998PREb}.
See Example~\ref{ex:Airy_kernels}.

In addition, we apply the saddle point approximation to study the asymptotic behavior of the higher Airy functions.
Although the saddle point analysis presented here is not rigorous, rather heuristic, it provides an efficient way to discuss the asymptotic behavior.
As an application of this analysis, we obtain the asymptotic behavior of the scaled density function, which is given by the diagonal value of the $p$-Airy kernel (Proposition~\ref{prop:density_fn_asymptotic}).

Once given such a kernel, one can formulate the probability such that no ``particle'' is found in the interval $I \subset \mathbb{R}$, which is called the gap probability, using the Fredholm determinant associated with the kernel.
We study the gap probability based on the higher Pearcey kernel, and obtain the underlying Hamiltonian system (Proposition~\ref{prop:Ham_sys}) similarly to the Fredholm determinant associated with other kernels, i.e., the Fredholm determinant behaves as the isomonodromic $\tau$-function.
We in particular consider the so-called level spacing distribution associated with the interval $I = [-s,s]$.
In this case, we obtain the coupled nonlinear differential equations (Proposition~\ref{prop:nlin_eqs}).
We also obtain the large gap asymptotics of the level spacing distribution.
\begin{proposition}[Proposition~\ref{prop:large_gap}]
 Let $F(s)$ be the Fredholm determinant defined by the $p$-Airy kernel with the interval $I = [-s,s]$, which provides the gap probability for the determinantal point process associated with the corresponding kernel.
 We have the following large gap behavior,
 \begin{align}
    F(s) \ \xrightarrow{s \to \infty} \ \exp \qty( - C_p s^{\frac{2}{p}+2} )
 \end{align}
 with a $p$-dependent positive constant $C_p$.
\end{proposition}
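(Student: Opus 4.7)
The plan is to exploit the isomonodromic Hamiltonian structure of $F(s)$ established in Proposition~\ref{prop:Ham_sys} together with the coupled nonlinear equations of Proposition~\ref{prop:nlin_eqs}. Since $F(s)$ behaves as a $\tau$-function, we have $\frac{d}{ds}\log F(s) = -H(s)$, where $H(s)$ is the Hamiltonian expressed through the conjugate dynamical variables $(q_j(s),p_j(s))_{j=1}^{p}$ obeying the coupled system. Extracting the large-$s$ asymptotics of $F$ thus reduces to determining the leading growth of $H(s)$.

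Next I would substitute the power-law ansatz $q_j(s) \sim a_j s^{\alpha_j}$, $p_j(s) \sim b_j s^{\beta_j}$ into the nonlinear system and fix the exponents $(\alpha_j,\beta_j)$ by dominant balance. The matching data is supplied by the saddle-point asymptotics behind Proposition~\ref{prop:density_fn_asymptotic}: the higher Airy functions $\operatorname{Ai}_p^{(k)}$ and their duals decay as $\exp(-\kappa_p s^{(p+1)/p})$ with algebraic prefactors, and the diagonal of the $p$-Airy kernel grows as $s^{1/p}$ on the side where the density is supported. These constraints determine the scaling exponents uniquely and yield, after substitution in the Hamiltonian,
\begin{align}
H(s) \ \sim \ \tilde C_p \, s^{\frac{2}{p}+1} \qquad (s \to \infty),
\end{align}
for some positive constant $\tilde C_p$. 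Integration in $s$ then produces the claimed $s^{2/p+2}$ behavior with
\begin{align}
C_p \ = \ \frac{\tilde C_p}{\tfrac{2}{p}+2} \ > \ 0,
\end{align}
the positivity being automatic since $F$ decreases monotonically from $1$ toward $0$ as $s\to\infty$.

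The main obstacle is justifying the monomial ansatz, i.e., excluding oscillatory, logarithmic, or subexponential corrections that might alter the leading exponent. A fully rigorous treatment would recast the coupled nonlinear system as a Riemann--Hilbert problem for a $p$-Airy parametrix and perform Deift--Zhou steepest-descent on its $s$-dependent jumps, in parallel with the analysis of the Hastings--McLeod solution of Painlev\'e II underlying the classical Tracy--Widom tail. Within the present heuristic framework the ansatz is self-consistent, and recovery of the known exponent $3 = \tfrac{2}{2}+2$ at $p=2$ serves as a useful consistency check.
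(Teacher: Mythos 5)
Your overall strategy (reduce the tail of $F$ to the growth of the Hamiltonian and integrate) is the right frame, but the proof as written has a genuine gap at its central step. You assert that substituting a power-law ansatz into the coupled nonlinear system of Proposition~\ref{prop:nlin_eqs} "determines the scaling exponents uniquely" and yields $H(s)\sim \tilde C_p\,s^{2/p+1}$, but this dominant-balance computation is never actually performed, and it is precisely where all the content lies. Worse, the matching data you invoke is wrong in the regime relevant to $I=[-s,s]$: for odd $p$ (the higher Pearcey case, which is the setting of \S\ref{sec:level_spacing}), the dual function $\widetilde{\operatorname{Ai}}_p$ does \emph{not} decay like $\exp(-\kappa_p s^{(p+1)/p})$ — it \emph{grows} exponentially (see \eqref{eq:Ai_asymp2}); only the product $\operatorname{Ai}_p\widetilde{\operatorname{Ai}}_p$ is algebraic, of order $|s|^{-1+1/p}$ times an oscillatory factor. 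A dominant balance seeded with exponentially decaying boundary data would not reproduce the algebraic growth $u_1+v_1=O(s^{2/p})$ that actually drives the tail.

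The paper avoids the nonlinear system entirely. It works with the \emph{second} logarithmic derivative, $\frac{d^2}{ds^2}\log F = -2H' = u_1+v_1-4R^2$ (Lemma~\ref{lem:Hamiltonian_derivative} plus the integral of motion $\mathsf{I}_1$), and then estimates $u_1+v_1=O(s^{2/p})$ and $R=R(-s,s)=O(s^{1/p})$ directly from the Neumann series of the resolvent, the Christoffel--Darboux formula, the kernel asymptotics $K(s,s)=O(s^{1/p})$, $K(-s,s)=O(s^{-1})$, and the projectivity of the kernel (Lemma~\ref{lem:H_der_asymp}); two integrations then give the exponent $\frac{2}{p}+2$. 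If you want to salvage your route, you would need to (i) actually carry out the balance in the coupled system with the correct boundary behavior of $(\underline{u},\underline{v})$, which the paper's Lemma~\ref{lem:uv_der} ties to products $\sfq_k\sfp_l$, and (ii) rule out the oscillatory corrections you mention, which in the odd-$p$ case are not a technicality but the dominant feature of the individual wave functions. Your consistency check at $p=2$ is fine but does not test the odd-$p$ mechanism, which is the new case here.
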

This behavior is consistent with the Forrester--Chen--Eriksen--Tracy conjecture~\cite{Forrester:1993vtx,Chen:1995uy}, relating the local behavior of the density function to the large gap behavior of the gap probability.

\subsection*{Organization of the paper}

The remaining part of this paper is organized as follows.
In \S\ref{sec:Schur_measure}, we show preliminary properties on the Schur measure. 
We in particular discuss that the correlation function is systematically obtained using the associated kernel.
In \S\ref{sec:scaling}, we analyze the wave function, which is a building block of the kernel, and show that it is asymptotic to the higher $p$-Airy function in the scaling limit for both even and odd $p$.
We study the asymptotic behavior of the higher Airy function in \S\ref{sec:higher_Airy_fn}, and show in \S\ref{sec:higher_Airy_kernel} that the kernel associated with the Schur measure is asymptotic to the higher Airy and Pearcey kernels.
In \S\ref{sec:gap_prob}, we explore the gap probability based on the higher Airy and Pearcey kernels.
After establishing the operator formalism in \S\ref{sec:op_formalism}, in \S\ref{sec:Fredholm_det} we study the Fredholm determinant in details, which yields the Hamiltonian system and also the Schlesinger equations.
In \S\ref{sec:level_spacing}, we consider the level spacing distribution, which is a specific case of the gap probability.
For this case, we obtain the coupled nonlinear differential equations for the auxiliary wave functions in \S\ref{sec:nlin_eqs}.
We then explore the asymptotic limit, which is called the large gap limit, of the level spacing distribution in \S\ref{sec:large_gap} and obtain the consistent result with the Forrester--Chen--Eriksen--Tracy conjecture.
In \S\ref{sec:single_H}, we analyze the Fredholm determinant with the interval $I = [s,\infty)$ (single Hamiltonian case) for the $p$-Airy kernel.

\subsection*{Acknowledgements}
We would like to thank Mattia Cafasso for communication and interest.
We are also grateful to the anonymous referee for constructive comments on the manuscript.
This work was supported by ``Investissements d'Avenir'' program, Project ISITE-BFC (No.~ANR-15-IDEX-0003), EIPHI Graduate School (No.~ANR-17-EURE-0002), and Bourgogne-Franche-Comté region.

\section{Schur measure}\label{sec:Schur_measure}

%\paragraph{Definition}

We consider the random distribution of partitions with the Schur measure defined as follows~\cite{Okounkov:2001SM}.

\begin{definition}[Schur measure]
We define the Schur measure on the set of partitions $\mathscr{Y}$,
\begin{align}
    \mu(\lambda) = \frac{1}{Z(\mathsf{X},\mathsf{Y})} \, s_\lambda(\mathsf{X}) s_\lambda(\mathsf{Y})
    \, , \qquad
 \lambda \in \mathscr{Y}
 \, ,
\end{align}
where $s_\lambda(\cdot)$ is the Schur function. 
We denote sets of the (infinitely many) parameters by $\mathsf{X} = (\mathsf{x}_i)_{i \in \mathbb{N}}$ and $\mathsf{Y} = (\mathsf{y}_i)_{i \in \mathbb{N}}$, and $Z(\mathsf{X},\mathsf{Y})$ is the normalization constant to be specified below.
\end{definition}

In this paper, we assume that all the parameters are real.
We also use another parametrization with the Miwa variables, 
\begin{align}
    t_n = \frac{1}{n} \sum_{i=1}^\infty \mathsf{x}_i^n
    \, , \qquad
    \tilde{t}_n = \frac{1}{n} \sum_{i=1}^\infty \mathsf{y}_i^n
    \, .
    \label{eq:Miwa_var}
\end{align}
These series are absolutely convergent if all $|\mathsf{x}_i|$, $|\mathsf{y}_i| < 1$.
The constant $Z(\mathsf{X},\mathsf{Y})$ is the partition function imposing the normalization condition, $\displaystyle \sum_{\lambda \in \mathscr{Y}} \mu(\lambda) = 1$, which is obtained via the Cauchy sum formula (assuming all $|\mathsf{x}_i|$, $|\mathsf{y}_i| < 1$),
\begin{align}
    Z(\mathsf{X},\mathsf{Y}) 
    = \sum_{\lambda \in \mathscr{Y}} s_\lambda(\mathsf{X}) s_\lambda(\mathsf{Y}) 
    = \prod_{1 \le i, j \le \infty} \qty( 1 - \mathsf{x}_i \mathsf{y}_j )^{-1} 
    = \exp \qty( \sum_{n = 1}^\infty n \, t_n \tilde{t}_n )
    \, .
    \label{eq:Schur_part_fn}
\end{align}

%\paragraph{Correlation function}

We then define correlation functions for the random partition as follows.
\begin{definition}[Correlation function]
The $k$-point correlation function associated with the Schur measure is defined with a set $W = (w_i)_{i = 1,\ldots,k} \subset \mathbb{Z} + \frac{1}{2}$ as follows,
\begin{align}
    \rho_k(W) 
    = \expval{ \prod_{w \in W} \delta_{w}({X}(\lambda)) }
    = \sum_{\lambda \in \mathscr{Y}} \mu(\lambda) \prod_{w \in W} \delta_{w}({X}(\lambda))
  \label{eq:rho_fn}
\end{align}
with the ``density function'' 
\begin{align}
    \delta_w (\mathscr{X}) =
    \begin{cases}
    1 & (w \in \mathscr{X}) \\ 0 & (w \not\in \mathscr{X})
    \end{cases}
\end{align}
and the boson-fermion map,
\begin{align}
    {X}(\lambda) = 
    \qty( x_i = \lambda_i - i + \frac{1}{2} )_{i \in \mathbb{N}}
    \subset \mathbb{Z} + \frac{1}{2}
    \, .
    \label{eq:Maya}
\end{align}
\end{definition}

%\subsection*{Wave functions}

In order to discuss the correlation functions of the random partitions, we introduce two distinct wave functions as follows.
\begin{definition}[Wave functions]
We define two distinct wave functions,
\begin{subequations}\label{eq:J_wf}
\begin{align}
    \mathscr{J}(z) & 
    % = \prod_{n = 1}^\infty \frac{1 - \mathsf{x}_n z}{1 - \mathsf{y}_n / z }    
    = \exp \qty( \sum_{n=1}^\infty \qty(t_n z^n - \tilde{t}_n z^{-n}) )
    = \sum_{n \in \mathbb{Z}} J(n) \, z^n
    \, , \qquad
    J(n) = \oint \frac{\dd{z}}{2 \pi \ii} \frac{\mathscr{J}(z)}{z^{n+1}} 
    \, , \\
    \widetilde{\mathscr{J}}(z) & 
    % = \prod_{n = 1}^\infty \frac{1 - \mathsf{y}_n z}{1 - \mathsf{x}_n / z }   
    = \exp \qty( \sum_{n=1}^\infty \qty(\tilde{t}_n z^n - {t}_n z^{-n}) )
    = \sum_{n \in \mathbb{Z}} \widetilde{J}(n) \, z^n
    \, , \qquad
    \widetilde{J}(n) = \oint \frac{\dd{z}}{2 \pi \ii} \frac{\widetilde{\mathscr{J}}(z)}{z^{n+1}}
    \, ,
\end{align}
\end{subequations}
which are biorthonormal,
\begin{align}
    \sum_{k \in \mathbb{Z}} J(n+k) \widetilde{J}(m+k) = \delta_{n,m}
    \, ,
    \label{eq:biorthonormal_J}
\end{align}
and related to each other as $\widetilde{\mathscr{J}}(z)^{-1} = \mathscr{J}(z^{-1})$.
\end{definition}
\begin{remark}
Imposing the relations $t_n = \pm \tilde{t}_n$ for all $n \in \mathbb{N}$, we have the following relations between the wave functions,
%\begin{subequations}
\begin{align}
    \mathscr{J}(z) 
    =
    \begin{cases}
    \widetilde{\mathscr{J}}(z)
    & (t_n = +\tilde{t}_n) \\
    \widetilde{\mathscr{J}}(z)^{-1}
    & (t_n = -\tilde{t}_n)
    \end{cases}
\end{align}
%\end{subequations}
\end{remark}

Then, one can describe the determinantal formula for the $k$-point correlation function defined in~\eqref{eq:rho_fn}.
\begin{proposition}[Determinantal formula~\cite{Okounkov:2001SM}]
The $k$-point correlation function of the random partitions with the Schur measure is given by a size $k$ determinant constructed from the kernel,
\begin{align}
    \rho_k(W) = \det_{1 \le i, j \le k} K(w_i, w_j)
    \, ,
    \label{eq:correlation_det_formula}
\end{align}
where the kernel associated with the Schur measure is given by
\begin{subequations}\label{eq:Schur_kernel}
\begin{align}
    K(r,s) & = 
    \frac{1}{(2 \pi \ii)^2}
    \oint_{|z|>|w|} \hspace{-1.5em} \dd{z} \dd{w} \,
%    \oint \frac{\dd{z}}{2 \pi \ii} \oint \frac{\dd{w}}{2 \pi \ii} \, 
    \frac{\mathscr{K}(z,w)}{z^{r + 1/2} w^{- s + 1/2}}
    \qquad \qty(r, s \in \mathbb{Z} + \frac{1}{2})
    \, , \\
    \mathscr{K}(z,w) & 
    = \frac{\mathscr{J}(z)}{\mathscr{J}(w)} \, \frac{1}{z-w}
    = \frac{\mathscr{J}(z) \widetilde{\mathscr{J}}(w^{-1})}{z - w}
    \, .
\end{align}
\end{subequations}
\end{proposition}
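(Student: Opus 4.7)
The plan is to establish the determinantal formula via the boson–fermion correspondence, which turns the Schur measure into a vacuum expectation value in the fermionic Fock space so that Wick's theorem applies. All the data introduced before the statement (the wave functions $\mathscr{J}, \widetilde{\mathscr{J}}$ and the Miwa variables $t_n, \tilde t_n$) are the natural generating functions for the objects that appear in the fermionic picture.

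The first step is to realize $|\lambda\rangle$ as a state in the charge-zero sector labelled by the Maya diagram $X(\lambda)$ from \eqref{eq:Maya}, and to recall that $s_\lambda(\mathsf{X}) = \langle \lambda | \Gamma_+(t) | 0 \rangle$, $s_\lambda(\mathsf{Y}) = \langle 0 | \Gamma_-(\tilde t) | \lambda \rangle$ for the vertex operators $\Gamma_\pm(t) = \exp(\sum_n t_n \alpha_{\pm n})$. Summing over $\lambda$ with $\sum_\lambda |\lambda\rangle\langle\lambda| = P_0$ and using $\delta_w(X(\lambda))\,|\lambda\rangle = \psi^*_w\psi_w|\lambda\rangle$, the correlation function \eqref{eq:rho_fn} takes the form
\begin{align*}
    \rho_k(W) = \frac{1}{Z(\mathsf{X},\mathsf{Y})} \, \langle 0 | \, \Gamma_-(\tilde t) \prod_{i=1}^k \psi^*_{w_i}\psi_{w_i} \, \Gamma_+(t) \, | 0 \rangle,
\end{align*}
while $Z(\mathsf{X},\mathsf{Y})$ is exactly $\langle 0 | \Gamma_-(\tilde t) \Gamma_+(t) | 0 \rangle$, reproducing the Cauchy product \eqref{eq:Schur_part_fn} via the commutation $\Gamma_-(\tilde t)\Gamma_+(t) = e^{\sum n t_n \tilde t_n}\Gamma_+(t)\Gamma_-(\tilde t)$.

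Next, I would commute the vertex operators through the string of bilinears using the standard dressing rules
\begin{align*}
    \Gamma_+(t)\,\psi(z)\,\Gamma_+(t)^{-1} = e^{\sum_n t_n z^n}\,\psi(z), \qquad \Gamma_-(\tilde t)\,\psi(z)\,\Gamma_-(\tilde t)^{-1} = e^{-\sum_n \tilde t_n z^{-n}}\,\psi(z),
\end{align*}
together with the analogous identities for $\psi^*(z)$. After the $\Gamma_\pm$'s have been pushed onto the vacua they annihilate, the factor $Z(\mathsf{X},\mathsf{Y})$ cancels, and one is left with a free-fermion vacuum expectation of $k$ bilinears built from the dressed fields whose Fourier kernels are precisely $\mathscr{J}(z)$ and $\widetilde{\mathscr{J}}(w^{-1})$ (using $\widetilde{\mathscr{J}}(z)^{-1} = \mathscr{J}(z^{-1})$). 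Wick's theorem for free fermions then gives $\rho_k(W) = \det_{1 \le i,j \le k} K(w_i,w_j)$, with $K(r,s)$ the two-point function of the dressed operators. The last step is to rewrite this mode sum as a double contour integral: extracting $\psi_r, \psi^*_s$ from $\psi(z), \psi^*(w)$ and using the canonical propagator $\langle 0|\psi(z)\psi^*(w)|0\rangle = 1/(z-w)$ in its $|z|>|w|$ expansion yields exactly \eqref{eq:Schur_kernel}.

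The main obstacle is bookkeeping rather than depth: one must fix conventions for the half-integer modes of $\psi,\psi^*$, track the signs from commuting the $k$ fermion bilinears past the vertex operators, and verify that the contour prescription $|z|>|w|$ corresponds to the correct normal ordering (so that the kernel $K$ satisfies the biorthonormality \eqref{eq:biorthonormal_J} as a projection rather than its complement). A clean way to sidestep sign traps is to first compute the two-point function in modes directly, obtaining $K(r,s) = \sum_{k\ge 0} J(r+\tfrac12+k)\widetilde J(-s+\tfrac12+k)$ from \eqref{eq:J_wf}, and only then resum to the double contour integral, which makes the $|z|>|w|$ prescription manifest.
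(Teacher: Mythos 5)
The paper does not actually prove this proposition: it is quoted from the cited reference \cite{Okounkov:2001SM}, and only the subsequent Lemma (the mode-sum form \eqref{eq:Schur_kernel_sum} of $K$) is derived in the text. Your strategy --- boson--fermion correspondence, dressing the fermion fields by the vertex operators $\Gamma_\pm$, and Wick's theorem --- is exactly the standard argument of the cited source, so the approach is the right one. However, two of your displayed formulas are wrong as written, and not merely up to convention. First, with your own definition $\Gamma_\pm(t)=\exp(\sum_n t_n\alpha_{\pm n})$ one has $\Gamma_+(t)\ket{0}=\ket{0}$ and $\bra{0}\Gamma_-(\tilde t)=\bra{0}$, so your expression $\bra{0}\Gamma_-(\tilde t)\prod_i\psi^*_{w_i}\psi_{w_i}\Gamma_+(t)\ket{0}$ collapses to a $t$-independent vacuum expectation value and cannot reproduce $\rho_k(W)$; the operators must be arranged as $\bra{0}\Gamma_+(t)\,(\cdots)\,\Gamma_-(\tilde t)\ket{0}$, equivalently $s_\lambda(\mathsf{X})=\bra{\lambda}\Gamma_-(t)\ket{0}=\bra{0}\Gamma_+(t)\ket{\lambda}$ rather than $\bra{\lambda}\Gamma_+(t)\ket{0}$. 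Second, your final mode sum $K(r,s)=\sum_{k\ge 0}J(r+\tfrac12+k)\widetilde J(-s+\tfrac12+k)$ disagrees with the paper's Lemma, which gives $K(r,s)=\sum_{k\ge 1}J(r+k-\tfrac12)\widetilde J(s+k-\tfrac12)$: the sign of $s$ in the second argument is wrong, which traces back to expanding $\widetilde{\mathscr J}(w)$ where the kernel requires $\widetilde{\mathscr J}(w^{-1})=\sum_m\widetilde J(m)w^{-m}$. Both slips are precisely the bookkeeping traps you flagged in advance, but since the whole content of the exercise is to land on \eqref{eq:Schur_kernel} with the correct $|z|>|w|$ prescription and the correct two-point kernel, they must be repaired before the argument closes; the conceptual skeleton (Cauchy identity for $Z$, insertion of $\prod_w\psi_w\psi_w^*$, cancellation of $Z$ after commuting the vertex operators, Wick determinant) is otherwise complete and matches the proof in the cited reference.
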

\begin{lemma}
From the formula~\eqref{eq:Schur_kernel}, we obtain the expression of the kernel using the wave functions,
\begin{align}
    K(r,s) = \sum_{k=1}^\infty J\qty(r+k-\frac{1}{2}) \widetilde{J}\qty(s+k-\frac{1}{2})
    \, .
    \label{eq:Schur_kernel_sum}
\end{align}
\end{lemma}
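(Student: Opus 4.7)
The plan is to expand the Cauchy kernel $\frac{1}{z-w}$ as a geometric series using the contour condition $|z|>|w|$, and then recognise the resulting single contour integrals as the Laurent coefficients $J$ and $\widetilde{J}$ defined in~\eqref{eq:J_wf}.

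More explicitly, on the region of integration one has the absolutely convergent expansion
\begin{align}
\frac{1}{z-w} = \frac{1}{z} \sum_{k=0}^\infty \qty(\frac{w}{z})^k
\, .
\end{align}
Substituting this into the double integral representation of $K(r,s)$ and interchanging the sum with the contour integrals (justified by uniform convergence on the compact contours), the integrand factorises and gives
\begin{align}
K(r,s) = \sum_{k=0}^\infty \qty[\oint \frac{\dd{z}}{2\pi\ii} \frac{\mathscr{J}(z)}{z^{r+k+3/2}}] \qty[\oint \frac{\dd{w}}{2\pi\ii} \widetilde{\mathscr{J}}(w^{-1}) \, w^{s+k-1/2}]
\, .
\end{align}

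The first bracket is, by definition of $J(n)$ in~\eqref{eq:J_wf} with $n = r+k+\frac{1}{2}$, equal to $J\!\left(r+k+\frac{1}{2}\right)$. For the second bracket, I would insert the Laurent expansion $\widetilde{\mathscr{J}}(w^{-1}) = \sum_{m \in \mathbb{Z}} \widetilde{J}(m) w^{-m}$ and extract the coefficient of $w^{-1}$, which corresponds to $m = s+k+\frac{1}{2}$, giving $\widetilde{J}\!\left(s+k+\frac{1}{2}\right)$. Thus
\begin{align}
K(r,s) = \sum_{k=0}^\infty J\!\left(r+k+\tfrac{1}{2}\right) \widetilde{J}\!\left(s+k+\tfrac{1}{2}\right)
\, ,
\end{align}
and relabelling $k \mapsto k-1$ yields the claim.

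The only delicate point is justifying the interchange of the infinite sum and the two contour integrals; this is where I expect any subtlety to sit. Assuming the parameters $\mathsf{X},\mathsf{Y}$ lie in the convergence region $|\mathsf{x}_i|,|\mathsf{y}_i|<1$, the wave functions $\mathscr{J}$ and $\widetilde{\mathscr{J}}(w^{-1})$ are holomorphic on suitable annuli, and by choosing nested circular contours $|w|=\rho_w < \rho_z = |z|$ one has the uniform bound $|w/z| = \rho_w/\rho_z < 1$ on the contours, so the geometric series converges uniformly and the Fubini/interchange step is legitimate. The remaining manipulations are purely algebraic and require no further convergence input.
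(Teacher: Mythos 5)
Your proof is correct and follows essentially the same route as the paper: expand $\tfrac{1}{z-w}$ as a geometric series on $|z|>|w|$, interchange sum and contour integrals, and read off the Laurent coefficients $J$ and $\widetilde{J}$ (the paper merely expands $\mathscr{J}$ and $\widetilde{\mathscr{J}}$ into their Laurent series before doing the residue computation, which is a cosmetic difference). Your added remarks on uniform convergence and the holomorphy annuli justify the interchange step that the paper leaves implicit.
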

\begin{proof}
We first expand $(\mathscr{J}(z),\widetilde{\mathscr{J}}(w))$ with $(J(n),\widetilde{J}(m))$ in $\mathscr{K}(z,w)$ based on the definition~\eqref{eq:J_wf},
\begin{align}
    \mathscr{K}(z,w) = \sum_{n,m \in \mathbb{Z}} \frac{z^n w^{-m}}{z-w} J(n) \widetilde{J}(m)
    \, .
\end{align}
Then, substituting this expression and expanding the geometric series, we obtain
\begin{align}
    K(r,s) & = \frac{1}{(2 \pi \ii)^2} \oint _{|z|>|w|} \hspace{-1.5em} \dd{z} \dd{w} \, \sum_{n,m \in \mathbb{Z}} \sum_{k = 0}^\infty z^{n-r-\frac{1}{2}-k-1} w^{-m+s-\frac{1}{2}+k} J(n) \widetilde{J}(m)
    \nonumber \\
    & = \sum_{k=1}^\infty J\qty(r+k-\frac{1}{2}) \widetilde{J}\qty(s+k-\frac{1}{2})
    \, .
\end{align}
This completes the proof.
\end{proof}
\begin{remark}
From this expression~\eqref{eq:Schur_kernel_sum} together with the biorthonormal condition~\eqref{eq:biorthonormal_J}, we see that the kernel is projective,
\begin{align}
    \sum_{t \in \mathbb{Z}+\frac{1}{2}} K(r,t) K(t,s) = K(r,s)
    \, .
\end{align}
\end{remark}

\section{Scaling limit}\label{sec:scaling}

In this Section, we study the differential/difference equations for the wave functions in the scaling limit.
From this analysis, we see that the wave functions are asymptotic to the higher-order Airy functions, and the corresponding kernel is given by higher analogs of the Airy and Pearcey kernels.

\subsection{Wave functions}

From the expressions \eqref{eq:J_wf}, we see that the wave functions obey the differential/difference equations,\\
\begin{subequations}\label{eq:J_ODE}
\begin{minipage}{.48\textwidth}
\begin{align}
    \qty[ \sum_{n=1}^\infty n \qty( t_n z^n + \tilde{t}_n z^{-n} ) - z \pdv{}{z}] \mathscr{J}(z) & = 0
    \, , \\
    \qty[ \sum_{n=1}^\infty n \qty( t_n \nabla_x^n + \tilde{t}_n \nabla_x^{-n} ) - x] J(x) & = 0    
    \, ,
\end{align}
\end{minipage}
\begin{minipage}{.48\textwidth}
\begin{align}
    \qty[ \sum_{n=1}^\infty n \qty( \tilde{t}_n z^n + {t}_n z^{-n} ) - z \pdv{}{z}] \widetilde{\mathscr{J}}(z) & = 0    
    \, , \\
    \qty[ \sum_{n=1}^\infty n \qty( \tilde{t}_n \nabla_x^n + {t}_n \nabla_x^{-n} ) - x] \widetilde{J}(x) & = 0     
    \, ,
\end{align}
\end{minipage}\\[1em]
\end{subequations}
where we define the shift operator
\begin{align}
    \nabla_x = \exp \qty(\dv{}{x})
    \, , \qquad
    \nabla_x f(x) = f(x+1)
    \, .
\end{align}
We rescale the variables $(x,t_n,\tilde{t}_n) \to (x/\epsilon,t_n/\epsilon,\tilde{t}_n/\epsilon)$ to take the scaling limit.
Then, the differential/difference equations are written as follows,
%\begin{subequations}
\begin{align}
    \qty[ \sum_{k=1}^\infty \alpha_k \epsilon^k \dv[k]{}{x} - (x - \beta) ] J\qty(\frac{x}{\epsilon}) = 0
    \, , \qquad
    \qty[ \sum_{k=1}^\infty \tilde{\alpha}_k \epsilon^k \dv[k]{}{x} - (x - \beta) ] \widetilde{J}\qty(\frac{x}{\epsilon}) = 0 
\end{align}
%\end{subequations}
where we define the parameters,
\begin{align}
    \alpha_k = \sum_{n=1}^\infty \frac{n^{k+1}}{k!} \, \qty( t_n + (-1)^k \tilde{t}_n )
    \, , \qquad
    \tilde\alpha_k = (-1)^k \alpha_k
    \, , \qquad
    \beta = \alpha_0 = \sum_{n=1}^\infty n \, (t_n + \tilde{t}_n)
    \, .
    \label{eq:alpha_beta}
\end{align}

We introduce the scaling variable
\begin{align}
    x = \beta + \alpha_p^{\frac{1}{p+1}} \epsilon^{\frac{p}{p+1}} \xi
    \, .
    \label{eq:scaling_var}
\end{align}
Then, we define the scaling limit of the wave functions,
\begin{align}
    \phi(\xi) := \lim_{\epsilon \to 0 } \qty(\frac{\alpha_p}{\epsilon})^{\frac{1}{p+1}} J\qty(\frac{\beta}{\epsilon} + \qty(\frac{\alpha_p}{\epsilon})^{\frac{1}{p+1}} \xi) 
    \, , \quad
    \psi(\xi) := \lim_{\epsilon \to 0 } \qty(\frac{\alpha_p}{\epsilon})^{\frac{1}{p+1}} \widetilde{J}\qty(\frac{\beta}{\epsilon} + \qty(\frac{\alpha_p}{\epsilon})^{\frac{1}{p+1}} \xi)
    \, .
\end{align}
\begin{lemma}\label{lemma:biorthonormal_wf}
 The scaled wave functions are formally biorthonormal,
 \begin{align}
  \int_{-\infty}^{+\infty} \dd{z} \phi(x + z) \psi(y + z) = \delta(x - y)
  \, .
  \label{eq:biorthonormal_wf}
 \end{align}
\end{lemma}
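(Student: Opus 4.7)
The plan is to derive the continuum relation~\eqref{eq:biorthonormal_wf} as the scaling limit of the discrete biorthonormality~\eqref{eq:biorthonormal_J}. Setting $a := (\alpha_p/\epsilon)^{1/(p+1)}$ and $n := \beta/\epsilon + a x$, $m := \beta/\epsilon + a y$ with $x,y$ fixed, I would rescale the summation index in~\eqref{eq:biorthonormal_J} by $k = a z$, so that the unit spacing of $k$ becomes $\Delta z = 1/a$ in $z$. The left-hand side of~\eqref{eq:biorthonormal_J} is then a Riemann sum which, upon substituting the defining limits of $\phi$ and $\psi$, converges formally to $a^{-1}\int \dd{z}\, \phi(x+z)\psi(y+z)$ as $\epsilon \to 0$.

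For the right-hand side, $\delta_{n,m}$ vanishes unless $a(x-y) \in \mathbb{Z}$, so it concentrates at $x=y$ as $a \to \infty$. Pairing against a smooth test function of $y$ and changing variables from $m$ to $y$ shows that, as a distribution, $\delta_{n,m}$ should be identified with $a^{-1}\delta(x-y)$ in the limit. Multiplying both sides of~\eqref{eq:biorthonormal_J} by $a$ and sending $\epsilon \to 0$ then gives exactly~\eqref{eq:biorthonormal_wf}.

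The hard part — and the reason the statement is qualified as \emph{formal} — is the simultaneous rigorous justification of both limits. Convergence of the Riemann sum at fixed $x,y$ is delicate because the scaled wave functions turn out to be oscillatory higher-Airy-type integrals (cf.~Section~\ref{sec:higher_Airy_fn}), only conditionally integrable at infinity, and the identification $\delta_{n,m}\sim a^{-1}\delta(x-y)$ is intrinsically distributional. A cleaner route, more in line with the rest of the paper, would be to use contour-integral representations of $\phi,\psi$ inherited from~\eqref{eq:J_wf} after rescaling: the $z$-integral in~\eqref{eq:biorthonormal_wf} would then collapse the two contours through a Fourier-type $\delta(w-\tilde{w})$, and a subsequent contour computation would yield $\delta(x-y)$ directly.
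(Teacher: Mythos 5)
Your proposal is correct and follows essentially the same route as the paper: the paper likewise derives \eqref{eq:biorthonormal_wf} from the discrete biorthonormality \eqref{eq:biorthonormal_J} by pairing against a test function, rescaling the summation indices with the spacing $(\epsilon/\alpha_p)^{1/(p+1)}$, and converting the sums to Riemann integrals so that the Kronecker delta becomes a Dirac delta at the stated formal level. Your normalization bookkeeping (both sides acquiring a factor $a^{-1}$) matches the paper's computation, and your closing remark about the formal/conditionally convergent nature of the limit is consistent with the paper's own caveat that the identity is only ``formally'' biorthonormal.
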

\begin{proof}
 Let $f(\cdot)$ be a test function on $\mathbb{R}$.
 From the biorthonormality of $J$ and $\widetilde{J}$ shown in \eqref{eq:biorthonormal_J}, we have
 \begin{align}
  \sum_{k,k' \in \mathbb{Z}} J(n+k) \widetilde{J}(k'+k) f(k') = f(n)
  \, .
 \end{align}
 Then, this equation reads
 \begin{align}
 \sum_{k,k' \in \mathbb{Z}} J\qty(\frac{\beta}{\epsilon} + \qty(\frac{\alpha_p}{\epsilon})^{\frac{1}{p+1}} \qty(x+z) ) \widetilde{J}\qty(\frac{\beta}{\epsilon} + \qty(\frac{\alpha_p}{\epsilon})^{\frac{1}{p+1}} \qty(y+z) ) f\qty(\qty(\frac{\alpha_p}{\epsilon})^{\frac{1}{p+1}} y)
 = f\qty(\qty(\frac{\alpha_p}{\epsilon})^{\frac{1}{p+1}} x)
  \label{eq:biortho_JJf}
 \end{align}
where we have the following scaling variables,
\begin{align}
 x = \qty(\frac{\epsilon}{\alpha_p})^{\frac{1}{p+1}} n
 \, , \qquad 
 y = \qty(\frac{\epsilon}{\alpha_p})^{\frac{1}{p+1}} k'
 \, , \qquad
 z = \qty(\frac{\epsilon}{\alpha_p})^{\frac{1}{p+1}} \qty(k - \frac{\beta}{\epsilon})
 \, .
\end{align}
The summation over $\mathbb{Z}$ is converted to the Riemann integral,
\begin{align}
 \lim_{\epsilon \to 0} \qty(\frac{\epsilon}{\alpha_p})^{\frac{1}{p+1}} \sum_{k \in \mathbb{Z}} f\qty(\qty(\frac{\epsilon}{\alpha_p})^{\frac{1}{p+1}} k) = \int_{-\infty}^\infty \dd{x} f(x)
 \, .
\end{align}
Hence, taking the limit $\epsilon \to 0$ of \eqref{eq:biortho_JJf}, we obtain
\begin{align}
 \int_{-\infty}^\infty \dd{y} \dd{z} \phi(x+z) \psi(y+z) \bar{f}(y) = \bar{f}(x)
\end{align}
with the scaled test function,
 \begin{align}
  \bar{f}(x) = \lim_{\epsilon \to 0} f\qty(\qty(\frac{\alpha_p}{\epsilon})^{\frac{1}{p+1}} x)
  \, ,
 \end{align}
 from which we deduce the formula~\eqref{eq:biorthonormal_wf}.
\end{proof}

In the scaling limit, we obtain the simplified differential equations as follows. 
\begin{lemma}[Differential equations in the scaling limit]
We obtain the differential equations in the scaling limit $\epsilon \to 0$,
\begin{align}
    \qty[ \dv[p]{}{\xi} - \xi ] \phi(\xi) = 0
    \, , \qquad
    \qty[ (- 1)^p \dv[p]{}{\xi} - \xi ] \psi(\xi) = 0    
    \, ,
    \label{eq:Airy_ODE}
\end{align}
under the assumption,
\begin{align}
    \alpha_{p'} \xrightarrow{\epsilon \to 0} 0 \qquad \text{for} \qquad 0 < p' < p \, .
    \label{eq:multicrit_cond}
\end{align}
We call \eqref{eq:multicrit_cond} the multicritical condition of degree $p$.
\end{lemma}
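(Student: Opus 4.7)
The plan is to substitute the scaling ansatz \eqref{eq:scaling_var}, $x = \beta + \alpha_p^{1/(p+1)} \epsilon^{p/(p+1)} \xi$, into the rescaled equation $\bigl[\sum_{k=1}^{\infty} \alpha_k \epsilon^k \partial_x^k - (x-\beta)\bigr] J(x/\epsilon) = 0$ that sits just above the statement, and to match powers of $\epsilon$ term by term. Writing $A := \alpha_p^{1/(p+1)} \epsilon^{p/(p+1)}$, one has $\partial_x = A^{-1} \partial_\xi$, hence $\partial_x^k = A^{-k} \partial_\xi^k$, and the identity $k - kp/(p+1) = k/(p+1)$ converts the $k$-th summand into $\alpha_k \, \alpha_p^{-k/(p+1)} \epsilon^{k/(p+1)} \partial_\xi^k$; meanwhile $(x-\beta) = A\,\xi$. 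Dividing the whole equation by $A$ places the inhomogeneity at $\xi$ and rewrites the $k$-th operator coefficient as $\alpha_k \, \alpha_p^{-(k+1)/(p+1)} \epsilon^{(k-p)/(p+1)} \partial_\xi^k$.

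Next I would analyze these coefficients index by index as $\epsilon \to 0$. At $k = p$ the coefficient is identically $1$, yielding the operator $\partial_\xi^p$. For $k > p$ the exponent $(k-p)/(p+1)$ is strictly positive, so those terms vanish (assuming $\alpha_k$ remains bounded). For $0 < k < p$ the exponent is strictly negative, and the $k$-th term survives unless suppressed by $\alpha_k$; this is precisely where \eqref{eq:multicrit_cond} enters, read quantitatively as $\alpha_{p'} = o(\epsilon^{(p-p')/(p+1)})$. Passing to the limit inside the definition of $\phi(\xi)$ then produces the Airy-type equation $(\partial_\xi^p - \xi)\phi(\xi) = 0$.

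The equation for $\psi$ follows from the companion equation for $\widetilde J(x/\epsilon)$ by the identical calculation, with $\alpha_k$ replaced by $\tilde\alpha_k = (-1)^k \alpha_k$ as in \eqref{eq:alpha_beta}. The multicritical condition still kills the $k < p$ contributions (since $\tilde\alpha_{p'}$ and $\alpha_{p'}$ vanish together), and the $k > p$ contributions still vanish by positive-power suppression; only the $k = p$ coefficient is affected, now giving $\tilde\alpha_p/\alpha_p = (-1)^p$ and hence the second equation $((-1)^p \partial_\xi^p - \xi)\psi(\xi) = 0$.

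The main difficulty here is not computational but interpretational: as literally written, \eqref{eq:multicrit_cond} only requires $\alpha_{p'} \to 0$, which is too weak to cancel the $\epsilon^{(k-p)/(p+1)}$ divergences for $k < p$. I would therefore spell out that the intended content is the quantitative rate $\alpha_{p'} = o(\epsilon^{(p-p')/(p+1)})$ for each $0 < p' < p$; once this is made precise, the rest of the argument is pure bookkeeping of $\epsilon$-powers.
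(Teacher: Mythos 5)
Your proposal is correct and follows essentially the same route as the paper: substitute the scaling variable, factor out $\alpha_p^{1/(p+1)}\epsilon^{p/(p+1)}$, observe that the $k$-th coefficient becomes $\alpha_k\,\alpha_p^{-(k+1)/(p+1)}\epsilon^{(k-p)/(p+1)}$, and keep only $k=p$ in the limit. Your remark that \eqref{eq:multicrit_cond} must be read quantitatively as $\alpha_{p'} = o(\epsilon^{(p-p')/(p+1)})$ is a fair sharpening of a point the paper's proof glosses over, and it is consistent with the paper's own later normalization $\alpha_k = O(\epsilon^{(p-k)/(p+1)})$ in the paragraph on subleading terms.
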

\begin{proof}
The differential operator with respect to the $x$-variable is rewritten using the scaling variable $\xi$ as follows,
\begin{align}
    \dv{}{x} = \alpha_p^{-\frac{1}{p+1}} \epsilon^{-\frac{p}{p+1}} \dv{}{\xi} 
    \, .
\end{align}
Then, the differential equation for $J(x/\epsilon)$ is given by
\begin{align}
    0 & = 
    \qty[ \sum_{k=1}^\infty \alpha_k \alpha_p^{-\frac{k}{p+1}} \epsilon^{k-\frac{kp}{p+1}} \dv[k]{}{\xi} - \alpha_p^{\frac{1}{p+1}} \epsilon^{\frac{p}{p+1}} \xi ] J\qty(\frac{\beta}{\epsilon} + \qty(\frac{\alpha_p}{\epsilon})^{\frac{1}{p+1}} \xi) 
    \nonumber \\
    & =
    \alpha_p^{\frac{1}{p+1}} \epsilon^{\frac{p}{p+1}}
    \qty[ \sum_{k=1}^\infty \alpha_k \alpha_p^{-\frac{k+1}{p+1}} \epsilon^{k-\frac{(k+1)p}{p+1}} \dv[k]{}{\xi} - \xi ] J\qty(\frac{\beta}{\epsilon} + \qty(\frac{\alpha_p}{\epsilon})^{\frac{1}{p+1}} \xi) 
    \, .
\end{align}
Under the assumption $\alpha_{p'} \to 0$ for $p' < p$, it becomes
\begin{align}
    \alpha_p^{\frac{1}{p+1}} \epsilon^{\frac{p}{p+1}} 
    \qty[ \dv[p]{}{\xi} -  \xi + O(\epsilon^{\frac{1}{p+1}}) ] J\qty(\frac{\beta}{\epsilon} + \qty(\frac{\alpha_p}{\epsilon})^{\frac{1}{p+1}} \xi) = 0
    \, ,
\end{align}
Taking the scaling limit $\epsilon \to 0$, we obtain the differential equation for $\phi(\xi)$.
We can apply the same analysis to obtain the differential equation for the other wave function $\psi(\xi)$.
\end{proof}

Instead of the scaling variable~\eqref{eq:scaling_var}, it is also possible to use another scaling variable,%
\footnote{%
In order that all the variables and the parameters are real, we should take \eqref{eq:scaling_var} if $\alpha_p > 0$, while \eqref{eq:scaling_var2} if $\alpha_p < 0$ for odd $p$.
}
\begin{align}
    x = \beta + \tilde{\alpha}_p^{\frac{1}{p+1}} \epsilon^{\frac{p}{p+1}} \xi
    \, .
    \label{eq:scaling_var2}
\end{align}
In this case, the wave functions behave in the scaling limit as follows,
\begin{align}
    \lim_{\epsilon \to 0 } \qty(\frac{\alpha_p}{\epsilon})^{\frac{1}{p+1}} J\qty(\frac{\beta}{\epsilon} + \qty(\frac{\tilde{\alpha}_p}{\epsilon})^{\frac{1}{p+1}} \xi) = \psi(\xi)
    \, , \quad
    \lim_{\epsilon \to 0 } \qty(\frac{\alpha_p}{\epsilon})^{\frac{1}{p+1}} \widetilde{J}\qty(\frac{\beta}{\epsilon} + \qty(\frac{\tilde{\alpha}_p}{\epsilon})^{\frac{1}{p+1}} \xi) = \phi(\xi)
    \, .
\end{align}
Namely, the roles of $\phi(\xi)$ and $\psi(\xi)$ are now exchanged.
We remark that, for even $p$, $\alpha_p = \tilde{\alpha}_p$ and thus $\phi(\xi) = \psi(\xi)$.
\begin{remark}\label{rem:parity}
Parity is odd for both $\qty(\mathrm{d}/\mathrm{d}\xi)^p$ and $\xi$ in the odd $p$ case. 
Hence the differential equation preserves the parity, and thus one can choose either even or odd function as a solution.
For even $p$, on the other hand, it is not the case, so that the solution is in general asymmetric.
\end{remark}

\paragraph{Inclusion of subleading terms}

As mentioned above, we can modify the differential equation via further tuning of the parameters.
We define the parameters as follows,
\begin{align}
    \alpha_k \alpha_p^{-\frac{k+1}{p+1}} \epsilon^{k-\frac{(k+1)p}{p+1}} = \rho_k
    \, , \qquad
    \tilde{\rho}_k = (-1)^k \rho_k
    \, ,
\end{align}
with $\rho_p = 1$.
Namely, the order of the constant is given by
\begin{align}
    \alpha_k = O\qty( \epsilon^{\frac{p-k}{p+1}} )
    \, .
\end{align}
Then, we have the differential equations,
\begin{align}
    \qty[ \sum_{k = 1}^p \rho_k \dv[k]{}{\xi} - \xi ] \phi(\xi) = 0
    \, ,  \qquad
    \qty[ \sum_{k = 1}^p \tilde{\rho}_k \dv[k]{}{\xi} - \xi ] \psi(\xi) = 0
    \, .
    \label{eq:mod_ODE}
\end{align}

\subsection{Higher Airy functions}\label{sec:higher_Airy_fn}

In order to describe solutions to the differential equations appearing in the scaling limit, we introduce the higher-order Airy functions as following~\cite{Pearcey:1946PM,Brezin:1998zz,Brezin:1998PREb,Tracy:2006CMP}.

\begin{definition}[Higher-order Airy functions]\label{def:Airy_def}
We define the higher-order Airy functions as follows,
\begin{subequations}\label{eq:Airy_def}
\begin{align}
    \operatorname{Ai}_{p}(z) & = 
    \int_\gamma \frac{\dd{x}}{2 \pi \ii} \exp\qty( (-1)^{n-1} \frac{x^{p+1}}{p+1} - x z)
    & (p = 2n, 2n-1)
    \\
    \widetilde{\operatorname{Ai}}_{2n-1}(z) & = 
    \int_{\tilde{\gamma}} \frac{\dd{x}}{2 \pi \ii} \exp\qty( (-1)^{n} \frac{x^{p+1}}{p+1} - x z)
    & (p = 2n-1)
\end{align}
\end{subequations}
where $\gamma$  and $\tilde{\gamma}$ are the integral contours given by
%\begin{subequations}
\begin{align}
    \gamma: \ - \ii \infty \ \longrightarrow \ + \ii \infty
    \, , \qquad
    \tilde{\gamma} = \tilde{\gamma}_+ + \tilde{\gamma}_-
    \, ,
    \label{eq:Ai_contour}
\end{align}
with the angle 
\begin{align}
     \tilde{\gamma}_\pm : \ \pm \ii \ee^{-\ii \theta} \infty \ \longrightarrow \ 0 \ \longrightarrow \ \mp \ii \ee^{+\ii \theta} 
    \infty
    \, , \qquad
    \theta = \frac{\pi}{p+1}
    \, .
\end{align}
\end{definition}

%\end{subequations}
We may rewrite $\operatorname{Ai}_p(z)$ and $\widetilde{\operatorname{Ai}}_p(z)$ as real (improper) integrals,%
\footnote{%
For even $p = 2n$, we similarly define the higher analog of the Airy function of the second kind, which shows the same oscillation amplitude for $z \to - \infty$,
\begin{align}
    \operatorname{Bi}_{2n}(z) = \int_{0}^\infty \frac{\dd{x}}{\pi} \qty( \exp \qty( - \frac{x^{2n+1}}{2n+1} + x z ) + \sin \qty( \frac{x^{2n+1}}{2n+1} + x z ) )
    \, .
\end{align}
}
\begin{subequations}\label{eq:Airy_fn_real_form}
\begin{align}
    \operatorname{Ai}_{2n}(z) & = \int_{0}^\infty \frac{\dd{x}}{\pi} \cos \qty( \frac{x^{2n+1}}{2n+1} + x z )
    \, , \quad
    \operatorname{Ai}_{2n-1}(z) = \int_{0}^\infty \frac{\dd{x}}{\pi} \exp\qty( - \frac{x^{2n}}{2n} ) \cos \qty( x z )
    \, , \\
    \widetilde{\operatorname{Ai}}_{2n-1}(z) & = \int_{0}^\infty \frac{\dd{x}}{\pi} \exp\qty(- \frac{x^{2n}}{2n}) \Big( \ee^{-\sin \qty(\frac{\pi}{2n})xz} \cos \qty(\cos \qty(\frac{\pi}{2n}) xz + \frac{\pi}{2n}) 
    \nonumber \\
    & \hspace{12em}
    - \ee^{\sin \qty(\frac{\pi}{2n})xz} \cos \qty(\cos \qty(\frac{\pi}{2n}) xz - \frac{\pi}{2n}) \Big)
    \, .
\end{align}
\end{subequations}
We remark that $\operatorname{Ai}_{p = 2n - 1}(z)$ is an even function, while $\widetilde{\operatorname{Ai}}_{p = 2n - 1}(z)$ is an odd function.

%\paragraph{Biorthonormality}

\begin{lemma}%[Biorthonormality]
\label{lem:biorthonormal}
For $x,y \in \mathbb{R}$, the Airy functions obey the formal biorthonormal condition as follows,
\begin{align}
    \int_{-\infty}^\infty \dd{z} \operatorname{Ai}_p (x + z) \widetilde{\operatorname{Ai}}_p (y + z) = \delta(x - y)
    \, .
    \label{eq:biorthonormal}
\end{align}
\end{lemma}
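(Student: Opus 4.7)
The plan is to substitute the contour-integral representations from Definition~\ref{def:Airy_def} into the LHS of \eqref{eq:biorthonormal} and carry out the $z$-integration first, reducing the identity to the cancellation of polynomial exponentials on the one-dimensional locus $v = -u$ in the resulting double contour integral. An equivalent perspective, which I will invoke for the odd-$p$ case, is Plancherel's identity applied to the (possibly distributional) Fourier transforms of the two Airy functions.

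Let $V_1(u) = (-1)^{n-1} u^{p+1}/(p+1)$ denote the polynomial in the exponent of $\operatorname{Ai}_p$, and let $V_2(v)$ be the analogous polynomial for $\widetilde{\operatorname{Ai}}_p$, so that $V_2 = -V_1$ for odd $p = 2n-1$ and $V_2 = V_1$ for even $p = 2n$ (where $\widetilde{\operatorname{Ai}}_p = \operatorname{Ai}_p$, consistent with Remark~\ref{rem:parity}). Substituting the contour representations, the LHS becomes
\begin{align*}
    \int_{-\infty}^\infty \dd{z} \int_\gamma \frac{\dd{u}}{2\pi \ii} \int_{C_2} \frac{\dd{v}}{2\pi \ii} \exp\bigl( V_1(u) + V_2(v) - u(x+z) - v(y+z) \bigr) \, ,
\end{align*}
with $C_2 = \gamma$ for even $p$ and $C_2 = \tilde\gamma$ for odd $p$. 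Swapping the orders of integration and performing the $z$-integral first formally yields $2\pi\,\delta(u+v)$, which localizes the double contour integral to $v = -u$.

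The decisive algebraic observation is that the polynomial exponentials cancel on this locus: $V_1(u) + V_2(-u) = 0$ in both parities. For even $p$, the degree $p+1$ is odd, so $V_1(-u) = -V_1(u)$, and with $V_2 = V_1$ the cancellation is immediate. For odd $p$, the degree $p+1$ is even, so $V_1(-u) = V_1(u)$, and with $V_2 = -V_1$ we again obtain $V_1(u) - V_1(u) = 0$. The remaining integral is then simply $\int_\gamma \frac{\dd{u}}{2\pi\ii} \, e^{-u(x-y)}$, and under the parameterization $u = \ii \xi$ along $\gamma = \ii\mathbb{R}$ this is the standard Fourier-inversion identity $\int_{-\infty}^\infty \frac{\dd{\xi}}{2\pi} e^{-\ii\xi(x-y)} = \delta(x-y)$.

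The main obstacle is the rigorous justification of the $z$-integration step for odd $p$, where $\tilde\gamma$ does not sit on $\ii\mathbb{R}$, so the ``delta function'' $2\pi\,\delta(u+v)$ on the product contour $\gamma \times \tilde\gamma$ has no literal pointwise support $v = -u$. The cleanest remedy is to reinterpret the calculation as Plancherel's theorem: writing $\operatorname{Ai}_p$ as the (honest) Fourier transform of the Schwartz function $g_p(t) = e^{-t^{2n}/(2n)}$, the identity demands that $\widetilde{\operatorname{Ai}}_p$ be read as the hyperfunction Fourier transform of $g_p(t)^{-1} = e^{+t^{2n}/(2n)}$, matching the sectorial decay prescribed by the contour $\tilde\gamma$; Plancherel then delivers $\delta(x-y)$. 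This is precisely why the lemma is stated as a \emph{formal} biorthonormality, and the above computation should be understood in this distributional sense.
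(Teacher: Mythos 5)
Your proposal is correct and follows essentially the same route as the paper: substitute the contour representations, perform the $z$-integral to produce $\delta(u+v)$, observe that the two polynomial potentials cancel on the locus $v=-u$ in both parities, and conclude by Fourier inversion. The contour subtlety you flag for odd $p$ is handled in the paper by working with a rotated contour $\tilde{\gamma}'$ so that the delta-function localization is formally meaningful, which plays the same role as your Plancherel reinterpretation.
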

\begin{proof}
Using the integral form of the delta function,
\begin{align}
    \delta (z) = \int_{-\infty}^\infty \frac{\dd{x}}{2 \pi} \, \ee^{\ii xz}
    \, ,
\end{align}
we have
\begin{align}
    \text{LHS of} \ \eqref{eq:biorthonormal}
    & =
    \int_{-\infty}^\infty \dd{z} \int_{-\infty}^\infty \frac{\dd{u}}{2 \pi} \int_{\tilde{\gamma}'} \frac{\dd{v}}{2\pi} \exp \qty( - \frac{u^{2n}}{2n} + \frac{v^{2n}}{2n} - \ii (ux + vy) - \ii z (u + v) )
    \nonumber \\
    & = \frac{1}{2\pi} \int_{-\infty}^\infty \dd{u} \int_{\tilde{\gamma}'} \dd{v} \exp \qty( - \frac{u^{2n}}{2n} + \frac{v^{2n}}{2n} - \ii (ux + vy) ) \delta(u + v)
    \nonumber \\
    & = 
    \int_{-\infty}^\infty \frac{\dd{u}}{2\pi} \ee^{- \ii u (x - y)}
    = \text{RHS}
\end{align}
where the modified contour is given by $\tilde{\gamma}' = \tilde{\gamma}_+' + \tilde{\gamma}_-'$ with $\tilde{\gamma}_\pm' : \ \mp \ee^{-\ii \theta} \infty \ \longrightarrow \ 0 \ \longrightarrow \ \pm \ee^{+\ii \theta} \infty$.
\end{proof}

\paragraph{Differential equations}

These Airy functions obey the differential equations,
\begin{subequations}
\begin{align}
    \qty[ (-1)^{n-1} \dv[2n]{}{z} - z ]\operatorname{Ai}_{2n}(z) = 0
    \, , \qquad &
    \qty[ (-1)^{n} \dv[2n]{}{z} - z ]\operatorname{Ai}_{2n}(-z) = 0
    \, ,
    \\
    \qty[ (-1)^{n} \dv[2n-1]{}{z} - z ]\operatorname{Ai}_{2n-1}(z) = 0
    \, , \qquad &
    \qty[ (-1)^{n-1} \dv[2n-1]{}{z} - z ]\widetilde{\operatorname{Ai}}_{2n-1}(z) = 0
    \, .
\end{align}
\end{subequations}
Therefore, the solutions to the differential equations~\eqref{eq:Airy_ODE} are given by
\begin{align}
    \phi(\xi) = 
    \begin{cases}
    \operatorname{Ai}_{p}((-1)^{n-1}\xi) & (p = 2n) \\
    \operatorname{Ai}_{p}(\xi) & (p = 4n - 1) \\
    \widetilde{\operatorname{Ai}}_{p}(\xi) & (p = 4n - 3)
    \end{cases}
    \qquad
    \psi(\xi) = 
    \begin{cases}
    \operatorname{Ai}_{p}((-1)^{n-1}\xi) & (p = 2n) \\
    \widetilde{\operatorname{Ai}}_{p}(\xi) & (p = 4n - 1) \\
    \operatorname{Ai}_{p}(\xi) & (p = 4n - 3) 
    \end{cases}
\end{align}
The normalizations are fixed by comparing their biorthonormality conditions (Lemma~\ref{lemma:biorthonormal_wf} and Lemma~\ref{lem:biorthonormal}).

\subsubsection{Asymptotic behavior: $\operatorname{Ai}_p$}\label{sec:Ai_asymp1}

Let us explore the asymptotic behavior of the higher Airy functions based on the saddle point approximation at $z \to \pm \infty$.%
\footnote{%
The asymptotics of the higher Airy function discussed in this part is known in the literature (See, e.g., \cite{Adler:2012ega}).
We show them here for the sake of self-contained presentation.
}
Although the saddle point analysis presented here is not rigorous, rather heuristic, it provides an efficient way to discuss the asymptotic behavior.
We may need an alternative approach, e.g., Riemann--Hilbert analysis, to confirm rigorously the results shown in this part.

In order to apply the saddle point approximation, we rewrite the $p$-Airy function for $z \in \mathbb{R}$,
\begin{align}
    \operatorname{Ai}_p(z) & = \int_\gamma \frac{\dd{x}}{2 \pi \ii} \, \ee^{W(x,z)}
    \, ,
\end{align}
where we introduce the potential function,
\begin{align}
    W(x,z) = (-1)^{n-1} \frac{x^{p+1}}{p+1} - xz
    \qquad (p = 2n, 2n - 1)
    \, .
\end{align}
The stationary equation with respect to this potential function is given by
\begin{align}
    0 = \pdv{W(x,z)}{x} = (-1)^{n-1} x^p - z 
    \, ,
\end{align}
which leads to the saddle point
\begin{align}
    x_*^p = 
    \begin{cases}
    - z & (n = 2m) \\ + z & (n = 2m - 1)
    \end{cases}.
    \label{eq:saddle_pt_x}
\end{align}

%\paragraph{$n = 2m$ ($p = 4m, 4m-1$)}
\paragraph{Positive $z$ regime}
We first consider the positive $z$ regime. 
For the case with $n = 2m$ ($p = 4m, 4m-1$), the saddle point~\eqref{eq:saddle_pt_x} is given by
\begin{align}
    x_* = \tilde\omega_k z^{\frac{1}{p}}
    \, , \qquad 
    \tilde\omega_k = \exp\qty(\frac{2k+1}{p}\pi \ii)
    \, , \qquad k \in \mathbb{Z}/p\mathbb{Z}
    \, .
\end{align}
Namely, $(\tilde{\omega}_k)_{k \in \mathbb{Z}/p\mathbb{Z}}$ is a $p$-th root of $-1$, such that $\tilde{\omega}_k^p = -1$.
For the case with $n = 2m - 1$, we instead consider the saddle point given by
\begin{align}
    x_* = \omega_k z^{\frac{1}{p}}
    \, , \qquad 
    \omega_k = \exp\qty(\frac{2k}{p}\pi \ii)
    \, , \qquad k \in \mathbb{Z}/p\mathbb{Z}
    \, ,
\end{align}
where $(\omega_k)_{k \in \mathbb{Z}/p\mathbb{Z}}$ is a $p$-th root of $+1$, such that $\omega_k^p = +1$.

We focus on the case with $n = 2m$ for the moment.
Expanding the potential function up to the quadratic order, we obtain
\begin{align}
    W(x,z) & \approx W(x_*,z) + \frac{1}{2} {\pdv[2]{W(x_*,z)}{x}} (x - x_*)^2 
    \nonumber \\
    & = - \frac{p}{p+1} \tilde{\omega}_k z^{1 + \frac{1}{p}} + \frac{p}{2} \tilde{\omega}_k^{-1} z^{1 - \frac{1}{p}} (x - x_*)^2
    \, .
\end{align}
Hence, the saddle point contribution with $\Re\qty(\tilde{\omega}_k) > 0$, which decays at $z \to + \infty$, is given by
\begin{align}
    & \exp\qty(- \frac{p}{p+1} \tilde{\omega}_k z^{1 + \frac{1}{p}}) \int_{-\ii \infty}^{+\ii\infty} \frac{\dd{x}}{2 \pi \ii} \exp\qty( \frac{p}{2} \tilde{\omega}_k^{-1} z^{1 - \frac{1}{p}} (x - x_*)^2 )
    \nonumber \\
    & = \exp\qty(- \frac{p}{p+1} \tilde{\omega}_k z^{1 + \frac{1}{p}}) z^{-\frac{1}{2}+\frac{1}{2p}} \frac{\tilde{\omega}_k^{\frac{1}{2}}}{\sqrt{2 p \pi }}
    \, .
\end{align}
Although we may deform the contour to pick up all the saddle points with a positive real part, we only focus on the most dominating factors, $x_* = \tilde{\omega}^{\pm 1}_{m-1} z^{\frac{1}{p}}$.
For the case with $n = 2m - 1$, we similarly take the saddle points $x_* = {\omega}^{\pm 1}_{m-1} z^{\frac{1}{p}}$.
Then, we obtain the asymptotic behavior of the $p$-Airy function at $z \to + \infty$,
\begin{align}
    \operatorname{Ai}_{p}(z) \xrightarrow{z \to + \infty} &
    %\sqrt{\frac{2}{p \pi}} z^{-\frac{1}{2}+\frac{1}{2p}} \sum_{k=0}^{m-1} \exp\qty(- \frac{p}{p+1} \cos \qty(\frac{2k+1}{p}\pi) z^{1 + \frac{1}{p}}) \cos \phi_k(z)
    % \nonumber \\ & \approx 
%    \sqrt{\frac{2}{p \pi}} z^{-\frac{1}{2}+\frac{1}{2p}} \exp\qty(\frac{-p}{p+1} \cos \qty(\frac{2m-1}{p}\pi) z^{1 + \frac{1}{p}}) %\cos \phi_{m-1}(z)
%    \nonumber \\
%    & \qquad \times
%    \cos \qty( \frac{p}{p+1} \sin \qty(\frac{2m-1}{p}\pi) z^{1 + \frac{1}{p}} - \frac{2m-1}{2p} \pi )
%    \nonumber \\ & = 
    \sqrt{\frac{2}{p \pi}} z^{-\frac{1}{2}+\frac{1}{2p}}
    \nonumber \\
    & \times
    \begin{cases}
    \displaystyle
    \exp\qty(\frac{-p}{p+1} \sin \qty(\frac{\pi}{p}) z^{1 + \frac{1}{p}}) \cos \qty( \frac{p}{p+1} \cos \qty(\frac{\pi}{p}) z^{1 + \frac{1}{p}} - \frac{\pi}{4} + \frac{\pi}{2p} )
    & (p = 2n) \\[1em]
    \displaystyle
    \exp\qty(\frac{-p}{p+1} \sin \qty(\frac{\pi}{2p}) z^{1 + \frac{1}{p}}) \cos \qty( \frac{p}{p+1} \cos \qty(\frac{\pi}{2p}) z^{1 + \frac{1}{p}} - \frac{\pi}{4} + \frac{\pi}{4p} )
    & (p = 2n - 1)
    \end{cases}
    \, .
    \label{eq:Ai_asymp1}
\end{align}
This expression is available both for $n = 2m$ and $n = 2m - 1$.

\paragraph{Negative $z$ regime}
We consider the negative $z$-regime $z < 0$.
In this case, the saddle point~\eqref{eq:saddle_pt_x} is given by
\begin{align}
    x_* = 
    \begin{cases}
    \omega_k |z|^{\frac{1}{p}} & (n = 2m) \\
    \tilde\omega_k |z|^{\frac{1}{p}} & (n = 2m-1) 
    \end{cases}
%    \omega_k = \exp\qty(\frac{2k}{p}\pi \ii)
    \qquad \text{for} \quad 
    k \in \mathbb{Z}/p\mathbb{Z}
    \, .
\end{align}
For $n = 2m$, expansion of the potential function up to the quadratic order is given by
\begin{align}
    W(x,z) \approx \frac{p}{p+1} \omega_k |z|^{1+\frac{1}{p}} + \frac{p}{2} \omega_k^{-1} |z|^{1-\frac{1}{p}} (x - x_*)^2
    \, .
\end{align}
In contrast to the previous case $z \to + \infty$, we shall take the saddle point with $\Im(\omega_k) \le 0$ to obtain a decaying contribution in $z \to - \infty$.
Then, we obtain the asymptotic behavior of the $p$-Airy function at $z \to - \infty$ from the saddle points $x_* = \omega_m^{\pm 1} |z|^{\frac{1}{p}}$ as follows,
\begin{align}
    \operatorname{Ai}_p(z) \xrightarrow{z \to - \infty} &
%    \sqrt{\frac{2}{p \pi}} |z|^{-\frac{1}{2}+\frac{1}{2p}} \sum_{k=m}^{2m-1} \exp\qty(\frac{p}{p+1} \cos \qty(\frac{2k}{p}\pi) |z|^{1 + \frac{1}{p}}) \cos \tilde\phi_k(|z|)
%    \nonumber \\ & \approx 
    \sqrt{\frac{2}{p \pi}} |z|^{-\frac{1}{2}+\frac{1}{2p}}
    \nonumber \\
    & \times
    \begin{cases}
    \displaystyle
    \cos \qty( \frac{p}{p+1} |z|^{1 + \frac{1}{p}} - \frac{\pi}{4} ) & (p = 2n)
    \\[1em] \displaystyle
    \exp\qty(\frac{-p}{p+1} \sin \qty( \frac{\pi}{2p} ) |z|^{1 + \frac{1}{p}}) \cos \qty( \frac{p}{p+1} \cos \qty( \frac{\pi}{2p} ) |z|^{1+\frac{1}{p}} - \frac{\pi}{4} + \frac{\pi}{4p} )
    & (p = 2n - 1)
    \end{cases}.    
\end{align}
This expression is in fact available both for $n = 2m$ and $n = 2m - 1$.
In particular for $p = 2n$, there is no exponential damping factor in the limit $z \to - \infty$.
We remark that $\operatorname{Ai}_{p = 2n - 1}(z)$ is an even function.

\subsubsection{Asymptotic behavior: $\widetilde{\operatorname{Ai}}_p$}\label{sec:Ai_asymp2}

We study the asymptotic behavior of $\widetilde{\operatorname{Ai}}_p(z)$ for $z \in \mathbb{R}$,
\begin{align}
    \widetilde{\operatorname{Ai}}_{2n-1}(z) & = \int_{\tilde\gamma} \frac{\dd{x}}{2 \pi \ii} \, \ee^{\widetilde{W}(x,z)}
    \, ,
    \qquad
    \widetilde{W}(x,z) = (-1)^n \frac{x^{2n}}{2n} - xz
    \, .
\end{align}
From the stationary equation with respect to this potential function, we obtain the saddle point as follows,
\begin{align}
    0 = {\pdv{\widetilde{W}(x,z)}{x}} = (-1)^n x^p - z
    \quad \implies \quad
    x_*^p = 
    \begin{cases}
    + z & (n = 2m) \\ - z &(n = 2m - 1)
    \end{cases}
\end{align}
For positive $z$, we obtain the following asymptotic behavior,
\begin{align}
    \widetilde{\operatorname{Ai}}_p(z) 
    \xrightarrow{z \to + \infty} 
    - \sqrt{\frac{2}{p\pi}} z^{-\frac{1}{2}+\frac{1}{2p}}
    \exp \qty( \frac{p}{p+1} \sin\qty(\frac{\pi}{2p}) z^{1+\frac{1}{p}} ) \cos \qty( \frac{p}{p+1} \cos \qty(\frac{\pi}{2p}) z^{1+\frac{1}{p}} - \frac{\pi}{4} - \frac{\pi}{4p} )
    \, .
    \label{eq:Ai_asymp2}
\end{align}
Since $\widetilde{\operatorname{Ai}}_p(z)$ is an odd function, $\widetilde{\operatorname{Ai}}_p(-z) = - \widetilde{\operatorname{Ai}}_p(z)$, the asymptotic behavior for the negative $z$ is similarly obtained.
We remark that the function $\widetilde{\operatorname{Ai}}_p(z)$ is not bounded at infinity, $\widetilde{\operatorname{Ai}}_p(z) \xrightarrow{z \to \pm \infty} \mp \infty$, while the product behaves as $\operatorname{Ai}_p(z) \widetilde{\operatorname{Ai}}_p(z) \xrightarrow{z \to \pm \infty} 0$.
See \S\ref{sec:density_fn} for details.

\subsection{Higher Airy kernel}\label{sec:higher_Airy_kernel}

We now discuss the asymptotic form of the kernel in the scaling limit.
\begin{definition}[Higher Airy kernel]\label{def:higher_Airy_kernel}
We define the $p$-Airy kernel with the higher Airy functions,
\begin{align}
    K_{p\text{-Airy}}(x,y) 
    & = 
    \int_{0}^\infty \dd{z}
    \operatorname{Ai}_p (x + z) \widetilde{\operatorname{Ai}}_p (y + z)
    \, .
    \label{eq:Airy_kernel_int}
\end{align}
\end{definition}
This integral is understood as an improper integral for odd $p$, while the integral converges absolutely for even $p$:
From the asymptotic behavior of the Airy functions discussed in \S\ref{sec:Ai_asymp1} and \S\ref{sec:Ai_asymp2}, we see that $\operatorname{Ai}_p(z) {\operatorname{Ai}}_p(z) \xrightarrow{z \to + \infty} z^{-1+\frac{1}{p}} \times $(exponential damping factor) for even $p$, $\operatorname{Ai}_p(z) \widetilde{\operatorname{Ai}}_p(z) \xrightarrow{z \to \pm \infty} |z|^{-1+\frac{1}{p}} \times $(oscillatory factor) for odd $p$.
See \eqref{eq:rho_derivative_asymptotic} in \S\ref{sec:density_fn} for more precise expressions.
Meanwhile, we will show an alternative expression of the $p$-Airy kernel in terms of the derivatives of the $p$-Airy functions (Proposition~\ref{prop:CD_formula}), which are clearly finite.

\if0
This integral is well-defined because the product of the $p$-Airy functions is suppressed in the asymptotic regime,
\begin{subequations}
\begin{align}
    p = 2n \ : \quad &
    \operatorname{Ai}_p(z) {\operatorname{Ai}}_p(z) \xrightarrow{z \to + \infty} 0 
    \, , \\
    p = 2n - 1 \ : \quad &
    \operatorname{Ai}_p(z) \widetilde{\operatorname{Ai}}_p(z) \xrightarrow{z \to \pm \infty} 0
    \, .
\end{align}
\end{subequations}
\fi

In the following, we call the kernel for even $p$ the higher Airy kernel, and the higher Pearcey kernel for odd $p$.
\begin{lemma}[Projectivity]
The $p$-Airy kernel is projective,
\begin{align}
    \int_{-\infty}^\infty \dd{z} K_{p\text{-Airy}}(x,z) K_{p\text{-Airy}}(z,y) = K_{p\text{-Airy}}(x,y)
    \, .
    \label{eq:projectivity_Airy}
\end{align}
\end{lemma}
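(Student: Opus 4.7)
The plan is to reduce projectivity to the biorthonormality established in Lemma \ref{lem:biorthonormal}. First I would substitute the integral definition \eqref{eq:Airy_kernel_int} into both factors, giving
\begin{align*}
    \int_{-\infty}^\infty \dd{z}\, K_{p\text{-Airy}}(x,z) K_{p\text{-Airy}}(z,y)
    = \int_{-\infty}^\infty \dd{z} \int_0^\infty \dd{u} \int_0^\infty \dd{v}\, \operatorname{Ai}_p(x+u)\widetilde{\operatorname{Ai}}_p(z+u)\operatorname{Ai}_p(z+v)\widetilde{\operatorname{Ai}}_p(y+v).
\end{align*}
Assuming Fubini, I would exchange the order so that the $z$-integral is performed first. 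Relabeling the dummy variable $z \mapsto z - u$ inside the inner integral, Lemma \ref{lem:biorthonormal} yields
\begin{align*}
    \int_{-\infty}^\infty \dd{z}\, \widetilde{\operatorname{Ai}}_p(z+u)\operatorname{Ai}_p(z+v) = \delta(u - v).
\end{align*}

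Substituting this back collapses one of the remaining integrals and leaves
\begin{align*}
    \int_0^\infty \dd{u} \int_0^\infty \dd{v}\, \operatorname{Ai}_p(x+u)\widetilde{\operatorname{Ai}}_p(y+v)\delta(u-v)
    = \int_0^\infty \dd{u}\, \operatorname{Ai}_p(x+u)\widetilde{\operatorname{Ai}}_p(y+u) = K_{p\text{-Airy}}(x,y),
\end{align*}
which is the claimed projectivity.

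The main obstacle is the justification of Fubini, since for odd $p$ none of the individual integrals is absolutely convergent, and both the $z$-integral and the defining $u,v$-integrals are only improper/oscillatory. The cleanest workaround is to interpret all manipulations distributionally, exactly as in the proof of Lemma \ref{lem:biorthonormal}: replace $\operatorname{Ai}_p$ and $\widetilde{\operatorname{Ai}}_p$ by their Fourier/contour representations from Definition \ref{def:Airy_def}, carry out the $z$-integration to produce a $\delta$-function of the contour variables, and only at the end restrict to the half-line $[0,\infty)$ via the Heaviside function (whose Fourier transform yields the integrable $\frac{1}{u+v}$-type kernel appearing in \eqref{eq:Schur_kernel}). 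Alternatively, one may first establish the Christoffel--Darboux-type formula of Proposition \ref{prop:CD_formula}, which expresses $K_{p\text{-Airy}}(x,y)$ as a finite combination of derivatives of the Airy functions divided by $x-y$, and then check projectivity in that form using the differential equations of \S\ref{sec:higher_Airy_fn}. Either route removes the convergence issue while the algebraic skeleton of the argument above remains the same.
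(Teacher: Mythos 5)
Your argument is exactly the one the paper intends: its proof consists of the single sentence that projectivity ``immediately follows from the biorthogonality of the $p$-Airy functions (Lemma~\ref{lem:biorthonormal})'', and your computation is precisely the substitution-plus-delta-function collapse that this remark abbreviates. Your additional discussion of the Fubini/convergence issue for odd $p$ goes beyond what the paper records, but the core route is identical, so the proposal is correct and matches the paper's approach.
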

\begin{proof}
This immediately follows from the biorthogonaliry of the $p$-Airy functions (Lemma~\ref{lem:biorthonormal}).
\end{proof}

Then, we obtain the following behavior of the kernel in the scaling limit.
\begin{proposition}\label{prop:kernel_scaling_lim}
The Schur measure kernel is asymptotic to the higher Airy kernel in the scaling limit under the multicritical condition \eqref{eq:multicrit_cond},
\begin{align}
 \lim_{\epsilon \to 0}
 \qty(\frac{\alpha_p}{\epsilon})^{\frac{1}{p+1}}
 K\qty(\frac{\beta}{\epsilon} + \qty(\frac{\alpha_p}{\epsilon})^{\frac{1}{p+1}} x, \frac{\beta}{\epsilon} + \qty(\frac{\alpha_p}{\epsilon})^{\frac{1}{p+1}} y) = %\qty( \frac{\alpha_p}{\epsilon} )^{\frac{1}{p+1}}
 K_{p\text{-Airy}}(x,y) 
    \, .
    \label{eq:kernel_scaling_lim}
\end{align}
\end{proposition}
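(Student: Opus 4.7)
My plan is to start from the explicit sum representation \eqref{eq:Schur_kernel_sum},
\begin{align*}
K(r,s) = \sum_{k=1}^\infty J\qty(r+k-\tfrac{1}{2}) \widetilde{J}\qty(s+k-\tfrac{1}{2}),
\end{align*}
insert the scaled arguments for $r$ and $s$, and reinterpret the sum over $k$ as a Riemann sum whose limit is precisely the integral defining the $p$-Airy kernel in \eqref{eq:Airy_kernel_int}.

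Concretely, I would introduce the rescaled summation variable $z_k := (\epsilon/\alpha_p)^{1/(p+1)}(k - 1/2)$ with spacing $\Delta z := (\epsilon/\alpha_p)^{1/(p+1)} \to 0$. With the scalings $r = \beta/\epsilon + (\alpha_p/\epsilon)^{1/(p+1)} x$ and $s = \beta/\epsilon + (\alpha_p/\epsilon)^{1/(p+1)} y$, the arguments appearing in $J$ and $\widetilde{J}$ become $\beta/\epsilon + (\alpha_p/\epsilon)^{1/(p+1)}(x+z_k)$ and $\beta/\epsilon + (\alpha_p/\epsilon)^{1/(p+1)}(y+z_k)$, which are exactly the scaling forms entering the definitions of $\phi$ and $\psi$. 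Multiplying the sum representation by $(\alpha_p/\epsilon)^{1/(p+1)}$ then produces
\begin{align*}
\qty(\frac{\alpha_p}{\epsilon})^{\frac{1}{p+1}} K(r,s) = \sum_{k=1}^\infty \Delta z \, F_\epsilon(x+z_k) \, G_\epsilon(y+z_k),
\end{align*}
where $F_\epsilon(u) := (\alpha_p/\epsilon)^{1/(p+1)} J(\beta/\epsilon + (\alpha_p/\epsilon)^{1/(p+1)} u)$ and $G_\epsilon$ is defined analogously from $\widetilde{J}$; by construction $F_\epsilon \to \phi$ and $G_\epsilon \to \psi$ pointwise as $\epsilon \to 0$.

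Since $\{z_k\}_{k\ge 1}$ becomes dense in $[0,\infty)$ with uniform spacing $\Delta z \to 0$, a Riemann-sum argument then identifies the limit of the right-hand side with $\int_0^\infty dz \, \phi(x+z)\psi(y+z)$, which equals \eqref{eq:Airy_kernel_int} once $(\phi,\psi)$ are identified with $(\operatorname{Ai}_p,\widetilde{\operatorname{Ai}}_p)$ under the $p$-dependent conventions catalogued in \S\ref{sec:higher_Airy_fn}. The main technical obstacle lies in justifying this passage for odd $p$: there the integrand is only conditionally integrable (oscillatory with envelope $|z|^{-1+1/p}$ at infinity as noted in \S\ref{sec:Ai_asymp1}--\ref{sec:Ai_asymp2}), so pointwise convergence of the summands is not enough. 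I would control the tail either by returning to the double contour representation~\eqref{eq:Schur_kernel} of $K(r,s)$ and running a uniform-in-$\epsilon$ saddle point analysis on $\mathscr{J}$, or by Abel summation combined with the difference equations~\eqref{eq:J_ODE}, which trades oscillation for a decaying coefficient. For even $p$, the exponential decay of $\operatorname{Ai}_p \widetilde{\operatorname{Ai}}_p$ at $+\infty$ already reduces the argument to dominated convergence, so that case is comparatively routine.
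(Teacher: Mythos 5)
Your proposal follows essentially the same route as the paper's proof: rewrite the kernel via the sum formula \eqref{eq:Schur_kernel_sum} in the scaled variables, recognize a Riemann sum with spacing $(\epsilon/\alpha_p)^{1/(p+1)}$, and pass to the integral \eqref{eq:Airy_kernel_int} after replacing the wave functions by the higher Airy functions. You are in fact more explicit than the paper about the delicate point — the conditional convergence of the tail for odd $p$ — which the paper only acknowledges implicitly by calling the integral improper and deferring to the Christoffel--Darboux form.
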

\begin{proof}
The derivation is parallel with our previous paper \cite{Kimura:2020sud}.
Before taking the scaling limit, the kernel is given by the summation formula~\eqref{eq:Schur_kernel_sum}.
In terms of the scaling variables, we have
\begin{align}
    &
    \qty(\frac{\alpha_p}{\epsilon})^{\frac{1}{p+1}} K\qty(\frac{\beta}{\epsilon} + \qty(\frac{\alpha_p}{\epsilon})^{\frac{1}{p+1}} x, \frac{\beta}{\epsilon} + \qty(\frac{\alpha_p}{\epsilon})^{\frac{1}{p+1}} y)
    \nonumber \\ 
    & = \qty(\frac{\epsilon}{\alpha_p})^{\frac{1}{p+1}} \sum_{k=1}^\infty \qty(\frac{\alpha_p}{\epsilon})^{\frac{1}{p+1}} J\qty(\frac{\beta}{\epsilon} + \qty(\frac{\alpha_p}{\epsilon})^{\frac{1}{p+1}} \qty(x + \qty(\frac{\epsilon}{\alpha_p})^{\frac{1}{p+1}}(k-\frac{1}{2}))) 
    \nonumber \\ 
    & \hspace{8em} \times 
    \qty(\frac{\alpha_p}{\epsilon})^{\frac{1}{p+1}} \widetilde{J}\qty(\frac{\beta}{\epsilon} + \qty(\frac{\alpha_p}{\epsilon})^{\frac{1}{p+1}} \qty(y + \qty(\frac{\epsilon}{\alpha_p})^{\frac{1}{p+1}}(k-\frac{1}{2})))
    \, .
\end{align}
In the scaling limit $\epsilon \to 0$, we replace the wave function with the Airy functions and the Riemann sum with the integral, which provides the integral form of the higher Airy kernel~\eqref{eq:Airy_kernel_int}.
\end{proof}

As mentioned in Remark~\ref{rem:parity}, for odd $p$, one can make the wave functions either even of odd function.
Therefore, we correspondingly consider the symmetrized kernel as follows,
\begin{align}
    p = 2n - 1 \ : \
    K_{p\text{-Airy}}(x,y) =
    \frac{1}{2} \int_{0}^\infty \dd{z} \qty[
    \operatorname{Ai}_p (x + z) \widetilde{\operatorname{Ai}}_p (y + z)
    - \operatorname{Ai}_p (x - z) \widetilde{\operatorname{Ai}}_p (y - z)
    ]
    \, ,
    \label{eq:Airy_kernel_int2}
\end{align}
so that, for $p = 2n - 1$, we have
\begin{align}
    K_{p\text{-Airy}}(x,y) = K_{p\text{-Airy}}(-x,-y)
    \, .
    \label{eq:kernel_symmetry}
\end{align}

\subsubsection{Christoffel--Darboux-type formula}

We consider the Christoffel--Darboux-type formula for the higher Airy kernel.
In order to derive this formula, we start with the following property.
\begin{lemma}
For odd $p = 2n - 1$, the $p$-Airy kernel \eqref{eq:Airy_kernel_int2} obeys the following,
\begin{align}
    x K_{p\text{-Airy}}(x,0) & = \sum_{m=1}^{\lfloor p/2 \rfloor} \operatorname{Ai}_p^{(p-2m)}(x) \widetilde{\operatorname{Ai}}_p^{(2m-1)}(0)
    \, .
    \label{eq:xK(x,0)}
\end{align}
\end{lemma}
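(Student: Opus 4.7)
The plan is to reduce $x\, K_{p\text{-Airy}}(x,0)$ to a boundary value at the origin by (i) recognizing the integrand as a total $z$-derivative via the higher Airy ODEs, and (ii) exploiting the odd parity $\widetilde{\operatorname{Ai}}_p(-z) = -\widetilde{\operatorname{Ai}}_p(z)$ to eliminate half of the resulting terms, much like the classical Christoffel--Darboux reduction.

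First I would symmetrize the kernel representation. Applying $\widetilde{\operatorname{Ai}}_p(-z) = -\widetilde{\operatorname{Ai}}_p(z)$ in \eqref{eq:Airy_kernel_int2} and then substituting $z \mapsto -z$ in the second integral, one can rewrite
\[
2\, K_{p\text{-Airy}}(x,0) \;=\; \int_{-\infty}^{\infty} \operatorname{sgn}(z)\, \operatorname{Ai}_p(x+z)\, \widetilde{\operatorname{Ai}}_p(z)\, dz,
\]
interpreted as a regularized improper integral. Next, from the ODEs of Section~\ref{sec:higher_Airy_fn} I extract $(x+z)\operatorname{Ai}_p(x+z) = (-1)^n \operatorname{Ai}_p^{(p)}(x+z)$ and $z\,\widetilde{\operatorname{Ai}}_p(z) = (-1)^{n-1} \widetilde{\operatorname{Ai}}_p^{(p)}(z)$, and write $x = (x+z) - z$ to obtain
\[
x\, \operatorname{Ai}_p(x+z)\, \widetilde{\operatorname{Ai}}_p(z) \;=\; (-1)^n \bigl[ \operatorname{Ai}_p^{(p)}(x+z)\, \widetilde{\operatorname{Ai}}_p(z) + \operatorname{Ai}_p(x+z)\, \widetilde{\operatorname{Ai}}_p^{(p)}(z) \bigr].
\]
For odd $p$, the Lagrange identity for the operator $\partial_z^p$ recognizes the bracket as $\partial_z P(x,z)$, where the bilinear concomitant is
\[
P(x,z) \;=\; \sum_{k=0}^{p-1} (-1)^k \operatorname{Ai}_p^{(p-1-k)}(x+z)\, \widetilde{\operatorname{Ai}}_p^{(k)}(z).
\]

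Then I would multiply the symmetrized kernel formula by $x$ and integrate by parts. Since $\partial_z \operatorname{sgn}(z) = 2\delta(z)$ distributionally,
\[
2\, x\, K_{p\text{-Airy}}(x,0) \;=\; (-1)^n \bigl[ P(x,+\infty) + P(x,-\infty) - 2\, P(x,0) \bigr].
\]
Granting the vanishing of the boundary terms at infinity (see below), this reduces to $x\, K_{p\text{-Airy}}(x,0) = (-1)^{n+1} P(x,0)$. Oddness of $\widetilde{\operatorname{Ai}}_p$ implies $\widetilde{\operatorname{Ai}}_p^{(k)}(0) = 0$ for even $k$, so only the indices $k = 2m-1$ with $m = 1, \ldots, \lfloor p/2 \rfloor$ survive in $P(x,0)$; collecting the alternating signs then yields the claimed sum.

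The main technical obstacle is justifying $P(x,\pm\infty) = 0$. From the saddle-point asymptotics of \S\ref{sec:Ai_asymp1}--\S\ref{sec:Ai_asymp2}, for odd $p$ each factor $\operatorname{Ai}_p^{(j)}(x+z)\, \widetilde{\operatorname{Ai}}_p^{(k)}(z)$ with $j+k = p-1$ is bounded in amplitude but not integrable at infinity: the exponential damping of $\operatorname{Ai}_p$ cancels the exponential growth of $\widetilde{\operatorname{Ai}}_p$, leaving a slowly oscillating factor. The symmetrization in \eqref{eq:Airy_kernel_int2} is precisely designed so that the leading non-oscillatory parts at $z \to +\infty$ and $z \to -\infty$ cancel, and the remaining oscillatory parts vanish under Abel regularization. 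A fully rigorous implementation would extract these leading asymptotics and verify the cancellation factor by factor; a cleaner alternative would be to introduce a convergence factor $e^{-\varepsilon |z|}$, perform all manipulations, and send $\varepsilon \to 0^+$ at the end.
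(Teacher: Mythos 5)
Your proof follows essentially the same route as the paper's: both use the higher Airy ODEs to recognize $x$ times the integrand as a total $z$-derivative, integrate by parts so that only the boundary contribution at $z=0$ survives, and invoke the oddness of $\widetilde{\operatorname{Ai}}_p$ to kill the even-derivative terms there; your full-line $\operatorname{sgn}(z)$ integral with the Lagrange bilinear concomitant is just a repackaging of the paper's half-line integral against the antisymmetrized combination $\mathscr{A}(z)=\frac{1}{2}\qty(\operatorname{Ai}_p(z+x)-\operatorname{Ai}_p(z-x))$. The vanishing of the boundary terms at infinity, which you rightly flag as the delicate point, is asserted just as briskly in the paper (via the remark that $\mathscr{A}^{(k)}(z)\to 0$), so your treatment is at the same level of rigor as the original.
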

\begin{proof}
Denoting
\begin{align}
    \mathscr{A}(z) = \frac{1}{2} \qty( \operatorname{Ai}_p(z+x) - \operatorname{Ai}_p(z-x) )
    \, , \qquad
    \mathscr{A}^{(k)}(z) = \dv[k]{}{z} \mathscr{A}(z)
    \, ,
\end{align}
we have
\begin{align}
    x \mathscr{A}(z) & = \frac{1}{2} \qty( (z+x) \operatorname{Ai}_p(z+x) - (z-x) \operatorname{Ai}_p(z-x) ) - \frac{z}{2} \qty( \operatorname{Ai}_p(z+x) - \operatorname{Ai}_p(z-x) )
    \nonumber \\
    & = \mathscr{A}^{(p)}(z) - z \mathscr{A}(z)
    \, .
\end{align}
Then, we obtain
\begin{align}
    x K_{p\text{-Airy}}(x,0) 
    & =  \int_0^\infty \dd{z} x \mathscr{A}(z) \widetilde{\operatorname{Ai}}_p(z)
    \nonumber \\
    & = \int_0^\infty \dd{z} \qty[ \mathscr{A}^{(p)}(z) \widetilde{\operatorname{Ai}}_p(z) + \mathscr{A}(z) \widetilde{\operatorname{Ai}}_p^{(p)}(z)]
    \, .
\end{align}
Applying the integration by parts recursively, we have
\begin{align}
    x K_{p\text{-Airy}}(x,0) 
    & = 
    \sum_{k=1}^p (-1)^{k+1} \qty[\mathscr{A}^{(p-k)}(z) \widetilde{\operatorname{Ai}}_p^{(k-1)}(z)]_0^\infty
    \, .
\end{align}
We remark $\mathscr{A}^{(k)}(z) \xrightarrow{z \to \infty} 0$.
In addition, $\operatorname{Ai}_p^{(2k)}(x)$ and $\widetilde{\operatorname{Ai}}_p^{(2k-1)}(x)$ are even, $\operatorname{Ai}_p^{(2k-1)}(x)$ and $\widetilde{\operatorname{Ai}}_p^{(2k)}(x)$ are odd, so that $\widetilde{\operatorname{Ai}}_p^{(2k)}(0) = 0$.
Hence, the odd $k$ terms are zero in the summation, which proves the result \eqref{eq:xK(x,0)}.
\end{proof}

Then, we arrive at the following formula.
\begin{proposition}[Christoffel--Darboux-type formula]\label{prop:CD_formula}
The $p$-Airy kernel defined in~\eqref{eq:Airy_kernel_int} is written in terms of the higher Airy functions,
\begin{align}
    K_{p\text{-Airy}}(x,y) 
    & = \frac{1}{x - y} \sum_{k=1}^p (-1)^k \operatorname{Ai}_p^{(p-k)}(x) \widetilde{\operatorname{Ai}}_p^{(k-1)}(y)
    \, .
    \label{eq:CD_formula}
\end{align}
\end{proposition}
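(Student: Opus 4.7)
The plan is to realize the factor $x-y$ as a total $z$-derivative under the integral sign by using the differential equations obeyed by the two higher Airy functions, namely $\operatorname{Ai}_p^{(p)}(w)=w\operatorname{Ai}_p(w)$ and $\widetilde{\operatorname{Ai}}_p^{(p)}(w)=(-1)^p w\,\widetilde{\operatorname{Ai}}_p(w)$. This is exactly the classical Christoffel--Darboux trick, and it extends the $y=0$ computation already carried out in \eqref{eq:xK(x,0)} to generic $y$.

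Starting from Definition~\ref{def:higher_Airy_kernel}, I would write $x\operatorname{Ai}_p(x+z)=\operatorname{Ai}_p^{(p)}(x+z)-z\operatorname{Ai}_p(x+z)$ and $y\,\widetilde{\operatorname{Ai}}_p(y+z)=(-1)^p\widetilde{\operatorname{Ai}}_p^{(p)}(y+z)-z\,\widetilde{\operatorname{Ai}}_p(y+z)$, so that the $z$-proportional pieces cancel and the integrand of $(x-y)K_{p\text{-Airy}}(x,y)$ reduces to
\begin{equation*}
\operatorname{Ai}_p^{(p)}(x+z)\widetilde{\operatorname{Ai}}_p(y+z)-(-1)^p\operatorname{Ai}_p(x+z)\widetilde{\operatorname{Ai}}_p^{(p)}(y+z).
\end{equation*}
I would then verify the telescoping identity
\begin{equation*}
\frac{d}{dz}\sum_{k=1}^{p}(-1)^{k-1}\operatorname{Ai}_p^{(p-k)}(x+z)\widetilde{\operatorname{Ai}}_p^{(k-1)}(y+z)=\operatorname{Ai}_p^{(p)}(x+z)\widetilde{\operatorname{Ai}}_p(y+z)-(-1)^p\operatorname{Ai}_p(x+z)\widetilde{\operatorname{Ai}}_p^{(p)}(y+z),
\end{equation*}
which follows by distributing the derivative: every intermediate pair $\operatorname{Ai}_p^{(p-j)}\widetilde{\operatorname{Ai}}_p^{(j)}$ with $1\le j\le p-1$ appears twice with opposite signs and cancels, and only the $k=1$ and $k=p$ endpoints survive.

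Once this identity is in hand, the integral over $z\in[0,\infty)$ collapses to a boundary term. The $z=0$ endpoint produces $-\sum_{k=1}^{p}(-1)^{k-1}\operatorname{Ai}_p^{(p-k)}(x)\widetilde{\operatorname{Ai}}_p^{(k-1)}(y)=\sum_{k=1}^{p}(-1)^{k}\operatorname{Ai}_p^{(p-k)}(x)\widetilde{\operatorname{Ai}}_p^{(k-1)}(y)$, which is exactly the right-hand side of \eqref{eq:CD_formula} after dividing by $x-y$.

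The main obstacle is controlling the boundary term at $z=\infty$. For even $p$ this is straightforward: the asymptotics of \S\ref{sec:Ai_asymp1} and \S\ref{sec:Ai_asymp2} show that both factors decay super-exponentially for large positive argument, so the $z\to\infty$ contribution vanishes. For odd $p$ the product $\operatorname{Ai}_p(x+z)\widetilde{\operatorname{Ai}}_p(y+z)$ is only oscillatory at infinity and the kernel must be interpreted through the symmetrized definition~\eqref{eq:Airy_kernel_int2}. Here I would apply the same calculation to the $\operatorname{Ai}_p(x-z)\widetilde{\operatorname{Ai}}_p(y-z)$ piece, picking up an overall sign from $\partial_z f(x-z)=-f'(x-z)$, and use the fact that $\operatorname{Ai}_p$ is even and $\widetilde{\operatorname{Ai}}_p$ is odd (so that $\operatorname{Ai}_p^{(p-k)}\widetilde{\operatorname{Ai}}_p^{(k-1)}$ has definite parity) to show that the two oscillatory $z\to\infty$ contributions cancel, while the two $z=0$ contributions double up to reproduce~\eqref{eq:CD_formula}. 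Making this cancellation rigorous, rather than treating it in the principal-value/distributional sense consistent with the saddle-point estimates of \S\ref{sec:Ai_asymp1}--\S\ref{sec:Ai_asymp2}, is the delicate point; at the formal level the algebra is clean and mirrors the proof of the lemma around \eqref{eq:xK(x,0)}.
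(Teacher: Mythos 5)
Your algebraic core is exactly the one the paper relies on: the telescoping identity
\begin{equation*}
\dv{}{z}\sum_{k=1}^{p}(-1)^{k-1}\operatorname{Ai}_p^{(p-k)}(x+z)\,\widetilde{\operatorname{Ai}}_p^{(k-1)}(y+z)
=(x-y)\operatorname{Ai}_p(x+z)\,\widetilde{\operatorname{Ai}}_p(y+z)
\end{equation*}
is precisely what the authors use to check that the right-hand side of \eqref{eq:CD_formula} obeys the same relation \eqref{eq:kernel_z_derivative} as the kernel, and your even-$p$ argument (boundary term at $z=\infty$ killed by the stretched-exponential decay of $\operatorname{Ai}_{2n}$ and its derivatives) is complete; it is essentially the proof of \cite{LeDoussal:2018dls}, which the paper simply cites. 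The divergence is in how the remaining freedom is fixed for odd $p$, and there your proposal has a genuine gap.

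For odd $p$ you integrate the telescoping identity over $[0,\infty)$ and must show that the $z\to\infty$ boundary contribution of the symmetrized kernel \eqref{eq:Airy_kernel_int2} vanishes. Writing $S_{\pm}(M)=\sum_{k=1}^{p}(-1)^{k-1}\operatorname{Ai}_p^{(p-k)}(x\pm M)\,\widetilde{\operatorname{Ai}}_p^{(k-1)}(y\pm M)$, you need $\lim_{M\to\infty}\qty(S_+(M)+S_-(M))=0$. Parity only converts $S_-(M)$ into $-\sum_{k}(-1)^{k-1}\operatorname{Ai}_p^{(p-k)}(M-x)\,\widetilde{\operatorname{Ai}}_p^{(k-1)}(M-y)$, i.e.\ into the same alternating sum at large \emph{positive} arguments shifted by $-x,-y$ instead of $+x,+y$; it does not by itself make the two contributions cancel. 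By the asymptotics of \S\ref{sec:Ai_asymp1}--\S\ref{sec:Ai_asymp2} each summand $\operatorname{Ai}_p^{(p-k)}(M+x)\widetilde{\operatorname{Ai}}_p^{(k-1)}(M+y)$ is an $O(1)$ oscillation (the amplitudes $M^{(p-k)/p}\cdot M^{(k-1)/p}\cdot M^{-1+1/p}$ combine to $M^{0}$), so $S_+(M)$ and $S_-(M)$ do not converge individually, and the vanishing of their sum rests on an asymptotic phase-matching between the two groups of terms inside the alternating $k$-sum --- essentially the statement that the Christoffel--Darboux numerator at fixed separation tends to the same oscillatory limit along the two shifted diagonals. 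Establishing this requires the full leading-order asymptotic expansion, not just the parity of $\operatorname{Ai}_p$ and $\widetilde{\operatorname{Ai}}_p$, and is considerably more delicate than your sketch indicates. The paper avoids the boundary analysis at generic $(x,y)$ entirely: since both sides of \eqref{eq:CD_formula} satisfy \eqref{eq:kernel_z_derivative}, they differ by a function $C(x-y)$, and $C\equiv0$ is deduced from the single slice $y=0$ via \eqref{eq:xK(x,0)}, where the vanishing $\widetilde{\operatorname{Ai}}_p^{(2k)}(0)=0$ and the decay of the symmetrized combination $\mathscr{A}^{(k)}$ make the boundary terms tractable. To close your argument you should either carry out the $M\to\infty$ cancellation explicitly from \eqref{eq:Ai_asymp1}--\eqref{eq:Ai_asymp2}, or adopt the paper's constant-of-integration step, which reduces the problem to the manageable point $y=0$.
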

\begin{proof}
The even case $p = 2n$ is already known~\cite{LeDoussal:2018dls}.
Hence, we focus on the odd case ($p = 2n - 1$) here.
From the definition, we have the relation
\begin{align}
    \dv{}{z} K_{p\text{-Airy}}(x+z,y+z)
    = - \operatorname{Ai}_p^{}(x+z) \widetilde{\operatorname{Ai}}_p^{}(y+z)
    \, .
    \label{eq:kernel_z_derivative}
\end{align}
This is also obtained from the RHS of \eqref{eq:CD_formula}.
Therefore, we have the following expression with the constant of integration,
\begin{align}
    K_{p\text{-Airy}}(x,y)
    = \frac{1}{x - y} \sum_{k=1}^p (-1)^k \operatorname{Ai}_p^{(p-k)}(x) \widetilde{\operatorname{Ai}}_p^{(k-1)}(y) + C(x,y) \, .
\end{align}
To be compatible with another relation, 
\begin{align}
    \qty( \pdv{}{x} + \pdv{}{y} ) K_{p\text{-Airy}}(x,y) = - \operatorname{Ai}_p^{}(x) \widetilde{\operatorname{Ai}}_p^{}(y) 
    \, ,
\end{align}
we have the constraint for the constant $(\partial_x + \partial_y)C(x,y) = 0 \implies C(x,y) = C(x-y)$.
Then, putting $y = 0$ and from the relation \eqref{eq:xK(x,0)}, we conclude that the constant is zero, $C(x) = 0$. 
This proves the equality \eqref{eq:CD_formula}.
\if0
\begin{align}
    \dv{}{z} K_{p\text{-Airy}}(x+z,y+z)
    & = \frac{1}{x - y} \sum_{k=1}^p (-1)^k \qty( \operatorname{Ai}_p^{(p-k+1)}(x+z) \widetilde{\operatorname{Ai}}_p^{(k-1)}(y+z) + \operatorname{Ai}_p^{(p-k)}(x+z) \widetilde{\operatorname{Ai}}_p^{(k)}(y+z) ) 
    \nonumber \\
    & = \frac{1}{x - y} \qty( - \operatorname{Ai}_p^{(p)}(x+z) \widetilde{\operatorname{Ai}}_p^{}(y+z) + (-1)^p \operatorname{Ai}_p^{}(x+z) \widetilde{\operatorname{Ai}}_p^{(p)}(y+z))
    \nonumber \\
    & = - \operatorname{Ai}_p^{}(x+z) \widetilde{\operatorname{Ai}}_p^{}(y+z)
    \, .
\end{align}
\fi
\end{proof}

\begin{example}\label{ex:Airy_kernels}
The lower degree examples of the $p$-Airy kernel are given as follows, 
\begin{subequations}
\begin{align}
    K_{2\text{-Airy}}(x,y) & = \frac{- \operatorname{Ai}_2'(x) \operatorname{Ai}_2(y) + \operatorname{Ai}_2(x) \operatorname{Ai}'_2(y)}{x - y}
    \, , \\
    K_{3\text{-Airy}}(x,y) & = \frac{-\operatorname{Ai}_3''(x) \widetilde{\operatorname{Ai}}_3(y) + {\operatorname{Ai}}_3'(x) \widetilde{\operatorname{Ai}}_3'(y) - {\operatorname{Ai}}_3(x) \widetilde{\operatorname{Ai}}_3''(y)}{x-y}
    \, , \\
    K_{4\text{-Airy}}(x,y) & = \frac{- \operatorname{Ai}_4'''(x) \operatorname{Ai}_4(y) + \operatorname{Ai}_4''(x) \operatorname{Ai}_4'(y) - \operatorname{Ai}_4'(x) \operatorname{Ai}_4''(y) + \operatorname{Ai}_4(x) \operatorname{Ai}_4'''(y)}{x - y}
    \, , \\
    K_{5\text{-Airy}}(x,y) & = \frac{-\operatorname{Ai}_5''''(x) \widetilde{\operatorname{Ai}}_5(y) + {\operatorname{Ai}}_5'''(x) \widetilde{\operatorname{Ai}}_5'(y) - {\operatorname{Ai}}_5''(x) \widetilde{\operatorname{Ai}}_5''(y) + \operatorname{Ai}_5'(x) \widetilde{\operatorname{Ai}}_5'''(y) - \operatorname{Ai}_5(x) \widetilde{\operatorname{Ai}}_5''''(y)}{x-y}
    \, .
\end{align}
\end{subequations}
The kernel for $p = 2$ is the standard Airy kernel~\cite{Bowick:1991ky,Forrester:1993vtx,Nagao:1993JPSJ}, and the case with $p = 3$ is called the Pearcey kernel.
The case $p = 5$ has been also mentioned in~\cite{Brezin:1998zz,Brezin:1998PREb}.
\end{example}

\begin{corollary}[Diagonal value of the kernel]
The diagonal value of the kernel is given as follows,
\begin{align}
    K_{p\text{-Airy}}(x,x) 
%    & = \int_0^\infty \dd{z} \operatorname{Ai}_p (x + z) \widetilde{\operatorname{Ai}}_p (x + z)
%    \nonumber \\
    & = - x \operatorname{Ai}_p(x) \widetilde{\operatorname{Ai}}_p(x) - \sum_{k=1}^{p-1} (-1)^k \operatorname{Ai}_p^{(p-k)}(x) \widetilde{\operatorname{Ai}}_p^{(k)}(x)
%    & = \sum_{k=1}^p (-1)^k \operatorname{Ai}_p^{(p-k+1)}(x) \widetilde{\operatorname{Ai}}_p^{(k-1)}(x)
%    = \sum_{k=1}^p (-1)^{k-1} \operatorname{Ai}_p^{(p-k)}(x) \widetilde{\operatorname{Ai}}_p^{(k)}(x)
    \, ,
    \label{eq:kernel_diag}    
\end{align}
and obeys the relation
\begin{align}
    \dv{}{x} K_{p\text{-Airy}}(x,x) = - \operatorname{Ai}_p(x) \widetilde{\operatorname{Ai}}_p(x) \, .
    \label{eq:kernel_diagonal_der}
\end{align}
\end{corollary}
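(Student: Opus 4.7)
The plan is to deduce the diagonal formula~\eqref{eq:kernel_diag} by taking the limit $y \to x$ inside the Christoffel--Darboux-type identity of Proposition~\ref{prop:CD_formula}, and to obtain the derivative relation~\eqref{eq:kernel_diagonal_der} as an immediate specialization of~\eqref{eq:kernel_z_derivative} that was already established during the proof of that proposition. Writing $K_{p\text{-Airy}}(x,y) = N(x,y)/(x-y)$ with
\begin{align*}
N(x,y) = \sum_{k=1}^p (-1)^k \operatorname{Ai}_p^{(p-k)}(x) \widetilde{\operatorname{Ai}}_p^{(k-1)}(y) \, ,
\end{align*}
the smoothness of $K_{p\text{-Airy}}$ across the diagonal --- manifest from the integral representation in Definition~\ref{def:higher_Airy_kernel} together with its symmetrized odd-$p$ version~\eqref{eq:Airy_kernel_int2} --- forces $N(x,x) = 0$. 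A first-order Taylor expansion of $N$ in the second argument about $y = x$ then yields
\begin{align*}
K_{p\text{-Airy}}(x,x) \;=\; - \pdv{}{y} N(x,y) \Big|_{y=x} \;=\; - \sum_{k=1}^p (-1)^k \operatorname{Ai}_p^{(p-k)}(x) \widetilde{\operatorname{Ai}}_p^{(k)}(x) \, .
\end{align*}

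The remaining step is to isolate the $k = p$ contribution, which equals $(-1)^{p+1} \operatorname{Ai}_p(x) \widetilde{\operatorname{Ai}}_p^{(p)}(x)$, and to eliminate the top derivative using the $p$-Airy differential equation from \S\ref{sec:higher_Airy_fn}. In the sign convention enforced by the Christoffel--Darboux normalization one has $\widetilde{\operatorname{Ai}}_p^{(p)}(x) = (-1)^p x \widetilde{\operatorname{Ai}}_p(x)$, so this contribution collapses to $-x \operatorname{Ai}_p(x) \widetilde{\operatorname{Ai}}_p(x)$, and combining with the remaining terms $1 \le k \le p-1$ produces exactly the claimed formula~\eqref{eq:kernel_diag}.

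For the derivative identity~\eqref{eq:kernel_diagonal_der}, I would simply specialize~\eqref{eq:kernel_z_derivative} to $y = x$,
\begin{align*}
\dv{}{z} K_{p\text{-Airy}}(x+z, x+z) = - \operatorname{Ai}_p(x+z) \widetilde{\operatorname{Ai}}_p(x+z) \, ,
\end{align*}
and evaluate at $z = 0$, invoking the chain rule $\dv{}{x} K_{p\text{-Airy}}(x,x) = \dv{}{z} K_{p\text{-Airy}}(x+z, x+z)\big|_{z=0}$. The only mildly delicate point in the whole argument is the sign bookkeeping when invoking the ODE for $\widetilde{\operatorname{Ai}}_p^{(p)}$; a quick consistency check against the low-degree kernels displayed in Example~\ref{ex:Airy_kernels} confirms that the signs work out as asserted.
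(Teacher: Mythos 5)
Your argument is correct and follows essentially the same route the paper intends for this corollary: expand the Christoffel--Darboux numerator to first order about the diagonal, eliminate the top derivative $\widetilde{\operatorname{Ai}}_p^{(p)}$ via the $p$-Airy ODE in the normalization consistent with Proposition~\ref{prop:CD_formula}, and read off \eqref{eq:kernel_diagonal_der} by setting $y=x$ in \eqref{eq:kernel_z_derivative}. The only substantive difference is how the vanishing of the numerator on the diagonal is justified: you infer $N(x,x)=0$ from the smoothness of the integral representation of the kernel across $x=y$, whereas the paper isolates this as the separate identity \eqref{eq:diagonal_id} and proves it directly by showing its $x$-derivative vanishes and fixing the integration constant from the decay of $\operatorname{Ai}_p^{(p-k)}\operatorname{Ai}_p^{(k-1)}$ at infinity (even $p$) or the parity conditions $\operatorname{Ai}_p^{(2k-1)}(0)=\widetilde{\operatorname{Ai}}_p^{(2k)}(0)=0$ (odd $p$); both justifications are legitimate once the Christoffel--Darboux formula is in hand.
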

This relation will be useful to study the asymptotic behavior of the density function in the scaling limit.
See \S\ref{sec:density_fn}.

In order that the relation~\eqref{eq:kernel_diag} holds, we shall have the following relation.
\begin{lemma}
The following relation holds,
\begin{align}
    \sum_{k=1}^p (-1)^k \operatorname{Ai}_p^{(p-k)}(x) \widetilde{\operatorname{Ai}}_p^{(k-1)}(x) = 0  
    \, .
    \label{eq:diagonal_id}
\end{align}
\end{lemma}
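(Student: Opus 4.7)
The plan is to show that $f(x) := \sum_{k=1}^p (-1)^k \operatorname{Ai}_p^{(p-k)}(x) \widetilde{\operatorname{Ai}}_p^{(k-1)}(x)$ is constant by computing $f'(x)$, and then to determine that constant separately in the even-$p$ and odd-$p$ cases.

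First I would differentiate term-by-term and reindex. The sum obtained from differentiating $\operatorname{Ai}_p^{(p-k)}$ (increment the first index) is, after shifting $k\to k+1$, exactly the negative of the sum obtained from differentiating $\widetilde{\operatorname{Ai}}_p^{(k-1)}$ (increment the second index). Hence the sum telescopes and only the two endpoint contributions survive:
\begin{align*}
f'(x) \;=\; -\operatorname{Ai}_p^{(p)}(x)\,\widetilde{\operatorname{Ai}}_p(x) \;+\; (-1)^p\, \operatorname{Ai}_p(x)\,\widetilde{\operatorname{Ai}}_p^{(p)}(x).
\end{align*}

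Next I would invoke the differential equations established earlier for $\operatorname{Ai}_p$ and $\widetilde{\operatorname{Ai}}_p$. In the even case $p=2n$, one has $\widetilde{\operatorname{Ai}}_p=\operatorname{Ai}_p$ and both satisfy the same $p$-th order equation, so the two boundary terms coincide up to the factor $(-1)^p=1$ and cancel. In the odd case, $\operatorname{Ai}_p$ and $\widetilde{\operatorname{Ai}}_p$ satisfy equations with opposite signs in front of $d^p/dx^p$, so $\operatorname{Ai}_p^{(p)}=\sigma\,x\operatorname{Ai}_p$ and $\widetilde{\operatorname{Ai}}_p^{(p)}=-\sigma\,x\widetilde{\operatorname{Ai}}_p$ for some sign $\sigma\in\{\pm1\}$; combined with $(-1)^p=-1$ this again produces a cancellation. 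Thus $f'(x)\equiv 0$ and $f$ is a constant; the main nuisance here is bookkeeping of the $(-1)^n$, $(-1)^p$ signs coming from Definition~\ref{def:Airy_def}.

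To fix the constant I would argue by parity/pairing:

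\textbf{Even $p=2n$.} Since $\widetilde{\operatorname{Ai}}_p=\operatorname{Ai}_p$, apply the involution $k\leftrightarrow p+1-k$ on the summation index. It has no fixed point (no integer solves $2k=p+1$ when $p$ is even) and maps the factor $\operatorname{Ai}_p^{(p-k)}\operatorname{Ai}_p^{(k-1)}$ to itself, while transforming the sign $(-1)^k$ to $(-1)^{p+1-k}=-(-1)^k$. The paired terms therefore cancel, so $f\equiv 0$.

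\textbf{Odd $p=2n-1$.} Evaluate at $x=0$: since $\operatorname{Ai}_p$ is even and $\widetilde{\operatorname{Ai}}_p$ is odd, $\operatorname{Ai}_p^{(j)}(0)=0$ for $j$ odd and $\widetilde{\operatorname{Ai}}_p^{(j)}(0)=0$ for $j$ even. For each $k\in\{1,\dots,p\}$ the two indices $p-k$ and $k-1$ have opposite parities (their sum $p-1$ is even, so they are both even or both odd; but wait—this needs to be of opposite parity, let me note: $p-k$ is odd iff $k$ is even, while $k-1$ is even iff $k$ is odd, so for every $k$ exactly one of $\operatorname{Ai}_p^{(p-k)}(0)$ and $\widetilde{\operatorname{Ai}}_p^{(k-1)}(0)$ vanishes). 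Every term of $f(0)$ is therefore zero, giving $f(0)=0$ and hence $f\equiv 0$.

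The only genuinely delicate step is the careful sign tracking in the derivative computation; the pairing and parity arguments for the constant are then immediate and exploit exactly the structural symmetries listed above in Remark~\ref{rem:parity}.
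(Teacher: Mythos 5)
Your proof is correct, and its skeleton is the same as the paper's: differentiate, telescope the sum, use the $p$-th order ODEs to kill the two surviving boundary terms, then pin down the constant of integration. Your sign bookkeeping in the derivative step checks out against the differential equations $(-1)^{n-1}\operatorname{Ai}_{2n}^{(2n)}=z\operatorname{Ai}_{2n}$, $(-1)^{n}\operatorname{Ai}_{2n-1}^{(2n-1)}=z\operatorname{Ai}_{2n-1}$, $(-1)^{n-1}\widetilde{\operatorname{Ai}}_{2n-1}^{(2n-1)}=z\widetilde{\operatorname{Ai}}_{2n-1}$, and your odd-$p$ evaluation at $x=0$ via parity is exactly what the paper does (the slightly muddled parenthetical resolves correctly: $p-k$ and $k-1$ have the same parity, and whichever parity that is, one of the two factors vanishes at the origin). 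The one place you genuinely diverge is the even-$p$ case: the paper fixes the constant by letting $x\to\infty$ and invoking the exponential decay of $\operatorname{Ai}_{2n}^{(j)}$, whereas you use the index involution $k\mapsto p+1-k$, under which the product is symmetric but the sign $(-1)^k$ flips and there is no fixed point. That observation is cleaner than the paper's: it shows the even-$p$ sum vanishes identically as an algebraic antisymmetry, with no appeal to asymptotics and in fact no need for the derivative computation at all in that case. Both routes are valid; yours trades an analytic input (decay at infinity, which itself rests on the heuristic saddle-point asymptotics of \S\ref{sec:Ai_asymp1}) for a purely combinatorial one.
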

\begin{proof}
The derivative of the LHS is zero,
\begin{align}
    \sum_{k=1}^p (-1)^k \dv{}{x} \qty( \operatorname{Ai}_p^{(p-k)}(x) \widetilde{\operatorname{Ai}}_p^{(k-1)}(x) )
    = 0
    \, .
\end{align}
The constant of integration can be fixed by $\operatorname{Ai}_p^{(p-k)}(x) \widetilde{\operatorname{Ai}}_p^{(k-1)}(x) = \operatorname{Ai}_p^{(p-k)}(x) {\operatorname{Ai}}_p^{(k-1)}(x) \xrightarrow{x \to \infty} 0$ for even $p$ and $\operatorname{Ai}_p^{(2k-1)}(0) = \widetilde{\operatorname{Ai}}_p^{(2k)}(0) = 0$ for odd $p$.
\end{proof}

\subsubsection{Density function}\label{sec:density_fn}

Due to the determinantal formula of the correlation function~\eqref{eq:correlation_det_formula}, the one-point function, which is called the density function, is given by the diagonal value of the kernel.
Hence, the scaled density function is obtained from the $p$-Airy kernel~\eqref{eq:kernel_diag}, $\rho(x) \to \bar{\rho}(x) = K_{p\text{-Airy}}(x,x)$, as in~\eqref{eq:kernel_scaling_lim}.
We have the following asymptotic behavior of the density function.
\begin{proposition}[Asymptotic behavior of the density function]\label{prop:density_fn_asymptotic}
Define the density function from the $p$-Airy kernel,
\begin{align}
    \bar{\rho}(x) = K_{p\text{-Airy}}(x,x)
    \, .
\end{align}
The asymptotic behaviors of the density function are given as follows:
\begin{subequations}\label{eq:density_asympt}
\begin{align}
    p = 2n \ : \
    \bar{\rho}(x) \xrightarrow{x \to + \infty} \ &
    \frac{1}{2 p \pi x} \exp\qty(\frac{-2p}{p+1} \sin \qty(\frac{\pi}{p}) x^{1 + \frac{1}{p}}) \qty[ \frac{1}{\sin \qty(\pi/p)} + \cos\qty(\frac{2p}{p+1} \cos \qty(\frac{\pi}{p}) x^{1 + \frac{1}{p}})  ]
    \\
    \xrightarrow{x \to - \infty} \ & \frac{|x|^{\frac{1}{p}}}{\pi} - \frac{\cos \qty( \frac{2p}{p+1} |x|^{1 + \frac{1}{p}} )}{2 p \pi |x|}
    \\
    p = 2n-1 \ : \
    \bar{\rho}(x) \xrightarrow{x \to \pm \infty} \ & \frac{\cos\qty(\frac{\pi}{2p})}{\pi} |x|^{\frac{1}{p}} - \frac{\cos\qty( \frac{2p}{p+1} \cos\qty(\frac{\pi}{2p}) |x|^{1 + \frac{1}{p}})}{2 p \pi \cos\qty(\frac{\pi}{2p}) |x|}
\end{align}
\end{subequations}
\end{proposition}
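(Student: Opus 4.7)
The natural starting point is the first-order relation \eqref{eq:kernel_diagonal_der}, $\dv{}{x}\bar{\rho}(x) = -\operatorname{Ai}_p(x)\widetilde{\operatorname{Ai}}_p(x)$. The plan is to substitute the saddle-point asymptotics for $\operatorname{Ai}_p$ and $\widetilde{\operatorname{Ai}}_p$ derived in \S\ref{sec:Ai_asymp1}--\S\ref{sec:Ai_asymp2} into the right-hand side, integrate once to recover $\bar{\rho}(x)$ up to an integration constant, and then fix that constant using the alternative representation \eqref{eq:kernel_diag}.

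First I would compute $\operatorname{Ai}_p(x)\widetilde{\operatorname{Ai}}_p(x)$ asymptotically. For even $p$ at $x\to+\infty$, squaring \eqref{eq:Ai_asymp1} gives a prefactor $\frac{2}{p\pi} x^{-1+1/p}\,e^{-2a x^{1+1/p}}$ (with $a=\tfrac{p}{p+1}\sin(\pi/p)$) times $\cos^2(\cdot)$; the half-angle identity splits this into a constant piece and an oscillatory piece $\cos(2b x^{1+1/p}+\pi/p)$ with $b=\tfrac{p}{p+1}\cos(\pi/p)$. For odd $p$, multiplying \eqref{eq:Ai_asymp1} by \eqref{eq:Ai_asymp2} and applying the product-to-sum formulas yields the same structural form: a power times an exponential damping times a combination of $1$ and $\cos$ terms. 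The $x\to-\infty$ regime for even $p$ is treated analogously, except the exponential damping factor is replaced by $1$, so only the oscillatory and constant pieces remain.

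Next I would integrate $-\operatorname{Ai}_p(x)\widetilde{\operatorname{Ai}}_p(x)$ by a single integration by parts, treating $t^{-1+1/p}$ as the slowly varying factor and $-e^{-\phi(t)}$ (with $\phi(t) = c\,t^{1+1/p}$ for complex $c$ of positive real part) as the rapidly varying one. The oscillatory-with-damping piece is handled by writing $\cos$ or $\sin$ as the real or imaginary part of a complex exponential $e^{-(2a\mp 2\ii b)t^{1+1/p}}$, to which the same integration by parts applies because the exponent still has positive real part. Both integrations produce a boundary term $\propto x^{-1}\,e^{-2a x^{1+1/p}}\,(\text{oscillatory})$ and the coefficient $\tfrac{1}{2a(1+1/p)} = \tfrac{1}{2\sin(\pi/p)}$ (respectively its complex analogue that collapses via the identity $\sin(\pi/p)-\ii\cos(\pi/p) = -\ii e^{\ii\pi/p}$) gives precisely the prefactor $\tfrac{1}{2p\pi x}\bigl[\tfrac{1}{\sin(\pi/p)}+\cos(\cdots)\bigr]$ in \eqref{eq:density_asympt}.

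The constant of integration is pinned down using \eqref{eq:kernel_diag}: the $-x\operatorname{Ai}_p(x)\widetilde{\operatorname{Ai}}_p(x)$ term is exponentially or power-suppressed in the relevant directions, while the finite sum $\sum_{k=1}^{p-1}(-1)^k\operatorname{Ai}_p^{(p-k)}(x)\widetilde{\operatorname{Ai}}_p^{(k)}(x)$ carries the $|x|^{1/p}/\pi$ polynomial envelope. Differentiating the asymptotic expressions for $\operatorname{Ai}_p$ and $\widetilde{\operatorname{Ai}}_p$ brings down a factor $\sim (\pm)^{1/p}|x|^{1/p}$ per derivative from the phase; summing over $k$, the pairwise products interfere so that the purely oscillatory phases cancel in the leading term, leaving the constant $\cos(\pi/(2p))$ (odd $p$) or $1$ (even $p$). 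This fixes $\bar{\rho}(x) \to 0$ as $x \to +\infty$ for even $p$ and produces the leading power $|x|^{1/p}/\pi$ (or $\cos(\pi/(2p))|x|^{1/p}/\pi$) at the other asymptote; the parity relation \eqref{eq:kernel_symmetry} for odd $p$ then extends the $x\to+\infty$ result to $x\to-\infty$.

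The main obstacle is bookkeeping in the odd-$p$ case: although $\widetilde{\operatorname{Ai}}_p(x)$ itself is unbounded at infinity, the product $\operatorname{Ai}_p\widetilde{\operatorname{Ai}}_p$ is decaying, so the leading terms of the two asymptotics must cancel against each other, and this cancellation must be tracked cleanly through the product-to-sum rearrangement to identify the correct phase $\tfrac{2p}{p+1}\cos(\pi/(2p))|x|^{1+1/p}$ in the surviving oscillatory term. A secondary nuisance is that both the polynomial piece (from the sum in \eqref{eq:kernel_diag}) and the oscillatory piece (from integrating) must combine consistently, which requires matching the constant extracted in Step 3 against the boundary value of the integral in Step 2.
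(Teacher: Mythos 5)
Your proposal follows essentially the same route as the paper: both start from the relation $\bar\rho'(x) = -\operatorname{Ai}_p(x)\,\widetilde{\operatorname{Ai}}_p(x)$ of \eqref{eq:kernel_diagonal_der}, substitute the saddle-point asymptotics of \S\ref{sec:Ai_asymp1}--\S\ref{sec:Ai_asymp2} (using the half-angle/product-to-sum identities to split constant and oscillatory pieces), and integrate, the paper simply stating the integration where you spell out the integration by parts. One minor caveat in your supplementary constant-fixing step: the leading envelope $|x|^{1/p}$ already arises from integrating the non-oscillatory part of $\bar\rho'$, and the term $-x\operatorname{Ai}_p(x)\widetilde{\operatorname{Ai}}_p(x)$ in \eqref{eq:kernel_diag} is of that same order $|x|^{1/p}$ for odd $p$ rather than power-suppressed, but this does not affect the main line of the argument.
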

\begin{proof}
From the relation~\eqref{eq:kernel_diagonal_der}, we obtain
\begin{align}
    \dv{\bar{\rho}(x)}{x} = - \operatorname{Ai}_p(x) \widetilde{\operatorname{Ai}}_p(x)
    \, .
\end{align}
Then, applying the asymptotic behavior of the Airy functions obtained in \S\ref{sec:Ai_asymp1} and \S\ref{sec:Ai_asymp2}, we obtain the asymptotic behavior of the density function.
\paragraph{Even case: $p = 2n$}
In this case, the derivative of the density function is given as follows,
\begin{subequations}\label{eq:rho_derivative_asymptotic}
    \begin{align}
        \dv{\bar{\rho}(x)}{x} \xrightarrow{x \to + \infty} & 
        - \frac{2}{p \pi} x^{-1+\frac{1}{p}}
        \exp\qty(\frac{-2p}{p+1} \sin \qty(\frac{\pi}{p}) x^{1 + \frac{1}{p}}) \cos^2 \qty( \frac{p}{p+1} \cos \qty(\frac{\pi}{p}) x^{1 + \frac{1}{p}} - \frac{\pi}{4} + \frac{\pi}{2p} )
        \nonumber \\ &
        = - \frac{1}{p \pi} x^{-1+\frac{1}{p}}
        \exp\qty(\frac{-2p}{p+1} \sin \qty(\frac{\pi}{p}) x^{1 + \frac{1}{p}}) \qty[1 + \sin \qty( \frac{2p}{p+1} \cos \qty(\frac{\pi}{p}) x^{1 + \frac{1}{p}} + \frac{\pi}{p} )]
        \\
        \xrightarrow{x \to - \infty} & 
    - \frac{2}{p \pi} |x|^{-1+\frac{1}{p}}
    \cos^2 \qty( \frac{p}{p+1} |x|^{1 + \frac{1}{p}} - \frac{\pi}{4} )
    \nonumber \\ &
    = - \frac{1}{p \pi} |x|^{-1+\frac{1}{p}} \qty[ 1 +
    \sin \qty( \frac{2p}{p+1} |x|^{1 + \frac{1}{p}} ) ]
    \end{align}

\paragraph{Odd case: $p = 2n-1$}
For the odd case, we have the following asymptotic behavior,
\begin{align}
%    - \operatorname{Ai}_{p = 2n-1}(x) \widetilde{\operatorname{Ai}}_{p = 2n-1}(x) 
    \dv{\bar{\rho}(x)}{x}
    \xrightarrow{x \to \pm \infty} 
    - \frac{1}{p \pi} |x|
    ^{-1+\frac{1}{p}} \qty( \sin\qty( \frac{2p}{p+1} \cos \qty(\frac{\pi}{2p}) |x|^{1 + \frac{1}{p}} ) + \cos\qty(\frac{\pi}{2p}) )
    \, .
\end{align}
\end{subequations}
Integrating these expressions, we obtain~\eqref{eq:density_asympt}.
\end{proof}

\begin{remark}
This expression is compatible with the results for even $p$~\cite{LeDoussal:2018dls}.
\end{remark}
\begin{remark}
In the odd case $p = 2n-1$, the density function is an even function, while it is asymmetric for the even case $p = 2n$.
Asymptotic values of the density function are non-zero except for $\bar{\rho}(x)$ at $x \to +\infty$ for $p = 2n$.
\end{remark}

\subsubsection{Modification of the kernel}

As shown in \eqref{eq:mod_ODE}, we can modify the differential equations to contain the subleading terms.
In this case, the corresponding modified Airy functions are given as follows,
\begin{align}
    \operatorname{Ai}_p(z) = \int_\gamma \frac{\dd{x}}{2 \pi \ii} \ee^{W(x,z)}
    \, , \qquad
    \widetilde{\operatorname{Ai}}_p(z) = \int_{\tilde{\gamma}} \frac{\dd{x}}{2 \pi \ii} \ee^{\widetilde{W}(x,z)}
    \, ,
\end{align}
where the potential functions obey
\begin{align}
    \pdv{}{x}W(x,z) = \sum_{k=1}^p \rho_k x^k - z
    \, , \qquad
    \pdv{}{x}\widetilde{W}(x,z) = \sum_{k=1}^p \tilde\rho_k x^k - z
\end{align}
for the coefficients $(\rho_k,\tilde{\rho}_k)_{k=1,\ldots,p}$ obtained in \eqref{eq:mod_ODE}.
We remark that we apply the same integration contours $(\gamma,\tilde{\gamma})$ as before \eqref{eq:Ai_contour}.

Under this modification, the kernel is accordingly modified as follows,
\begin{align}
    K_{p\text{-Airy}}(x,y) = \frac{1}{x - y} \sum_{\ell = 1}^p \sum_{k = 1}^{\ell} (-1)^k \rho_\ell \operatorname{Ai}_p^{(\ell - k)}(x) \widetilde{\operatorname{Ai}}_p^{(k-1)}(y)
    \, ,
\end{align}
which obeys the same relation as \eqref{eq:kernel_z_derivative},
\begin{align}
    & 
    \dv{}{z} K_{p\text{-Airy}}(x+z,y+z) 
    \nonumber \\
    & = \frac{1}{x-y} \sum_{\ell = 1}^p \sum_{k = 1}^{\ell} (-1)^k \rho_\ell \qty( \operatorname{Ai}_p^{(\ell + 1 - k)}(x+z) \widetilde{\operatorname{Ai}}_p^{(k-1)}(y+z) + \operatorname{Ai}_p^{(\ell - k)}(x+z) \widetilde{\operatorname{Ai}}_p^{(k)}(y+z) )
    \nonumber \\
    & = - \frac{1}{x-y} \sum_{\ell = 1}^p \qty( \rho_\ell \operatorname{Ai}_p^{(\ell)}(x+z) \widetilde{\operatorname{Ai}}_p(y+z) + \tilde{\rho}_\ell \operatorname{Ai}_p(x+z) \widetilde{\operatorname{Ai}}_p^{(\ell)}(y+z) )
    \nonumber \\
    & = - \operatorname{Ai}_p^{}(x+z) \widetilde{\operatorname{Ai}}_p^{}(y+z)
    \, .
\end{align}
The diagonal value of the kernel is given by
\begin{align}
    K_{p\text{-Airy}}(x,x) 
    & = \sum_{\ell = 1}^p \sum_{k = 1}^{\ell} (-1)^k \rho_\ell \operatorname{Ai}_p^{(\ell + 1 - k)}(x) \widetilde{\operatorname{Ai}}_p^{(k-1)}(x)
    \nonumber \\
    & = \sum_{\ell = 1}^p \sum_{k = 1}^{\ell} (-1)^{k+1} \rho_\ell \operatorname{Ai}_p^{(\ell - k)}(x) \widetilde{\operatorname{Ai}}_p^{(k)}(x)
    \, ,
\end{align}
and from the regularity of the kernel at $x = y$, we have the relation,
\begin{align}
    \sum_{\ell = 1}^n \sum_{k = 1}^{\ell} (-1)^k \rho_\ell \operatorname{Ai}_p^{(\ell - k)}(x) \widetilde{\operatorname{Ai}}_p^{(k-1)}(x) = 0 
    \, .
\end{align}

\section{Gap probability}\label{sec:gap_prob}

Once given a kernel, the probability such that no ``particle'' is found in the interval $I \subset \mathbb{R}$, called the gap probability, is given by the Fredholm determinant,
\begin{align}
    \det (1 - \hat{K}) = \sum_{k=0}^\infty \frac{(-1)^k}{k!} \int_I \dd{x}_1 \cdots \int_I \dd{x}_k \det_{1 \le i, j \le k} K(x_i, x_j)
    \, ,
\end{align}
where $\hat{K}$ is the integral operator defined in~\eqref{eq:K_op_I}.
In this Section, we explore the Fredholm determinant associated with the higher Airy kernel for both $p$ even and odd.
Hence, it provides the gap probabilities of the determinantal point process defined by the limiting kernel.
We will consider the specific cases $I = [-s,s]$  in \S\ref{sec:level_spacing} and $I = [s,\infty)$ in \S\ref{sec:single_H}.

\subsection{Operator formalism}\label{sec:op_formalism}

\paragraph{Definitions}

We introduce the operator formalism to study the Fredholm determinant.
For any integrable function $f(x)$, we define the corresponding state $\ket{f}$ in the Hilbert space,
\begin{align}
    f(x) = \braket{x}{f} = \braket{f}{x}
    \, .
\end{align}
In this formalism, the delta function is given by
\begin{align}
    \braket{x}{y} = \delta(x - y)
    \, .
\end{align}
The identity operator is constructed by the complete set,
\begin{align}
    1 = \int \dd{x} \ket{x} \bra{x} 
    \, .
\end{align}
The trace of the operator is given by
\begin{align}
    \Tr \mathcal{O} = \int \dd{x} \bra{x} \mathcal{O} \ket{x} = \int \dd{x} \mathcal{O}(x,x)
    \, .
\end{align}

\paragraph{Coordinate and differential operators}

We define the coordinate and differential operators,
\begin{align}
    x f(x) = \bra{x} Q \ket{f} = \bra{f} Q \ket{x}
    \, , \qquad
    f'(x) = \bra{x} P \ket{f} = - \bra{f} P \ket{x}
    \, ,
\end{align}
whose kernels are given by
\begin{align}
    \bra{x} Q \ket{y} = x \delta(x - y)
    \, , \qquad
    \bra{x} P \ket{y} = \pdv{}{x} \delta(x - y)
    \, .
\end{align}

\paragraph{Kernels and wave functions}

We define the operator $K$ corresponding to the $p$-Airy kernel,
\begin{align}
    K(x,y) \equiv K_{p\text{-Airy}}(x,y) = \bra{x} K \ket{y}   
    \, .
\end{align}
We then introduce the states associated with the wave functions,
\begin{align}
    \phi_k(x) = \dv[k]{\phi(x)}{x} =  \braket{x}{\phi_k}
    \, , \qquad
    \psi_k(x) = \dv[k]{\psi(x)}{x} = \braket{\psi_k}{x}
    \, ,
\end{align}
and we put
\begin{align}
    \ket{\phi_0} = \ket{\phi}
    \, ,  \qquad
    \ket{\psi_0} = \ket{\psi}
    \, .
\end{align}
For even $p$, they are equivalent to each other, $\ket{\phi_k} = \ket{\psi_k}$.
We remark the relation
\begin{align}
    P \ket{\phi_k} = \ket{\phi_{k+1}}
    \, , \qquad
    P \ket{\psi_k} = \ket{\psi_{k+1}}
%    \quad \qty( \iff - \bra{\psi_k} P = \bra{\psi_{k+1}} \ ) 
 \, ,
\end{align}
and also
\begin{align}
 \bra{\psi_k} = (-1)^k \bra{\psi} P^k
 \, .
 \label{eq:psi_k_op}
\end{align}

From the relation
\begin{align}
    \bra{x} \comm{Q}{K} \ket{y} = (x-y) K(x,y)
    \, ,
\end{align}
we obtain a useful formula
\begin{align}
    \comm{Q}{K} = \sum_{k=1}^p (-1)^k \ket{\phi_{p-k}}\bra{\psi_{k-1}}
    \, .
\end{align}

\subsection{Fredholm determinant}\label{sec:Fredholm_det}

We consider the intervals $I = \bigsqcup_{j=1}^m [a_{2j-1},a_{2j}] \subset \mathbb{R}$.
We denote the parameter set by $\underline{a} = (a_j)_{j = 1,\ldots,2m}$.
\begin{definition}
Let $\chi_I(\cdot)$ be the characteristic function associated with the interval $I$:
\begin{align}
    \chi_I(x) = 
    \begin{cases}
    1 & (x \in I) \\ 0 & (x \not\in I)
    \end{cases}
\end{align}
and the corresponding projection operator 
\begin{align}
 \bra{x} \Pi_I \ket{f} = f(x) \chi_I(x)
 \, , \qquad \Pi_I^2 = \Pi_I
 \, .
 \label{eq:projection_op}
\end{align}
We define the restricted kernels and the corresponding operators
\begin{subequations}\label{eq:K_op_I}
\begin{align}
    \bra{x} \hat{K} \ket{y} = K(x,y) \chi_I(y)
    \, , \qquad
    \hat{K} = K \Pi_I
    \, , \\
    \bra{x} \check{K} \ket{y} = \chi_I(x) K(x,y)
    \, , \qquad
    \check{K} = \Pi_I K 
    \, .
\end{align}
\end{subequations}
\end{definition}

 \begin{lemma}
  We have the following commutation relation,
 \begin{align}
  \comm{\Pi_I}{P} = \sum_{j=1}^{2m} (-1)^j \ket{a_j} \bra{a_j}
  \, .
  \label{eq:Pi_P_commutator}
 \end{align}
 \end{lemma}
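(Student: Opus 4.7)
The plan is to verify the identity at the level of kernels by applying both sides to a test function and reading off the coefficients. Since $P$ acts as differentiation and $\Pi_I$ acts as multiplication by $\chi_I$, for any smooth $f$ one has
\begin{align}
 (\Pi_I P f)(x) &= \chi_I(x) f'(x)
 \, , \\
 (P \Pi_I f)(x) &= \dv{}{x}\bigl(\chi_I(x) f(x)\bigr) = \chi_I'(x) f(x) + \chi_I(x) f'(x)
 \, ,
\end{align}
so that $([\Pi_I,P] f)(x) = -\chi_I'(x) f(x)$.

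The next step is to compute $\chi_I'$ in the distributional sense. Since $I=\bigsqcup_{j=1}^m[a_{2j-1},a_{2j}]$, the characteristic function $\chi_I$ is a sum of indicator functions of the intervals $[a_{2j-1},a_{2j}]$, each contributing $\delta(x-a_{2j-1}) - \delta(x-a_{2j})$ to its derivative. Collecting terms and labelling by the endpoint index $j$,
\begin{align}
 \chi_I'(x) = \sum_{j=1}^{2m} (-1)^{j+1} \delta(x-a_j)
 \, ,
\end{align}
where the sign alternates because the odd-indexed $a_{2j-1}$ are left endpoints (jump $+1$) and the even-indexed $a_{2j}$ are right endpoints (jump $-1$). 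Substituting this back gives
\begin{align}
 ([\Pi_I,P] f)(x) = \sum_{j=1}^{2m} (-1)^{j} \delta(x-a_j) f(a_j)
 \, ,
\end{align}
and in bra-ket notation, since $\braket{x}{a_j} = \delta(x-a_j)$ and $\braket{a_j}{f} = f(a_j)$, this is precisely the kernel of $\sum_{j=1}^{2m}(-1)^j \ket{a_j}\bra{a_j}$ applied to $\ket{f}$. The identity \eqref{eq:Pi_P_commutator} follows because the equality holds for an arbitrary test function.

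The only subtlety in this argument is the sign bookkeeping for $\chi_I'$, and keeping track of the minus sign coming from moving $\chi_I'$ to the other side of the commutator; beyond that the proof is a direct distributional computation with no additional input needed.
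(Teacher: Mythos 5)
Your proof is correct and follows essentially the same route as the paper: both compute the distributional derivative of $\chi_I(x)f(x)$, identify $\chi_I'(x)=\sum_{j=1}^{2m}(-1)^{j+1}\delta(x-a_j)$ from the alternation of left and right endpoints, and read off the commutator kernel. The only cosmetic difference is that the paper expresses $P\Pi_I$ in terms of $\Pi_I P$ and rearranges, whereas you compute $[\Pi_I,P]f$ directly; the sign bookkeeping agrees in both versions.
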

  \begin{proof}
   Taking the derivative of the expression \eqref{eq:projection_op}, we obtain
  \begin{align}
   \bra{x} P \Pi_I \ket{f}
   & = \dv{}{x} \qty( f(x) \chi_I(x) )
   \nonumber \\
   & = f'(x) \chi_I(x) - \sum_{j=1}^{2m} (-1)^j f(a_j) \delta(x - a_j)
   \nonumber \\
   & = \bra{x}\Pi_I P \ket{f} - \sum_{j=1}^{2m} (-1)^j \braket{x}{a_j} \braket{a_j}{f}
   \, ,
  \end{align}
   which yields the relation \eqref{eq:Pi_P_commutator}.
  \end{proof}

We consider the Fredholm determinant associated with the $p$-Airy kernel,
\begin{align}
    %F(I) = 
    F(\underline{a}) %= F_p(\underline{a}) 
    = \det ( 1 - \hat{K} ) = \det\qty( 1 - \check{K} )
    \, .
    \label{eq:Fredholm_det}
\end{align}
Taking the logarithm of the Fredholm determinant, we obtain
\begin{align}
    \log F(\underline{a}) = - \sum_{n=1}^\infty \frac{1}{n} \Tr \hat{K}^n = - \sum_{n=1}^\infty \frac{1}{n} \Tr \check{K}^n
    \, ,
    \label{eq:Fredholm_det_log}
\end{align}
where we have
\begin{align}
    \Tr \hat{K}^n = \Tr \check{K}^n & = \int_{I} \dd{x_1} \cdots \int_I \dd{x_n} K(x_1,x_2) K(x_2,x_3) \cdots K(x_{n},x_1)
    \, .
\end{align}

\subsubsection{Resolvent operators}

We define the resolvent operators from the kernels,
\begin{align}
    R = \sum_{n=1}^\infty \hat{K}^n = \frac{\hat{K}}{1 - \hat{K}}
    \, , \qquad
    L = \sum_{n=1}^\infty \check{K}^n = \frac{\check{K}}{1 - \check{K}}
    \, .
\end{align}
One can show the following formulas for the resolvent,
\begin{subequations}
\begin{align}
    \comm{X}{(1 - \hat{K})^{-1}} & = (1 - \hat{K})^{-1} \comm{X}{\hat{K}} (1 - \hat{K})^{-1}
    \, , \\
    \partial (1 - \hat{K})^{-1} & = (1 - \hat{K})^{-1} \qty(\partial \hat{K}) (1 - \hat{K})^{-1} 
    \, .
\end{align}
\end{subequations}
The same relation holds for $\check{K}$.
Then, the parameter dependence of the Fredholm determinant is given as follows.
\begin{lemma}
The total derivative of the Fredholm determinant is
\begin{align}
    \dd{ \log F(\underline{a}) } 
    & = \sum_{j=1}^{2m} (-1)^{j-1} R(a_j,a_j) \dd{a_j}
    = \sum_{j=1}^{2m} (-1)^{j-1} L(a_j,a_j) \dd{a_j}
    \nonumber \\
    & = \sum_{j=1}^{2m} (-1)^{j-1} H_j \dd{a}_j
    \label{eq:logF_dependence}
\end{align}
where we define the Hamiltonians,
\begin{align}
    H_j = % H_j(a_j) = 
    R(a_j,a_j) = L(a_j,a_j)
    \, , \qquad
    j = 1, \ldots, 2m
    \, .
\end{align}
\end{lemma}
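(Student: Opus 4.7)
The plan is to differentiate $\log F$ with respect to each boundary parameter $a_j$ and recognise the result as a diagonal value of the resolvent kernel, following the classical Tracy--Widom strategy. Starting from the cumulant expansion~\eqref{eq:Fredholm_det_log} and using cyclicity of the trace, one obtains the infinite-dimensional Jacobi formula
\begin{align}
\dd{\log F(\underline{a})} = - \Tr\qty[ (1-\hat{K})^{-1} \dd{\hat{K}} ],
\end{align}
which is essentially the derivative identity for $(1-\hat{K})^{-1}$ listed just above the statement.

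The second step is to compute $\dd{\hat{K}} = K\,\dd{\Pi_I}$, which uses that the $p$-Airy kernel $K$ itself is independent of $\underline{a}$. Writing $\chi_I(x)=\sum_{j=1}^m\bigl[\Theta(x-a_{2j-1})-\Theta(x-a_{2j})\bigr]$, the distributional identity $\partial_{a_j}\chi_I(x)=(-1)^j\delta(x-a_j)$ (with $j$ odd corresponding to left endpoints and $j$ even to right endpoints) yields the rank-one variation
\begin{align}
\partial_{a_j}\Pi_I=(-1)^j\ket{a_j}\bra{a_j}, \qquad \partial_{a_j}\hat{K}=(-1)^j K\ket{a_j}\bra{a_j}.
\end{align}
Substituting and collapsing the trace against the rank-one operator gives
\begin{align}
\partial_{a_j}\log F = (-1)^{j-1}\bra{a_j}(1-\hat{K})^{-1} K\ket{a_j}.
\end{align}

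Finally, I would identify this matrix element with $R(a_j,a_j)$. Introducing $\tilde R:=(1-\hat{K})^{-1}K$, which solves the Fredholm equation $\tilde R = K + \hat{K}\tilde R$, right-multiplication by $\Pi_I$ yields $\tilde R\,\Pi_I = (1-\hat{K})^{-1}\hat{K} = R$, so the kernel of $R$ on $\mathbb{R}\times I$ coincides with the smooth kernel of $\tilde R$, and $R(a_j,a_j):=\tilde R(a_j,a_j)$ makes sense as the boundary limit from the interior of $I$. The second equality follows from the rearrangement identity $(1-\hat{K})^{-1}K=K(1-\check{K})^{-1}$ (itself an instance of $(1-AB)^{-1}A=A(1-BA)^{-1}$), which gives $\Pi_I\tilde R = \check K(1-\check K)^{-1} = L$, so $L(a_j,a_j)=\tilde R(a_j,a_j)$ under the same boundary convention. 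The main subtlety, which I expect to be the point requiring care, is precisely this boundary evaluation: $R$ and $L$ are naturally defined as operators on $L^2(I)$ whose kernels are ambiguous at the endpoints of $I$, and the identification $R(a_j,a_j)=L(a_j,a_j)=\tilde R(a_j,a_j)$ should be read as a statement about the continuous extension of the regular kernel $\tilde R$ to $\partial I$.
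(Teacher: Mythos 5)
Your proposal is correct and follows essentially the same route as the paper: differentiate the projection operator to get the rank-one variation $\partial_{a_j}\hat{K}=(-1)^j K\ket{a_j}\bra{a_j}$, insert it into the Jacobi/trace formula, and collapse the trace to the diagonal resolvent value. The only difference is that you make explicit, via the regular kernel $\tilde R=(1-\hat{K})^{-1}K$ and the identity $(1-AB)^{-1}A=A(1-BA)^{-1}$, the endpoint identification $R(a_j,a_j)=L(a_j,a_j)$ that the paper handles implicitly through $\ket{a_j}=\Pi_I\ket{a_j}$.
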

\begin{proof}
The parameter derivative of the projection operator is given by
\begin{align}
    \pdv{}{a_j} \Pi_I = (-1)^j \ket{a_j} \bra{a_j}
    \, ,
\end{align}
which yields
%\begin{subequations}
\begin{align}
    \pdv{}{a_j} \hat{K} = (-1)^j K \ket{a_j} \bra{a_j}
    \, , \qquad
    \pdv{}{a_j} \check{K} = (-1)^j \ket{a_j} \bra{a_j} K 
    \, .     
    \label{eq:K_derivative_aj}
\end{align}
%\end{subequations}
Then, the parameter derivative of the Fredholm determinant is given by
\begin{subequations}
\begin{align}
    \pdv{}{a_j} \log \det (1 - \hat{K}) & = - \Tr \qty(1 - \hat{K})^{-1} \pdv{\hat{K}}{a_j}
    \nonumber \\
    & = (-1)^{j-1} \Tr R \ket{a_j} \bra{a_j} = (-1)^{j-1} R(a_j,a_j)
    \, , \\
    \pdv{}{a_j} \log \det (1 - \check{K}) & = - \Tr \qty(1 - \check{K})^{-1} \pdv{\check{K}}{a_j}
    \nonumber \\
    & = (-1)^{j-1} \Tr \ket{a_j} \bra{a_j} L = (-1)^{j-1} L(a_j,a_j)    
    \, .
\end{align}
\end{subequations}
Taking into account all the contributions of the parameters $(a_j)_{j=1,\ldots,2m}$, we obtain~\eqref{eq:logF_dependence}.
\end{proof}

\begin{remark}
For $x,y \in I$, we have
\begin{align}
    R(x,y) = L(x,y)
    \, .
\end{align}
Hence, we have
\begin{align}
    R(a_i,a_j) = L(a_i,a_j)
    \, ,  \qquad
    i,j = 1,\ldots,2m \, ,
    \label{eq:Rij=Lij}
\end{align}
so that we denote
\begin{align}
    R_{ij} := R(a_i,a_j) = L(a_i,a_j) \, .
\end{align}
\end{remark}

The expression~\eqref{eq:logF_dependence} means that the parameter dependence of the Fredholm determinant is fixed by the diagonal values of the resolvent kernels, that we call the Hamiltonians.
In fact, the Fredholm determinant plays a role of the isomonodromic $\tau$-function, and the parameters $(a_j)_{j=1,\ldots,2m}$ are interpreted as the corresponding time variables in this context~\cite{Jimbo:1981zz,Jimbo:1981II,Jimbo:1981III}.
See also~\cite{Harnard:2021}.

\subsubsection{Auxiliary wave functions}

As shown above, we shall evaluate the Hamiltonians to describe the Fredholm determinant.
For this purpose, we introduce auxiliary wave functions, $(\underline{\sfq},\underline{\sfp}) = (\sfq_k,\sfp_k)_{k=0,\ldots,p-1}$,
\begin{subequations}
\begin{align}
    \sfq_k(x) = \bra{x} (1 - \hat{K})^{-1} \ket{\phi_k}
    \, , \qquad
    \tilde{\sfq}_k(x) = \bra{x} (1 - \check{K})^{-1} \Pi_I \ket{\phi_k} = \sfq_k(x) \chi_I(x)
    \, , \\
    \sfp_k(x) = \bra{\psi_k} (1 - \check{K})^{-1} \ket{x}
    \, , \qquad
    \tilde{\sfp}_k(x) = \bra{\psi_k} \Pi_I (1 - \hat{K})^{-1} \ket{x} = \sfp_k(x) \chi_I(x)
    \, .
\end{align}
\end{subequations}
Using the resolvent kernels, we may write them as follows,
\begin{subequations}\label{eq:qp_RL}
\begin{align}
    \sfq_k(x) & = \phi_k(x) + \bra{x} R \ket{\phi_k}
    \, , \\
    \sfp_k(x) & = \psi_k(x) + \bra{\psi_k} L \ket{x}
    \, .
\end{align}
\end{subequations}
Then, we obtain the expression of the resolvent kernels using these auxiliary wave functions.
\begin{lemma}[Resolvent kernel via the auxiliary wave functions]
The resolvent kernels are written in terms of the auxiliary wave functions,
\begin{subequations}\label{eq:RL_xy}
\begin{align}
    R(x,y) & = \frac{1}{x - y} \sum_{k=1}^p (-1)^k \sfq_{p-k}(x) \tilde{\sfp}_{k-1}(y)
    \, , \\
    L(x,y) & = \frac{1}{x - y} \sum_{k=1}^p (-1)^k \tilde{\sfq}_{p-k}(x) {\sfp}_{k-1}(y)
    \, .
\end{align}
\end{subequations}
\end{lemma}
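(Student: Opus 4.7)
The plan is to derive both identities from a single operator commutator calculation using the Christoffel--Darboux-type formula and the resolvent identity. Observe that the formula $[Q,K] = \sum_{k=1}^{p}(-1)^{k}\ket{\phi_{p-k}}\bra{\psi_{k-1}}$ already established in the excerpt is the ``free'' ($\hat{K}=0$) version of what we want to prove: multiplying both sides by $\bra{x}\cdots\ket{y}$ and dividing by $x-y$ recovers the CD formula for $K$. The strategy is to promote this to the resolvent level.

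First I would write $1+R=(1-\hat{K})^{-1}$, so that $[Q,R]=[Q,(1-\hat{K})^{-1}]$, and apply the standard resolvent commutator identity
\begin{align*}
[Q,(1-\hat{K})^{-1}] = (1-\hat{K})^{-1}\,[Q,\hat{K}]\,(1-\hat{K})^{-1}.
\end{align*}
Since both $Q$ and $\Pi_I$ act diagonally in the position basis, they commute, so $[Q,\hat{K}]=[Q,K\Pi_I]=[Q,K]\,\Pi_I$. Substituting the known expression for $[Q,K]$ yields
\begin{align*}
[Q,R] = \sum_{k=1}^{p}(-1)^{k}\,(1-\hat{K})^{-1}\ket{\phi_{p-k}}\bra{\psi_{k-1}}\Pi_I(1-\hat{K})^{-1}.
\end{align*}
Sandwiching with $\bra{x}$ and $\ket{y}$, the left-hand side becomes $(x-y)R(x,y)$, while on the right the first factor produces $\sfq_{p-k}(x)=\bra{x}(1-\hat{K})^{-1}\ket{\phi_{p-k}}$ by definition, and the second factor produces $\tilde{\sfp}_{k-1}(y)=\bra{\psi_{k-1}}\Pi_I(1-\hat{K})^{-1}\ket{y}$. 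Dividing by $x-y$ gives the claimed formula for $R(x,y)$.

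The formula for $L(x,y)$ is obtained by the mirror computation starting from $1+L=(1-\check{K})^{-1}$. Because $\check{K}=\Pi_I K$, one now has $[Q,\check{K}]=\Pi_I[Q,K]$, so the projector ends up on the left of the dyadic sum. Taking matrix elements identifies the first factor as $\tilde{\sfq}_{p-k}(x)=\bra{x}(1-\check{K})^{-1}\Pi_I\ket{\phi_{p-k}}$ and the second as $\sfp_{k-1}(y)=\bra{\psi_{k-1}}(1-\check{K})^{-1}\ket{y}$, producing the stated expression.

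There is essentially no hard step: both identities reduce to a one-line resolvent commutator computation once one recognises that $Q$ commutes with $\Pi_I$ and that the CD formula is equivalent to the dyadic form of $[Q,K]$. The only point deserving care is the bookkeeping of which side of $\hat{K}$ (respectively $\check{K}$) the projector $\Pi_I$ sits on, which is precisely what distinguishes the tilded and untilded auxiliary wave functions in \eqref{eq:RL_xy}; checking that the two formulas are consistent with $R(x,y)=L(x,y)$ for $x,y\in I$ (as noted in \eqref{eq:Rij=Lij}) is a useful sanity check at the end.
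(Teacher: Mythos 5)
Your argument is correct and is essentially identical to the paper's proof: both compute $(x-y)R(x,y)=\bra{x}\comm{Q}{R}\ket{y}$ via the resolvent commutator identity, insert the dyadic form of $\comm{Q}{\hat{K}}$, and read off $\sfq_{p-k}(x)\tilde{\sfp}_{k-1}(y)$ (and mirror this for $L$). Your explicit remark that $Q$ commutes with $\Pi_I$, so the projector lands on the correct side in each case, is the same bookkeeping the paper performs implicitly.
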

\begin{proof}
This expression is obtained from the relation,
\begin{align}
    (x-y) \bra{x} R \ket{y} 
    & = \bra{x} \comm{Q}{R} \ket{y}
    \nonumber \\
    & = \bra{x} (1 - \hat{K})^{-1} \comm{Q}{\hat{K}} (1 - \hat{K})^{-1} \ket{y}
    \nonumber \\
    & = \sum_{k=1}^p (-1)^k \bra{x} (1 - \hat{K})^{-1} \ket{\phi_{p-k}} \bra{\psi_{k-1}} \Pi_I (1 - \hat{K})^{-1}  \ket{y}
    \nonumber \\
    & = \sum_{k=1}^p (-1)^k \sfq_{p-k}(x) \tilde{\sfp}_{k-1}(y)
    \, .
\end{align}
We may apply the same analysis to the other kernel $L(x,y)$.
\end{proof}

\begin{remark}
For $x \in I \ \qty(\iff \chi_I(x) = 1)$, the diagonal value of the resolvent kernel is given by
\begin{align}
    R(x,x) = L(x,x) & = \sum_{k=1}^p (-1)^k \sfq_{p-k}'(x) \sfp_{k-1}(x) = \sum_{k=1}^p (-1)^{k-1} \sfq_{p-k}(x) \sfp_{k-1}'(x)
    \, .
    \label{eq:RL_diagonal}
\end{align}
See also Lemma~\ref{lem:pq_rel}.
\end{remark}

\begin{lemma}[Parameter dependence of $(\sfq_k,\sfp_k)$]\label{lem:pq_par_dependence}
The auxiliary wave functions show the following parameter dependence,
\begin{align}
    \pdv{\sfq_k}{a_j} = (-1)^{j} R(x,a_j) \sfq_k(a_j)
    \, ,  \qquad
    \pdv{\sfp_k}{a_j} = (-1)^{j} \sfp_k(a_j) L(a_j,x) 
    \, .
\end{align}
\end{lemma}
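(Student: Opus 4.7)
The plan is to compute the derivatives directly from the definitions $\sfq_k(x) = \bra{x}(1-\hat{K})^{-1}\ket{\phi_k}$ and $\sfp_k(x) = \bra{\psi_k}(1-\check{K})^{-1}\ket{x}$. Since $\ket{\phi_k}$ and $\bra{\psi_k}$ carry no $\underline{a}$-dependence, the entire dependence sits inside the resolvents $(1-\hat{K})^{-1}$ and $(1-\check{K})^{-1}$, and I would just invoke the resolvent identity $\partial(1-\hat{K})^{-1} = (1-\hat{K})^{-1}(\partial\hat{K})(1-\hat{K})^{-1}$ (and its $\check{K}$-analogue) stated earlier in \S\ref{sec:Fredholm_det}, combined with the explicit formulas \eqref{eq:K_derivative_aj}.

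For the first identity, substituting $\partial_{a_j}\hat{K} = (-1)^j K\ket{a_j}\bra{a_j}$ gives
\begin{align*}
\pdv{\sfq_k}{a_j}(x)
&= \bra{x}(1-\hat{K})^{-1}\qty(\pdv{\hat{K}}{a_j})(1-\hat{K})^{-1}\ket{\phi_k} \\
&= (-1)^j \bra{x}(1-\hat{K})^{-1} K \ket{a_j}\,\bra{a_j}(1-\hat{K})^{-1}\ket{\phi_k}.
\end{align*}
The second factor is $\sfq_k(a_j)$ by definition. For the first, I would use $R = (1-\hat{K})^{-1}\hat{K} = (1-\hat{K})^{-1} K\Pi_I$; together with the natural convention $\Pi_I\ket{a_j} = \ket{a_j}$ for the endpoint $a_j \in \bar I$, this identifies that factor with $R(x,a_j)$, yielding the stated formula.

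The derivation for $\sfp_k$ is the mirror image: using $\partial_{a_j}\check{K} = (-1)^j \ket{a_j}\bra{a_j}K$ and the resolvent identity for $\check{K}$, one gets
\begin{align*}
\pdv{\sfp_k}{a_j}(x) = (-1)^j\,\bra{\psi_k}(1-\check{K})^{-1}\ket{a_j}\,\bra{a_j}K(1-\check{K})^{-1}\ket{x},
\end{align*}
and the two factors are then recognized as $\sfp_k(a_j)$ and $L(a_j,x)$ respectively, where the latter uses $L = \check{K}(1-\check{K})^{-1} = \Pi_I K(1-\check{K})^{-1}$ together with $\bra{a_j}\Pi_I = \bra{a_j}$.

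The whole argument is really just a one-line application of the standard resolvent identity, so there is no deep step; the only minor obstacle is the bookkeeping at the endpoints $a_j$, where one has to fix a convention so that $\Pi_I$ acts trivially on the boundary state $\ket{a_j}$ (consistent with the formula $\partial_{a_j}\Pi_I = (-1)^j \ket{a_j}\bra{a_j}$ used earlier). Once this convention is adopted, the two factors in each derivative are matched against the definitions in \eqref{eq:qp_RL} and \eqref{eq:K_op_I} without further calculation.
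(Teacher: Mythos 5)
Your proposal is correct and follows essentially the same route as the paper: apply the resolvent identity to $(1-\hat K)^{-1}$ (resp.\ $(1-\check K)^{-1}$), insert $\partial_{a_j}\hat K = (-1)^j K\ket{a_j}\bra{a_j}$, and use $\Pi_I\ket{a_j}=\ket{a_j}$ (the paper's relation \eqref{eq:Pi|a>}) to recognize the factors as $R(x,a_j)$ and $\sfq_k(a_j)$. The endpoint convention you flag is exactly the identity the paper states at the start of its proof, so there is no gap.
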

\begin{proof}
We notice the relation
\begin{align}
    \ket{a_j} = \Pi_I \ket{a_j}
    \, ,
    \label{eq:Pi|a>}
\end{align}
so that $K \ket{a_j} = \hat{K} \ket{a_j}$ and $\bra{a_j} K = \bra{a_j} \check{K}$.
Then, we obtain as follows,
\begin{align}
    \pdv{\sfq_k}{a_j}
    & = \bra{x} (1 - \hat{K})^{-1} {\pdv{ \hat{K} }{a_j} }
     (1 - \hat{K})^{-1} \ket{\phi_k}
    \nonumber \\
    & = (-1)^j \bra{x} (1 - \hat{K})^{-1} K \ket{a_j} \bra{a_j} (1 - \hat{K})^{-1} \ket{\phi_k}
    \nonumber \\
    & = (-1)^{j} R(x,a_j) \sfq_k(a_j)
    \, .
\end{align}
The derivatives of $\sfp_k$ are similarly obtained.
\end{proof}

Then, from the regularity of the resolvent kernel at $x = y$, we have the following relation, which is an analog of the relation \eqref{eq:diagonal_id}.
\begin{lemma}\label{lem:pq_rel}
The following relation holds for the auxiliary wave functions,
\begin{align}
    \sum_{k=1}^p (-1)^k \sfq_{p-k}(x) \sfp_{k-1}(x) = 0
    \, .
\end{align}
\end{lemma}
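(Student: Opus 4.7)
The plan is to extract the identity from the regularity of the resolvent kernel on the diagonal. Starting from the Christoffel--Darboux-type formula \eqref{eq:RL_xy},
\begin{align*}
R(x,y) = \frac{1}{x-y} \sum_{k=1}^p (-1)^k \sfq_{p-k}(x)\, \tilde{\sfp}_{k-1}(y),
\end{align*}
the prefactor $(x-y)^{-1}$ exhibits an apparent simple pole on the diagonal $x=y$. On the other hand, $R(x,y)$ itself is a continuous function of $(x,y)$: since the $p$-Airy kernel is smooth (see Proposition~\ref{prop:CD_formula}), the operator $\hat{K}$ is trace class on $L^2(I)$, and its resolvent $R = \hat K(1-\hat K)^{-1}$ (well-defined under the tacit hypothesis that $1-\hat K$ is invertible) has a continuous integral kernel. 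Consequently the numerator must vanish on the diagonal, giving
\begin{align*}
\sum_{k=1}^p (-1)^k \sfq_{p-k}(x)\, \tilde{\sfp}_{k-1}(x) = 0.
\end{align*}

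For $x \in I$, one has $\chi_I(x) = 1$, so $\tilde{\sfp}_{k-1}(x) = \sfp_{k-1}(x)\chi_I(x) = \sfp_{k-1}(x)$, and the claimed identity follows immediately. This is precisely the ``dressed'' analogue of the bare identity \eqref{eq:diagonal_id}: switching off the resolvent ($\hat{K}=0$) collapses $\sfq_k \mapsto \phi_k$ and $\sfp_k \mapsto \psi_k$, and recovers \eqref{eq:diagonal_id}. Equivalently, one may run the same argument with $L(x,y)$ in place of $R(x,y)$, using the second line of \eqref{eq:RL_xy} and the identity $\tilde{\sfq}_{p-k}(x) = \sfq_{p-k}(x)$ for $x\in I$.

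The only nontrivial point in the argument is the continuity of $R(x,y)$ across the diagonal, which is a standard Fredholm-theoretic fact given the smoothness of the underlying kernel. Everything else is purely algebraic. An alternative, more laborious route would be to mimic the proof of \eqref{eq:diagonal_id} by showing that the derivative of $F(x) := \sum_{k=1}^p (-1)^k \sfq_{p-k}(x)\sfp_{k-1}(x)$ vanishes and then pinning down the constant of integration; but the auxiliary wave functions do not obey clean ODEs like those of $\phi$, $\psi$ (because of the resolvent dressing and the jumps introduced by the projector, cf.\ Lemma~\ref{lem:pq_par_dependence} and \eqref{eq:Pi_P_commutator}), so this pathway is distinctly more intricate than the regularity argument proposed above.
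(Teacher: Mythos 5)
Your argument is correct on the interval where the lemma is actually used, but it is a genuinely different route from the paper's written proof. You extract the identity from the regularity of $R(x,y)$ at $x=y$ in the representation \eqref{eq:RL_xy} — which is, amusingly, exactly the heuristic the paper announces in the sentence introducing the lemma — whereas the paper's proof never touches the diagonal limit: it uses Lemma~\ref{lem:pq_par_dependence} to show that $\partial_{a_j}\sum_{k=1}^p(-1)^k\sfq_{p-k}(x)\sfp_{k-1}(x)=0$ for every $j$ (the two terms produced by differentiating $\sfq$ and $\sfp$ cancel identically once $R(x,a_j)$ and $L(a_j,x)$ are rewritten via \eqref{eq:RL_xy}), and then deforms $I\to\emptyset$, where $\sfq_k\to\phi_k$, $\sfp_k\to\psi_k$ and the sum reduces to the bare identity \eqref{eq:diagonal_id}. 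The trade-offs: the paper's deformation argument is purely algebraic, needs no continuity input, and yields the identity for \emph{all} $x\in\mathbb{R}$; your argument is shorter but (i) rests on the joint continuity of $R(x,y)$ across the diagonal, which you assert rather than establish (trace-classness is not the relevant property — what one actually needs is that $\hat K=K\Pi_I$ has a continuous kernel away from the endpoints and that $(1-\hat K)^{-1}$ preserves this, e.g.\ via $R=\hat K+\hat K(1-\hat K)^{-1}\hat K$), and (ii) only delivers the identity for $x$ in the interior of $I$, extended to $\bar I$ by continuity, and says nothing for $x\notin\bar I$ since there $\tilde{\sfp}_{k-1}(y)$ vanishes near $y=x$ and the pole argument is vacuous. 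Since the paper invokes the lemma only at points of $\bar I$ (the diagonal resolvent \eqref{eq:RL_diagonal}, the tracelessness of $A_j$ at $x=a_j$, and $\mathsf{I}_p=0$), this restriction is harmless here, but it should be stated. Finally, your closing dismissal of the ``more laborious'' alternative slightly misrepresents it: the paper differentiates in the endpoint parameters $a_j$, not in $x$, so the messy $x$-derivatives of \eqref{eq:der_pa} never enter and the computation is in fact a two-line cancellation.
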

\begin{proof}
Applying the parameter derivative of the auxiliary functions shown in Lemma~\ref{lem:pq_par_dependence}, we can show that there is no parameter dependence,
 \begin{align}
  \pdv{}{a_j} \sum_{k=1}^p (-1)^k \sfq_{p-k}(x) \sfp_{k-1}(x) = 0
  \, , \qquad
  j = 1,\ldots,2m
  \, .
 \end{align}
 Hence, we can modify them to obtain $I = \emptyset$, where all the auxiliary functions become zero.
 Therefore, the constant of integration turns out to be zero.
\end{proof}

%\paragraph{$x$-dependence of $(\sfq_k,\sfp_k)$}
\begin{lemma}[$x$-dependence of $(\sfq_k,\sfp_k)$]
The derivatives of the auxiliary wave functions are given as follows,
\begin{subequations}\label{eq:der_pa}
\begin{align}
    \pdv{}{x} \sfq_k(x)
    & = \sfq_{k+1}(x) - \sfq_0(x) u_k - \sum_{j=1}^{2m} (-1)^j R(x,a_j) \sfq_k(a_j) 
    \, , \\
    \pdv{}{x} \sfp_k(x)
    & = \sfp_{k+1}(x) - v_k \sfp_0(x) - \sum_{j=1}^{2m} (-1)^j \sfp_k(a_j) L(a_j,x) 
    \, ,
\end{align}
\end{subequations}
where we define auxiliary functions $(\underline{u},\underline{v}) = (u_k,v_k)_{k=0,\ldots,p-1}$ of the parameters $(a_j)_{j = 1,\ldots,2m}$,
\begin{subequations}\label{eq:uv_integral}
\begin{align}
 u_k & = \bra{\psi} \Pi_I (1 - \hat{K})^{-1} \ket{\phi_k}
 = \int_I \dd{x} \psi(x) \phi_k(x) + \int_I \dd{x} \int_I \dd{y} \psi(x) R(x,y) \phi_k(y)
 \, , \\
 v_k & = \bra{\psi_k} (1 - \check{K})^{-1} \Pi_I \ket{\phi}
 = \int_I \dd{x} \psi_k(x) \phi(x) + \int_I \dd{x} \int_I \dd{y} \psi_k(x) R(x,y) \phi_(y)
 \, .
\end{align}
\end{subequations}
We remark
\begin{align}
    u_0 = v_0 \, .
    \label{eq:u0=v0}
\end{align}
\end{lemma}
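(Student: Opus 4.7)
The plan is to differentiate $\sfq_k(x) = \bra{x}(1-\hat K)^{-1}\ket{\phi_k}$ by pushing the derivative through. Using $\partial_x \bra{x} = \bra{x} P$, I get
\[
\pdv{\sfq_k(x)}{x} = \bra{x} P (1-\hat K)^{-1}\ket{\phi_k} = \bra{x}(1-\hat K)^{-1} P \ket{\phi_k} + \bra{x}(1-\hat K)^{-1}\comm{P}{\hat K}(1-\hat K)^{-1}\ket{\phi_k},
\]
where I used $\comm{P}{(1-\hat K)^{-1}} = (1-\hat K)^{-1}\comm{P}{\hat K}(1-\hat K)^{-1}$. The first term equals $\sfq_{k+1}(x)$ since $P\ket{\phi_k}=\ket{\phi_{k+1}}$.

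For the second term I would expand $\comm{P}{\hat K} = \comm{P}{K}\Pi_I + K\comm{P}{\Pi_I}$. The second piece is handled by \eqref{eq:Pi_P_commutator}. The key substantive input is $\comm{P}{K}=-\ket{\phi}\bra{\psi}$, which follows from \eqref{eq:kernel_z_derivative}: computing $\bra{x}\comm{P}{K}\ket{y}=(\partial_x+\partial_y)K(x,y) = -\phi(x)\psi(y)$. Substituting,
\[
\comm{P}{\hat K} = -\ket{\phi}\bra{\psi}\Pi_I - K\sum_{j=1}^{2m}(-1)^j \ket{a_j}\bra{a_j}.
\]
Using $\Pi_I\ket{a_j}=\ket{a_j}$ so that $(1-\hat K)^{-1}K\ket{a_j} = R\ket{a_j}$, together with the definitions of $\sfq_k$ and of $u_k$ in \eqref{eq:uv_integral}, the two contributions yield exactly $-\sfq_0(x)u_k$ and $-\sum_j(-1)^j R(x,a_j)\sfq_k(a_j)$, proving the formula for $\sfq_k$.

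The derivation for $\sfp_k(x)=\bra{\psi_k}(1-\check K)^{-1}\ket{x}$ is dual: I would use $\partial_x \ket{x}=-P\ket{x}$ and the identity $\bra{\psi_k} P = -\bra{\psi_{k+1}}$ (which follows from \eqref{eq:psi_k_op}). Commuting $P$ across $(1-\check K)^{-1}$ with $\comm{P}{\check K}=\Pi_I \comm{P}{K}+\comm{P}{\Pi_I}K$ and using $\bra{a_j}K(1-\check K)^{-1}=\bra{a_j}L$ (valid since $\bra{a_j}\Pi_I=\bra{a_j}$) produces the claimed formula. For the identity $u_0=v_0$, I would observe the Neumann-series identity $(1-\check K)^{-1}\Pi_I = \Pi_I(1-\hat K)^{-1}$, giving $v_0=\bra{\psi}\Pi_I(1-\hat K)^{-1}\ket{\phi}=u_0$.

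The main obstacle is the derivation of $\comm{P}{K} = -\ket{\phi}\bra{\psi}$, which is the only step that uses information specific to the $p$-Airy kernel (namely that the sum of partials $\partial_x+\partial_y$ reduces the kernel to a rank-one product); once this is in hand, every remaining manipulation is a bookkeeping exercise of collecting terms and carefully tracking the signs introduced when moving $P$ past bras, kets, and projectors.
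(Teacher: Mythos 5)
Your proof is correct and follows essentially the same route as the paper's: commuting $P$ through the resolvent, then splitting $\comm{P}{\hat K}$ into the rank-one piece $\comm{P}{K}\Pi_I = -\ket{\phi}\bra{\psi}\Pi_I$ plus the boundary contributions from $\comm{P}{\Pi_I}$, a decomposition the paper itself records as equivalent to differentiating $K(x,y)\chi_I(y)$ directly. Your Neumann-series identity $(1-\check K)^{-1}\Pi_I = \Pi_I(1-\hat K)^{-1}$ for $u_0=v_0$ is also valid and supplies a justification the paper leaves implicit.
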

\begin{proof}
In order to show these expressions, we use the formula
\begin{align}
    \bra{x} \comm{P}{\hat{K}} \ket{y} 
%    & = \bra{x} P K \Pi_I - K \Pi_I P \ket{y} 
%    \nonumber \\
%    & = \bra{x} P K \Pi_I - (K P) \Pi_I - K (\Pi_I P) \ket{y}
%    \nonumber \\
    & = \qty( \pdv{}{x} + \pdv{}{y} ) K(x,y) \chi_I(y) % + \sum_{j=1}^m \qty( - K(x,a_j) \delta(a_j - y) + K(x,b_j) \delta(b_j - y) )
    \nonumber \\
    & = - \phi(x) \psi(y) \chi_I(y) - \sum_{j=1}^{2m} (-1)^{j} K(x,y) \delta(a_j - y)
    \nonumber \\
    & = \bra{x} \qty[- \ket{\phi}\bra{\psi} \Pi_I - \sum_{j=1}^{2m} (-1)^{j} K \ket{a_j} \bra{a_j}  ]\ket{y}
\end{align}
where we have
\begin{align}
    \qty( \pdv{}{x} + \pdv{}{y} ) K(x,y) = - \phi(x) \psi(x)
    \quad \iff \quad
    \comm{P}{K} = - \ket{\phi}\bra{\psi}
    \, .
    \label{eq:P_K_commutator}
\end{align}
We may also obtain this expression by writing 
\begin{align}
 \bra{x} \comm{P}{\hat{K}} \ket{y}
 & = \bra{x} \comm{P}{K} \Pi_I + K \comm{P}{\Pi_I} \ket{y}
 \, ,
\end{align}
with the relation \eqref{eq:Pi_P_commutator}.
A similar relation holds for $\check{K}$,
\begin{align}
    \bra{x} \comm{P}{\check{K}} \ket{y} & = \qty( \pdv{}{x} + \pdv{}{y} ) \chi_I(x) K(x,y)  
    \nonumber \\
    & =
    \bra{x} \qty[- \Pi_I \ket{\phi}\bra{\psi}  - \sum_{j=1}^{2m} (-1)^j \ket{a_j} \bra{a_j} K ]\ket{y}
    \, .
\end{align}
Then, we compute
\begin{subequations}
\begin{align}
    \pdv{}{x} \sfq_k(x)
    & = \bra{x} P (1 - \hat{K})^{-1} \ket{\phi_k}
    \nonumber \\
    & = \bra{x} (1 - \hat{K})^{-1} \comm{P}{\hat{K}} (1 - \hat{K})^{-1} + (1 - \hat{K})^{-1} P \ket{\phi_k}
    \nonumber \\
    & = \sfq_{k+1}(x) - \sfq_0(x) u_k - \sum_{j=1}^{2m} (-1)^j R(x,a_j) \sfq_k(a_j) 
    \, , \\
    \pdv{}{x} \sfp_k(x)
    & = - \bra{\psi_k} (1 - \check{K})^{-1} P \ket{x}
    \nonumber \\
    & = - \bra{\psi_k} (1 - \check{K})^{-1} \comm{\check{K}}{P} (1 - \check{K})^{-1} + P (1 - \check{K})^{-1} \ket{x}
    \nonumber \\
    & = \sfp_{k+1}(x) - v_k \sfp_0(x) - \sum_{j=1}^{2m} (-1)^j \sfp_k(a_j) L(a_j,x) 
    \, ,
\end{align}
\end{subequations}
to obtain the formulas \eqref{eq:der_pa}.
\end{proof}

\paragraph{Parameter dependence of $(u_k,v_k)$}
We consider the parameter dependence of $(u_k,v_k)_{k = 1,\ldots,p}$.
Since we have
\begin{align}
    \pdv{}{a_j} \qty( \Pi_I (1 - \hat{K})^{-1} )
    & = (-1)^j \ket{a_j} \bra{a_j} (1 - \hat{K})^{-1} + \Pi_I (1 - \hat{K})^{-1} {\pdv{\hat{K}}{a_j}} (1 - \hat{K})^{-1}
    \nonumber \\
    & = (-1)^j \Pi_I (1 - \hat{K})^{-1} \ket{a_j} \bra{a_j} (1 - \hat{K})^{-1}
\end{align}
with \eqref{eq:Pi|a>}, we obtain
%\begin{subequations}
\begin{align}
    \pdv{u_k}{a_j} = (-1)^j \sfq_k(a_j) \sfp_0(a_j)
    \, , \qquad
    \pdv{v_k}{a_j} = (-1)^j \sfq_0(a_j) \sfp_k(a_j) 
    \, .
    \label{eq:uv_dependence}
\end{align}
%\end{subequations}

\paragraph{Rewriting $(\sfq_p,\sfp_p)$}

From the differential equations for the wave functions~\eqref{eq:Airy_ODE}, we have the relations,
\begin{align}
    \ket{\phi_p} = Q \ket{\phi_0}
    \, , \qquad
    \bra{\psi_p} = (-1)^p \bra{\psi_0} Q
    \, .
\end{align}
Then, we have
\begin{align}
    \sfq_p(x) & = \bra{x} (1 - \hat{K})^{-1} \ket{\phi_p}
    \nonumber \\
    & = \bra{x} (1 - \hat{K})^{-1} Q \ket{\phi_0}
    \nonumber \\
    & = \bra{x} Q (1 - \hat{K})^{-1} \ket{\phi_0} - \bra{x} (1 - \hat{K})^{-1} \comm{Q}{\hat{K}} (1 - \hat{K})^{-1} \ket{\phi_0}
    \nonumber \\
    & = x \sfq_0(x) - \sum_{k=1}^p (-1)^k \bra{x} (1 - \hat{K})^{-1} \ket{\phi_{p-k}} \bra{\psi_{k-1}} \Pi_I (1 - \hat{K})^{-1} \ket{\phi_0}
%    \nonumber \\
%    & = x \sfq_0(x) - \sum_{k=1}^n (-1)^k \sfq_{p-k}(x) \, v_{k-1}
    \, ,    
\end{align}
which leads to the expressions
\begin{subequations}\label{eq:n_lower}
\begin{align}
    \sfq_p(x) & = x \sfq_0(x) - \sum_{k=1}^p (-1)^k \sfq_{p-k}(x) \, v_{k-1}
    \, , \\
    \sfp_p(x) & = (-1)^p x \sfp_0(x) - \sum_{k=1}^p (-1)^k u_{k-1} \, \sfp_{p-k}(x)
    \, .
\end{align}
\end{subequations}

\paragraph{Rewriting $(u_p,v_p)$}

Similarly, we can rewrite $(u_p,v_p)$ as follows,
\begin{subequations}\label{eq:n_lower_uv}
\begin{align}
    u_p & = \int_I \dd{x} x \, \psi(x) \sfq_0(x) - \sum_{k=1}^p (-1)^k u_{p-k} \, v_{k-1}
    \, , \\
    v_p & = (-1)^p \int_I \dd{x} x \, \sfp_0(x) \phi(x) -  \sum_{k=1}^p (-1)^k u_{k-1} \, v_{p-k}
    \, .
\end{align}
\end{subequations}
We remark the relation,
\begin{align}
    u_p - (-1)^p v_p & = - \sum_{k=1}^p (-1)^k u_{p-k} \, v_{k-1}
    \, .
    \label{eq:up+vp}
\end{align}

\subsubsection{Hamiltonian system}

We denote 
\begin{align}
    \sfq_{k,j} = \sfq_k(a_j)
    \, , \qquad
    \sfp_{k,j} = \sfp_k(a_j)
    \, .
\end{align}
The parameter dependence of the wave functions $(\sfq_{k,j},\sfp_{k,j})$ is given by
\begin{subequations}\label{eq:qp_dependence}
\begin{align}
    \dv{\sfq_{k,j}}{a_j} & = \qty( \pdv{}{x} + \pdv{}{a_j} ) \sfq_k(x)\Bigg|_{x \to a_j}
%    \nonumber \\ &
    = \sfq_{k+1,j} - \sfq_{0,j} u_k - \sum_{\ell (\neq j)}^{2m} (-1)^\ell R_{j\ell} \sfq_{k,\ell}
    \, , \\
    \dv{\sfp_{k,j}}{a_j} & = \qty( \pdv{}{x} + \pdv{}{a_j} ) \sfp_k(x)\Bigg|_{x \to a_j}
%    \nonumber \\ &
    = \sfp_{k+1,j} - \sfp_{0,j} v_k - \sum_{\ell (\neq j)}^{2m} (-1)^\ell \sfp_{k,\ell} R_{\ell j}
    \, .
\end{align}
\end{subequations}
Therefore, we have the following expression for the Hamiltonians.
\begin{proposition}[Hamiltonian system]\label{prop:Ham_sys}
The Hamilsonians associated with the Fredholm determinant are written as follows,
\begin{align}
    H_j & = R_{jj} %R(a_j,a_j) = L(a_j,a_j)  
    \nonumber \\
    & = \sum_{k=1}^p (-1)^k \pdv{\sfq_{p-k,j}}{a_j} \sfp_{k-1,j}
    = \sum_{k=1}^p (-1)^{k-1} \sfq_{p-k,j} \pdv{\sfp_{k-1,j}}{a_j}
    \, ,
    \label{eq:Hamiltonian_Rjj}
\end{align}
whose explicit form is given by
\begin{align}
    H_j & = - a_j \sfq_{0,j} \sfp_{0,j} - \sum_{k=1}^{p-1} (-1)^k \sfq_{p-k,j} \sfp_{k,j} 
    \nonumber \\ & \qquad 
    + \sum_{k=1}^p (-1)^k \qty( \sfq_{p-k,j} \sfp_{0,j} v_{k-1} - \sfq_{0,j} \sfp_{k-1,j} u_{p-k} ) 
    - \sum_{\ell(\neq j)}^{2m} (-1)^\ell (a_\ell - a_j) R_{j\ell} R_{\ell j}
    \, .
    \label{eq:Hj}
\end{align}
\end{proposition}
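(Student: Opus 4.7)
The plan is to combine the Christoffel--Darboux-type formula \eqref{eq:RL_diagonal} for the diagonal resolvent with the total-derivative identities \eqref{eq:qp_dependence}, using Lemma~\ref{lem:pq_rel} to eliminate a spurious $R_{jj}$ cross-term.

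First, taking the limit $x\to a_j$ in \eqref{eq:RL_diagonal} yields
\begin{align*}
R_{jj} = \sum_{k=1}^p (-1)^k \, \pdv{\sfq_{p-k}(x)}{x}\bigg|_{x=a_j} \sfp_{k-1,j}.
\end{align*}
The key move is to trade the partial derivative $\partial_x$ for the total derivative $\dv{}{a_j}$: writing
\begin{align*}
\dv{\sfq_{p-k,j}}{a_j} = \pdv{\sfq_{p-k}(x)}{x}\bigg|_{x=a_j} + \pdv{\sfq_{p-k}(x)}{a_j}\bigg|_{x=a_j}
\end{align*}
and applying Lemma~\ref{lem:pq_par_dependence} to replace $\partial_{a_j}\sfq_{p-k}(a_j) = (-1)^j R_{jj}\sfq_{p-k,j}$, one obtains
\begin{align*}
R_{jj} = \sum_{k=1}^p (-1)^k \dv{\sfq_{p-k,j}}{a_j} \sfp_{k-1,j} - (-1)^j R_{jj}\sum_{k=1}^p (-1)^k \sfq_{p-k,j}\sfp_{k-1,j}.
\end{align*}
The second sum vanishes by Lemma~\ref{lem:pq_rel} evaluated at $x=a_j$, proving the first equality of \eqref{eq:Hamiltonian_Rjj}. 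The second equality is then immediate by total-differentiating the identity of Lemma~\ref{lem:pq_rel} with respect to $a_j$ (which gives $\sum_k(-1)^k \dv{\sfq_{p-k,j}}{a_j}\sfp_{k-1,j} + \sum_k(-1)^k \sfq_{p-k,j}\dv{\sfp_{k-1,j}}{a_j} = 0$).

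To reach the explicit form \eqref{eq:Hj}, I would then substitute the expansion of $\dv{\sfq_{p-k,j}}{a_j}$ from \eqref{eq:qp_dependence} and organize the result into three pieces. The shifted bilinear part $\sum_k(-1)^k\sfq_{p-k+1,j}\sfp_{k-1,j}$ reindexes to $-\sum_{m=0}^{p-1}(-1)^m\sfq_{p-m,j}\sfp_{m,j}$; the out-of-range boundary contribution $-\sfq_{p,j}\sfp_{0,j}$ is eliminated via \eqref{eq:n_lower}, producing the monomial $-a_j\sfq_{0,j}\sfp_{0,j}$ together with the $v$-half of the $(u,v)$-sum in \eqref{eq:Hj}, while the remaining terms form the middle bilinear sum $-\sum_{k=1}^{p-1}(-1)^k\sfq_{p-k,j}\sfp_{k,j}$. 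The $\sfq_{0,j}u_{p-k}$ piece contributes the $u$-half of the $(u,v)$-sum directly. The off-diagonal piece $-\sum_{\ell\neq j}(-1)^\ell R_{j\ell}\sum_k(-1)^k\sfq_{p-k,\ell}\sfp_{k-1,j}$ is collapsed using \eqref{eq:RL_xy} evaluated at $(x,y)=(a_\ell,a_j)$, which supplies the identity $\sum_k(-1)^k\sfq_{p-k,\ell}\sfp_{k-1,j} = (a_\ell-a_j)R_{\ell j}$ for $\ell\neq j$, yielding the final double-resolvent sum in \eqref{eq:Hj}.

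The main obstacle is purely algebraic bookkeeping: tracking signs and index shifts through the reindexing, knowing when to invoke \eqref{eq:n_lower} to eliminate the out-of-range index $\sfq_{p,j}$, and recognizing the off-diagonal double sum as an instance of the Christoffel--Darboux expression \eqref{eq:RL_xy}. No new analytic input is required; the proposition is a coherent reorganization of identities already established in the preceding subsections.
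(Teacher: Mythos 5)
Your proposal is correct and follows essentially the same route the paper intends: it assembles \eqref{eq:RL_diagonal}, Lemma~\ref{lem:pq_par_dependence}, Lemma~\ref{lem:pq_rel}, \eqref{eq:qp_dependence}, \eqref{eq:n_lower} and \eqref{eq:RL_xy} in exactly the way the paper's surrounding text suggests (the paper itself omits a formal proof, stating the proposition as an immediate consequence of \eqref{eq:qp_dependence}). Your elaboration of the $\partial_x \to \dd/\dd a_j$ trade-off, the cancellation of the $R_{jj}$ cross-term via Lemma~\ref{lem:pq_rel}, and the identification of the off-diagonal sum with $(a_\ell-a_j)R_{\ell j}$ are all accurate fillings-in of those omitted steps.
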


\begin{corollary}[Hamilton equations]\label{cor:Ham_eqs}
From the Hamiltonians obtained above, we have the Hamilton equations for the wave functions,
\begin{subequations}
\begin{align}
    \pdv{\sfq_{p-k,j}}{a_j} & = (-1)^k \pdv{H_j}{\sfp_{k-1,j}}
    \, ,  &
    \pdv{\sfp_{k-1,j}}{a_j} & = (-1)^{k-1} \pdv{H_j}{\sfq_{p-k,j}}
    \, ,
    \label{eq:Hamilton_eq}
    \\
    \pdv{u_{p-k,j}}{a_j} & = (-1)^{j+k} \pdv{H_j}{v_{k-1,j}}
    \, ,  &
    \pdv{v_{k-1,j}}{a_j} & = (-1)^{j+k-1} \pdv{H_j}{u_{p-k,j}}
    \, .
\end{align}
\end{subequations}
This shows that $(\sfq_{p-k},\sfp_{k-1})_{k = 1,\ldots,p}$ and $(u_{p-k},v_{k-1})_{k = 1,\ldots,p}$ form the canonical pairs.
For even $p$, $\sfq_k$ and $\sfp_k$ ($u_k$ and $v_k$ as well) are equivalent.
Hence, $(\sfq_{p-k},\sfq_{k-1})_{k = 1,\ldots,p}$ (and $(u_{p-k},u_{k-1})_{k = 1,\ldots,p}$) are the canonical pairs with respect to the time variables $\underline{a} = (a_j)_{j=1,\ldots,2m}$.
\end{corollary}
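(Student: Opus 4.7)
The plan is a direct verification: compute the partial derivatives of $H_j$ with respect to each of the proposed canonical coordinates, and match them against the explicit expressions for $\partial \sfq_{p-k,j}/\partial a_j$, $\partial \sfp_{k-1,j}/\partial a_j$, $\partial u_{p-k}/\partial a_j$, $\partial v_{k-1}/\partial a_j$ already established in equations~\eqref{eq:qp_dependence} and~\eqref{eq:uv_dependence}. In this verification the variables $\{\sfq_{p-k,j}, \sfp_{k-1,j}, u_{p-k}, v_{k-1}\}_{k=1,\ldots,p}$ are treated as independent coordinates at fixed $j$; the off-site quantities $\sfq_{\cdot,\ell}, \sfp_{\cdot,\ell}$ for $\ell \neq j$ are constants with respect to these derivatives.

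The easy case is the pair $(u_{p-k}, v_{k-1})$. Scanning the explicit form \eqref{eq:Hj} of $H_j$, the variable $v_{k-1}$ appears only in the term $(-1)^k \sfq_{p-k,j} \sfp_{0,j} v_{k-1}$, and $u_{p-k}$ only in $-(-1)^k \sfq_{0,j}\sfp_{k-1,j} u_{p-k}$. Thus $\partial H_j/\partial v_{k-1} = (-1)^k \sfq_{p-k,j}\sfp_{0,j}$ and $\partial H_j/\partial u_{p-k} = (-1)^{k+1}\sfq_{0,j}\sfp_{k-1,j}$; multiplying by $(-1)^{j+k}$ and $(-1)^{j+k-1}$ respectively recovers exactly \eqref{eq:uv_dependence}, since $(-1)^{2k}=1$.

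The harder case is the pair $(\sfq_{p-k,j}, \sfp_{k-1,j})$. I would take $\partial H_j/\partial \sfp_{k-1,j}$ and collect the four types of contributions: (i) from the diagonal bilinear $-\sum_{k=1}^{p-1}(-1)^k\sfq_{p-k,j}\sfp_{k,j}$ (active for $k\geq 2$), producing $(-1)^k\sfq_{p-k+1,j}$; (ii) from $-a_j\sfq_{0,j}\sfp_{0,j}$ and the $v_{k-1}$-coupling term (active for $k=1$); (iii) from the $u_{p-k}$-coupling, giving $-(-1)^k \sfq_{0,j} u_{p-k}$; and (iv) from the off-diagonal term $-\sum_{\ell(\neq j)}(-1)^\ell(a_\ell-a_j) R_{j\ell}R_{\ell j}$, where differentiating $R_{\ell j}$ through its representation \eqref{eq:RL_xy}, namely $R_{\ell j}=(a_\ell-a_j)^{-1}\sum_k (-1)^k \sfq_{p-k,\ell}\sfp_{k-1,j}$, produces $-(-1)^k\sum_{\ell(\neq j)}(-1)^\ell R_{j\ell}\sfq_{p-k,\ell}$ after the $a_\ell-a_j$ factors cancel. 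Summing these and comparing with $(-1)^k\partial_j\sfq_{p-k,j}$ obtained from \eqref{eq:qp_dependence} gives exactly the desired Hamilton equation. The derivation for $\partial H_j/\partial \sfq_{p-k,j}$ is completely parallel, using the dual representation $R_{j\ell}=(a_j-a_\ell)^{-1}\sum_k (-1)^k \sfq_{p-k,j}\sfp_{k-1,\ell}$.

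The main technical obstacle is the boundary case $k=1$ for the $\sfq$-equation (and symmetrically $k=p$ for the $\sfp$-equation), where \eqref{eq:qp_dependence} produces $\sfq_{p,j}$, which is not among the canonical coordinates. The resolution is to invoke the identity \eqref{eq:n_lower}, which reads $\sfq_{p,j}=a_j\sfq_{0,j}-\sum_{k=1}^p(-1)^k\sfq_{p-k,j}v_{k-1}$; substituting this back produces precisely the combination $-a_j\sfq_{0,j}\sfp_{0,j}+\sum_k(-1)^k\sfq_{p-k,j}\sfp_{0,j}v_{k-1}$ appearing in \eqref{eq:Hj}. Once this substitution is made the sign bookkeeping becomes routine, and the factor $(-1)^k$ in \eqref{eq:Hamilton_eq} emerges from the explicit $(-1)^k$ inside the relevant sums in $H_j$. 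Finally, the statement for even $p$ follows immediately from the identifications $\sfq_k=\sfp_k$ and $u_k=v_k$ noted after \eqref{eq:psi_k_op} and \eqref{eq:u0=v0}, which collapse the $(\sfq,\sfp)$-pair (and the $(u,v)$-pair) into the single-species pairs $(\sfq_{p-k},\sfq_{k-1})$ and $(u_{p-k},u_{k-1})$.
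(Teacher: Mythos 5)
Your verification is correct, and it is essentially the argument the paper intends: the corollary is stated without a written proof, and the only route is exactly your direct differentiation of the explicit Hamiltonian \eqref{eq:Hj} against the known parameter derivatives \eqref{eq:qp_dependence} and \eqref{eq:uv_dependence}, with the boundary cases $k=1$ for $\sfq$ and $k=p$ for $\sfp$ closed by the identities \eqref{eq:n_lower}. Your sign bookkeeping and the treatment of the off-diagonal resolvent term via \eqref{eq:RL_xy} both check out, so nothing further is needed.
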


\begin{lemma}\label{lem:Ham_derivative}
The parameter dependence of the Hamiltonians is given by
\begin{align}
%    \pdv{}{c_j} R(c_j,c_j) = \pdv{}{c_j} L(c_j,c_j) 
    \dv{H_j}{a_j}
    & = - \sfq_{0,j} \sfp_{0,j} - \sum_{\ell (\neq j)}^{2m} (-1)^\ell R_{j\ell} R_{\ell j}
%    \, , \nonumber \\
%    & = - \sfq_{0,j} \sfp_{0,j} - \sum_{\ell (\neq j)}^{2m} (-1)^\ell R_{j\ell}^2
    \, .
    \label{eq:H_dependence}
\end{align}
\end{lemma}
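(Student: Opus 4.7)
The plan is to express $H_j$ from \eqref{eq:Hj} as a function of canonical variables plus explicit $a_j$-dependence, and then compute $\dv{H_j}{a_j}$ via the chain rule, exploiting the Hamiltonian structure of Corollary~\ref{cor:Ham_eqs} to discard most of the contributions. The key preparatory step is to use the Christoffel--Darboux-type expression \eqref{eq:RL_xy} to write $R_{j\ell}=A_{j\ell}/(a_j-a_\ell)$ and $R_{\ell j}=A_{\ell j}/(a_\ell-a_j)$, where $A_{j\ell}=\sum_{k=1}^p(-1)^k\sfq_{p-k,j}\sfp_{k-1,\ell}$ and symmetrically for $A_{\ell j}$, both bilinear in canonical variables. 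The cross-sum in \eqref{eq:Hj} then becomes $-\sum_{\ell(\neq j)}(-1)^\ell A_{j\ell}A_{\ell j}/(a_j-a_\ell)$, whose only explicit $a_j$-dependence sits in the denominator.

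The total derivative $\dv{H_j}{a_j}$ decomposes into: (i) the explicit $a_j$-derivative; (ii) chain-rule contributions from the canonical pairs $(\sfq_{p-k,j},\sfp_{k-1,j})$ and $(u_{p-k},v_{k-1})$ at the endpoint $a_j$; and (iii) chain-rule contributions from the cross-variables $\sfq_{k,\ell},\sfp_{k,\ell}$ at endpoints $a_\ell$ with $\ell\neq j$. Part (ii) vanishes by the Hamilton equations \eqref{eq:Hamilton_eq}: this is the standard $\{H_j,H_j\}=0$ identity along the flow generated by $H_j$ itself. For part (iii), Lemma~\ref{lem:pq_par_dependence} gives $\dv{\sfq_{p-k,\ell}}{a_j}=(-1)^j R_{\ell j}\sfq_{p-k,j}$, while differentiation of the cross-sum produces $\pdv{H_j}{\sfq_{p-k,\ell}}=-(-1)^{\ell+k}R_{j\ell}\sfp_{k-1,j}$. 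The product summed over $k$ equals $-(-1)^{j+\ell}R_{j\ell}R_{\ell j}\sum_k(-1)^k\sfq_{p-k,j}\sfp_{k-1,j}$, which vanishes by Lemma~\ref{lem:pq_rel}; the symmetric computation for $\sfp_{k,\ell}$ likewise vanishes.

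All that remains is the explicit derivative (i). The prefactor $-a_j$ in $-a_j\sfq_{0,j}\sfp_{0,j}$ contributes $-\sfq_{0,j}\sfp_{0,j}$. Differentiating the reorganized cross-sum at fixed $A$'s gives $\pdv{a_j}[-A_{j\ell}A_{\ell j}/(a_j-a_\ell)]=A_{j\ell}A_{\ell j}/(a_j-a_\ell)^2=-R_{j\ell}R_{\ell j}$, so after weighting by $(-1)^\ell$ and summing one obtains $-\sum_{\ell(\neq j)}(-1)^\ell R_{j\ell}R_{\ell j}$. Combining the two pieces reproduces \eqref{eq:H_dependence}. The main obstacle is recognizing that the vanishing of the cross-variable contributions in part (iii) depends crucially on Lemma~\ref{lem:pq_rel}: without this identity the "explicit partial derivative" shortcut would not apply, and one would have to track considerably more terms by hand before finding cancellations.
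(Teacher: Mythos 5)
Your proposal is correct, but it proceeds by a genuinely different route from the paper. The paper differentiates the operator expression directly: writing $H_j = R_{jj} = \bra{a_j}\hat{K}(1-\hat{K})^{-1}\ket{a_j}$, the total $a_j$-derivative produces the commutator $\comm{P}{\hat{K}} = \comm{P}{K}\Pi_I + K\comm{P}{\Pi_I}$ from the moving endpoints plus the explicit derivative $\partial\hat{K}/\partial a_j = (-1)^j K\ket{a_j}\bra{a_j}$; the $\ell=j$ boundary term of $K\comm{P}{\Pi_I}$ cancels against the latter, and what remains is exactly $-\ket{\phi}\bra{\psi}\Pi_I - \sum_{\ell\neq j}(-1)^\ell K\ket{a_\ell}\bra{a_\ell}$ sandwiched between resolvents, giving the two terms of \eqref{eq:H_dependence} in three lines. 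You instead apply the chain rule to the explicit Hamiltonian \eqref{eq:Hj}, killing the contributions of the canonical variables at $a_j$ via the antisymmetry of the Hamilton equations (Corollary~\ref{cor:Ham_eqs}, the $\{H_j,H_j\}=0$ mechanism), killing the cross-variable contributions at $a_\ell$ via Lemma~\ref{lem:pq_par_dependence} combined with the identity $\sum_k(-1)^k\sfq_{p-k,j}\sfp_{k-1,j}=0$ of Lemma~\ref{lem:pq_rel}, and retaining only the explicit $a_j$-dependence; I checked the sign bookkeeping in the cross terms ($\pdv*{H_j}{\sfq_{p-k,\ell}}=-(-1)^{\ell+k}R_{j\ell}\sfp_{k-1,j}$, the factor $A_{j\ell}A_{\ell j}/(a_j-a_\ell)^2=-R_{j\ell}R_{\ell j}$) and it is consistent. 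There is no circularity, since neither Corollary~\ref{cor:Ham_eqs} nor Lemma~\ref{lem:pq_rel} relies on the present lemma. The trade-off: the paper's operator argument is shorter and self-contained, while yours requires more of the surrounding machinery but makes transparent \emph{why} the result has the form it does --- it is precisely the statement that the total time derivative of a Hamiltonian along its own flow reduces to its explicit time dependence, which is conceptually illuminating in the isomonodromic $\tau$-function context.
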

\begin{proof}
This is obtained as follows,
\begin{align}
    \dv{H_{j}}{a_j} = \dv{R_{jj}}{a_j} 
    & = \dv{}{a_j} \qty( \bra{a_j} \frac{\hat{K}}{1 - \hat{K}} \ket{a_j} )
    \nonumber \\
    & = \bra{a_j} (1 - \hat{K})^{-1} \comm{P}{\hat{K}} (1 - \hat{K})^{-1} + (1 - \hat{K})^{-1} \qty( \pdv{\hat{K}}{a_j} ) (1 - \hat{K})^{-1} \ket{a_j}
    \nonumber \\
    & = \bra{a_j} (1 - \hat{K})^{-1} \qty( - \ket{\phi}\bra{\psi} \Pi_I - \sum_{\ell = 1 (\neq j)}^{2m} (-1)^\ell K \ket{a_\ell}\bra{a_\ell} ) (1 - \hat{K})^{-1} \ket{a_j}
    \nonumber \\
    & = - \sfq_{0,j} \sfp_{0,j} - \sum_{\ell = 1 (\neq j)}^{2m} (-1)^\ell R_{j\ell} R_{\ell j}
\end{align}
where we use the relation
\begin{align}
    \bra{a_\ell} (1 - \hat{K})^{-1} \ket{a_j}
    = \bra{a_\ell} \frac{\hat{K}}{1 - \hat{K}} \ket{a_j} + \braket{a_\ell}{a_j}
    \, .
\end{align}
\end{proof}

\begin{lemma}[Integral of motion]\label{lem:IOM}
The integrals of motion defined as
\begin{align}
    \mathsf{I}_\ell = u_\ell - (-1)^\ell v_\ell + \sum_{k=1}^\ell (-1)^k \qty( u_{\ell - k} v_{k-1} + \sum_{j=1}^{2m} (-1)^j \sfq_{\ell-k,j} \sfp_{k-1,j} )
    \, , \quad
    \ell = 1,\ldots,p \, ,
\end{align}
are zero.
\end{lemma}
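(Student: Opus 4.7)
The plan is to proceed exactly as in the proof of Lemma~\ref{lem:pq_rel}: show that each $\mathsf{I}_\ell$ is independent of the endpoints $\underline{a}=(a_j)_{j=1,\ldots,2m}$, then evaluate in a degenerate configuration where every term vanishes by inspection. Specifically, when each interval collapses ($a_{2k-1}\to a_{2k}$), the pairs in the boundary sum $\sum_j(-1)^j\sfq_{\ell-k,j}\sfp_{k-1,j}$ cancel, while $u_k$ and $v_k$ vanish by the integral representation~\eqref{eq:uv_integral}, so $\mathsf{I}_\ell=0$ in that limit.

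To show $\partial_{a_j}\mathsf{I}_\ell=0$, I would first use~\eqref{eq:uv_dependence} to compute
\begin{align*}
 \partial_{a_j}\bigl(u_\ell-(-1)^\ell v_\ell\bigr) &= (-1)^j\bigl(\sfq_{\ell,j}\sfp_{0,j}-(-1)^\ell\sfq_{0,j}\sfp_{\ell,j}\bigr), \\
 \partial_{a_j}(u_{\ell-k}v_{k-1}) &= (-1)^j\bigl(\sfq_{\ell-k,j}\sfp_{0,j}v_{k-1}+u_{\ell-k}\sfq_{0,j}\sfp_{k-1,j}\bigr).
\end{align*}
For the boundary sum $\sum_{j'=1}^{2m}(-1)^{j'}\sfq_{\ell-k,j'}\sfp_{k-1,j'}$ I would split the $a_j$-derivative into the diagonal $j'=j$ contribution, evaluated by the total derivative formula~\eqref{eq:qp_dependence}, and the off-diagonal $j'\neq j$ contribution, which by Lemma~\ref{lem:pq_par_dependence} produces the kernels $R_{j'j}=L_{j'j}$.

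Three cancellations then collapse the result to zero: (i) the $R_{j'j},R_{jj'}$ terms from the off-diagonal piece cancel exactly against the $R_{jj''}$ terms appearing in the total derivative~\eqref{eq:qp_dependence} of $\sfq_{\ell-k,j}\sfp_{k-1,j}$, using $R_{ij}=L_{ij}$ from~\eqref{eq:Rij=Lij}; (ii) the $u_{\ell-k}\sfq_{0,j}\sfp_{k-1,j}$ and $\sfq_{\ell-k,j}\sfp_{0,j}v_{k-1}$ terms from the diagonal derivative cancel the corresponding terms from $\partial_{a_j}(u_{\ell-k}v_{k-1})$; (iii) after shifting the summation index, the residual telescoping sum $\sum_{k=1}^{\ell}(-1)^k\bigl[\sfq_{\ell-k+1,j}\sfp_{k-1,j}+\sfq_{\ell-k,j}\sfp_{k,j}\bigr]$ collapses to $-\sfq_{\ell,j}\sfp_{0,j}+(-1)^\ell\sfq_{0,j}\sfp_{\ell,j}$, which exactly cancels $\partial_{a_j}(u_\ell-(-1)^\ell v_\ell)$.

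The main obstacle is not conceptual but bookkeeping: one must organize the many cross-terms by structure ($R$-type, $uv$-type, pure $\sfq\sfp$-type) so that the three cancellations are manifest. The fact that they work is dictated by the Hamiltonian symplectic structure identified in Corollary~\ref{cor:Ham_eqs}. A useful sanity check is the case $\ell=p$: relation~\eqref{eq:up+vp} forces the first three pieces of $\mathsf{I}_p$ to sum to zero, while the remaining boundary sum vanishes pointwise by Lemma~\ref{lem:pq_rel}, giving $\mathsf{I}_p=0$ without any differentiation.
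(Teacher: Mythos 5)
Your proposal is correct and follows essentially the same route as the paper: compute $\partial_{a_j}\mathsf{I}_\ell$ using \eqref{eq:uv_dependence}, \eqref{eq:qp_dependence} and Lemma~\ref{lem:pq_par_dependence}, observe that the $R$-type, $uv$-type and telescoping $\sfq\sfp$ contributions cancel, and then fix the integration constant by degenerating the intervals to $I=\emptyset$. If anything, your treatment of the degenerate limit is slightly more careful than the paper's (which loosely asserts $\sfq_{k,j},\sfp_{k,j}\to 0$, whereas in fact only the alternating boundary sum and the $u_k,v_k$ vanish), and your $\ell=p$ cross-check via \eqref{eq:up+vp} and Lemma~\ref{lem:pq_rel} matches the paper's subsequent remark.
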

\begin{proof}
From Lemma~\ref{lem:pq_par_dependence}, for $i \ne j$, we have
\begin{align}
    \pdv{\sfq_{k,i}}{a_j} = (-1)^{j} R_{ij} \sfq_{k,j}
    \, , \qquad
    \pdv{\sfp_{k,i}}{a_j} = (-1)^{j} \sfp_{k,j} R_{ji}
    \, .
    \label{lem:pq_par_dependence_ij}
\end{align}
Using the relations \eqref{eq:qp_dependence} and \eqref{eq:uv_dependence}, the derivative of the integral of motion with the parameter $a_j$ is given by
\begin{align}
    (-1)^j \dv{\mathsf{I}_\ell}{a_j} & = 
    \sfq_{\ell,j} \sfp_{0,j} - (-1)^\ell \sfq_{0,j} \sfp_{\ell,j} + \sum_{k = 1}^\ell (-1)^k \qty( \sfq_{\ell-k+1,j} \sfp_{k-1,j} + \sfq_{\ell-k,j} \sfp_{k,j} )
    \nonumber \\
    & = 0
    \, .
\end{align}
Since it does not depend on the parameters $\underline{a} = (a_j)_{j = 1,\ldots,2m}$, we can modify them to obtain $I = \emptyset$, which yields $\sfq_k, \sfp_k, u_k, v_k = 0$.
\end{proof}

We can also derive the integrals of motion in the operator formalism.
Recalling
\begin{align}
 \comm{\hat{K}}{P} = K \comm{\Pi_I}{P} + \comm{K}{P} \Pi_I
 \, ,
\end{align}
and
\begin{align}
 (1-\hat{K})^{-1} \hat{K} + 1 = (1-\hat{K})^{-1}
 \, ,
\end{align}
together with the relations \eqref{eq:Pi_P_commutator} and \eqref{eq:P_K_commutator}, we have
\begin{align}
 u_\ell
 & = \bra{\psi} \Pi_I (1 - \hat{K})^{-1} P^\ell \ket{\phi}
 \nonumber \\
 & = \bra{\psi} \Pi_I \qty[ (1-\hat{K})^{-1} \comm{\hat{K}}{P} (1-\hat{K})^{-1} + P (1-\hat{K})^{-1} ] P^{\ell-1} \ket{\phi}
 \nonumber \\
 & = \bra{\psi} \Pi_I (1-\hat{K})^{-1} \qty( \comm{\Pi_I}{P} + \comm{K}{P} \Pi_I ) (1-\hat{K})^{-1} P^{\ell-1} \ket{\phi}
 + \bra{\psi} P \Pi_I (1 - \hat{K})^{-1} P^{\ell-1} \ket{\phi}
 \nonumber \\
 & =
 v_0 u_{\ell-1} + \sum_{j=1}^{2m} (-1)^j \sfq_{\ell-1,j} \sfp_{0,j}
 + \bra{\psi} P \Pi_I (1 - \hat{K})^{-1} P^{\ell-1} \ket{\phi}
 \, .
\end{align}
Applying this process recursively together with the relation \eqref{eq:psi_k_op}, we arrive at Lemma~\ref{lem:IOM}.

In particular, applying the expression \eqref{eq:up+vp}, the integral of motion for $\ell = p$ is given by
\begin{align}
    \mathsf{I}_p & = \sum_{k=1}^p \sum_{j=1}^{2m} (-1)^{k+j} \sfq_{p-k,j} \sfp_{k-1,j}
    \, .
    \label{eq:IOM_highest}
\end{align}
We can show also from Lemma~\ref{lem:pq_rel} that this integral of motion becomes zero.

\subsubsection{Lax formalism}

We introduce the vector notation
\begin{align}
    \mathsf{Q} = (\sfq_0 \ \sfq_1 \ \cdots \ \sfq_{p-1})^\text{T}
    \, , \qquad
    \mathsf{P} = (\sfp_0 \ \sfp_1 \ \cdots \ \sfp_{p-1})
    \, ,
\end{align}
where we denote
\begin{align}
    \sfq_k(x) = \sfq_k \, , \qquad \sfp_k(x) = \sfp_k \, .
\end{align}
We also denote $(\sfq,\sfp) = (\sfq_0,\sfp_0)$.
Then, we may express the parameter dependence in the matrix form as follows.
\begin{lemma}[Lax formalism]
The $x$-dependence and the parameter dependence of the auxiliary wave functions are given as follows,
\begin{subequations}
\begin{align}
    \pdv{\mathsf{Q}}{x} & = \qty( \sum_{j=1}^{2m} \frac{A_j}{x-a_j} + A_\infty ) \mathsf{Q}
    \, , \qquad
    \pdv{\mathsf{Q}}{a_j} = - \frac{A_j}{x-a_j} \mathsf{Q}
    \, , \\
    \pdv{\mathsf{P}}{x} & = \mathsf{P} \qty( \sum_{j=1}^{2m} \frac{\widetilde{A}}{x-a_j} + \widetilde{A}_\infty )
    \, , \qquad
    \pdv{\mathsf{P}}{a_j} = - \mathsf{P} \frac{\widetilde{A}_j}{x-a_j} 
    \, .
\end{align}
\end{subequations}
We call this the Lax matrix, whose coefficients $(A_j,\widetilde{A}_j)_{j = 1,\ldots,2m,\infty}$ are given by
\begin{subequations}
\begin{align}
    (A_j)_{k,k'} = (-1)^{j+p-k'+1} \sfq_{k,j} \sfp_{p-k'-1,j}
    \, ,  \qquad
    (\widetilde{A}_j)_{k,k'} = (-1)^{j+p-k+1} \sfq_{p-k-1,j} \sfp_{k',j} 
    \, ,
\end{align}
\begin{align}
    A_\infty & = 
    \begin{pmatrix}
    - u_0 & 1 & & & \\
    - u_1 & 0 & 1 && \\
    \vdots&&\ddots&\ddots& \\
    - u_{p-2} &&& 0 & 1 \\
    x - u_{p-1} + (-1)^{p-1} v_{p-1} & (-1)^{p-2} v_{p-2} & \cdots & - v_1 & v_0
    \end{pmatrix}
    \, , \\[.5em]
    \widetilde{A}_\infty & = 
    \begin{pmatrix}
    - v_0 & - v_1 & \cdots & - v_{p-2} & (-1)^p x + (-1)^{p-1} u_{p-1} - v_{p-1} \\
    1 & 0 &&& (-1)^{p-2} u_{p-2} \\
    &1&\ddots&&\vdots \\
    && \ddots & 0 & - u_1 \\
    &&& 1 & u_0
    \end{pmatrix}
    \, .
\end{align}
\end{subequations}
\end{lemma}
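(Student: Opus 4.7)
The plan is to verify the four matrix equations entry-by-entry by combining the formulas already established in the paper; no genuinely new idea is required, only careful bookkeeping.

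First I would derive the $x$-dependence of $\mathsf{Q}$. Starting from \eqref{eq:der_pa}, the derivative $\partial_x \sfq_k$ consists of three pieces: the shift $\sfq_{k+1}$, the correction $-u_k \sfq_0$, and the residues $-\sum_j (-1)^j R(x,a_j) \sfq_{k,j}$. For $k = 0, \ldots, p-2$ the shift produces the $+1$'s on the superdiagonal of $A_\infty$ while $-u_k \sfq_0$ fills its first column. The bottom row ($k = p-1$) is closed by invoking \eqref{eq:n_lower}, which replaces $\sfq_p$ by $x \sfq_0 - \sum_{k'=1}^p (-1)^{k'} v_{k'-1} \sfq_{p-k'}$; after relabeling $\ell = p - k'$ this yields exactly the last row of $A_\infty$ as stated. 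The residue term is unpacked via the Christoffel--Darboux-type formula \eqref{eq:RL_xy}, $R(x,a_j) = (x-a_j)^{-1} \sum_m (-1)^m \sfq_{p-m}(x) \sfp_{m-1,j}$ (using that $a_j \in I$, so $\tilde{\sfp}_{m-1}(a_j) = \sfp_{m-1,j}$), and reading off the entry at position $(k,k')$ with the substitution $k' = p - m$ reproduces $(A_j)_{k,k'} = (-1)^{j+p-k'+1} \sfq_{k,j} \sfp_{p-k'-1,j}$.

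Second, the $a_j$-dependence of $\mathsf{Q}$ is essentially free. Lemma~\ref{lem:pq_par_dependence} gives $\partial_{a_j} \sfq_k(x) = (-1)^j R(x,a_j) \sfq_{k,j}$, which is precisely the negative of the residue contribution identified in the previous paragraph; thus $\partial_{a_j} \mathsf{Q} = - A_j (x-a_j)^{-1} \mathsf{Q}$ is immediate once $A_j$ has been identified.

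The argument for $\mathsf{P}$ is parallel after accounting for the row-vector convention, so that the Lax matrices act on the right. I would replay the computation using the analogous relations in \eqref{eq:der_pa} and \eqref{eq:n_lower} for $\partial_x \sfp_k$ and $\sfp_p$, together with the formula $L(a_j,x) = (a_j - x)^{-1} \sum_m (-1)^m \sfq_{p-m,j} \sfp_{m-1}(x)$ from \eqref{eq:RL_xy}. The superdiagonal $1$'s and first row of $\widetilde{A}_\infty$ come from the shift and from $-v_k \sfp_0$, the last column comes from expanding $\sfp_p$, and the residue poles give $\widetilde{A}_j$; the $a_j$-derivative formula then matches automatically.

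The main obstacle is purely computational: tracking the alternating signs when reindexing $\sfq_p$ and $\sfp_p$ via \eqref{eq:n_lower} so that the last row of $A_\infty$ and last column of $\widetilde{A}_\infty$ are recovered in exactly the form stated, and keeping straight the factors of $(-1)^{j}$, $(-1)^{k}$, $(-1)^{p-k'}$ that arise when one converts between the expansions of $R(x,a_j)$ and $L(a_j,x)$. Once one is systematic about writing every summation in the form $\sum_{k'=0}^{p-1} (\cdot) \sfq_{k'}$ (respectively $\sum_{k=0}^{p-1} \sfp_k (\cdot)$) and uses the boundary identity $\tilde{\sfq}_k(a_j) = \sfq_{k,j}$, $\tilde{\sfp}_k(a_j) = \sfp_{k,j}$, each of the four equations reduces to matching polynomials in $(x-a_j)^{-1}$ coefficient-by-coefficient, which is where the bulk of the work lies.
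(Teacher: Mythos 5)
Your proposal is correct and follows essentially the same route as the paper's own proof: rows $k=0,\ldots,p-2$ from \eqref{eq:der_pa}, the last row/column from \eqref{eq:n_lower}, the pole parts $A_j$ by expanding $R(x,a_j)$ and $L(a_j,x)$ via \eqref{eq:RL_xy}, and the $a_j$-dependence read off directly from Lemma~\ref{lem:pq_par_dependence}. The sign bookkeeping you flag as the main obstacle checks out against the stated entries of $A_j$ and $A_\infty$.
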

\begin{proof}
From the expression \eqref{eq:der_pa} together with \eqref{eq:RL_xy}, we obtain the $x$-derivative for $k = 0,\ldots,p-2$,
\begin{subequations}
\begin{align}
    \pdv{\sfq_k}{x} & = \sfq_{k+1} - u_k \sfq - \sum_{j=1}^{2m} \sum_{\ell = 1}^p (-1)^{j+\ell} \frac{\sfq_{k,j} \sfp_{\ell - 1,j}}{x - a_j} \sfq_{p-\ell}
    \, , \\
    \pdv{\sfp_k}{x} & = \sfp_{k+1} - v_k \sfp - \sum_{j=1}^{2m} \sum_{\ell = 1}^p (-1)^{j+\ell} \frac{\sfq_{p-\ell,j} \sfp_{k,j}}{x - a_j} \sfp_{\ell-1}
    \, .
\end{align}
\end{subequations}
For $k = p-1$, we may also use \eqref{eq:n_lower} to obtain
\begin{subequations}
\begin{align}
    \pdv{\sfq_{p-1}}{x} & = x \sfq - u_{p-1} \sfq - \sum_{k=1}^p (-1)^k v_{k-1} \sfq_{p-k} - \sum_{j=1}^{2m} \sum_{\ell = 1}^p (-1)^{j+\ell} \frac{\sfq_{p-1,j} \sfp_{\ell - 1,j}}{x - a_j} \sfq_{p-\ell}
    \, , \\
    \pdv{\sfp_{p-1}}{x} & = (-1)^p x \sfp - v_{p-1} \sfp - \sum_{k=1}^p (-1)^k u_{k-1} \sfp_{p-k} - \sum_{j=1}^{2m} \sum_{\ell = 1}^p (-1)^{j+\ell} \frac{\sfq_{p-\ell,j} \sfp_{k,j}}{x - a_j} \sfp_{\ell-1}    
    \, .
\end{align}
\end{subequations}
The parameter dependence immediately follows from Lemma~\ref{lem:pq_par_dependence}.
\end{proof}
\begin{remark}
$A_j^\text{T} = \widetilde{A}_j$ for $j = 1,\ldots,2m$.
\end{remark}
\begin{lemma}
The matrix coefficients $(A_j,\widetilde{A}_j)_{j = 1,\ldots,2m,\infty}$ are traceless.
\end{lemma}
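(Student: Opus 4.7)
The plan is to compute the trace of each matrix directly from its explicit entries and to reduce the result to the two identities already established in the excerpt: Lemma~\ref{lem:pq_rel} (the algebraic relation among the auxiliary wave functions at a common argument) and the normalization~\eqref{eq:u0=v0} saying $u_0=v_0$.

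First I would handle the finite indices $j = 1,\ldots,2m$. From the definition $(A_j)_{k,k'} = (-1)^{j+p-k'+1}\sfq_{k,j}\sfp_{p-k'-1,j}$, taking $k=k'$ and summing over $k = 0,\ldots,p-1$ and relabeling $\ell = p-k$ gives
\begin{align}
\Tr A_j = \sum_{k=0}^{p-1} (-1)^{j+p-k+1} \sfq_{k,j}\sfp_{p-k-1,j}
       = (-1)^{j+1}\sum_{\ell=1}^{p} (-1)^{\ell} \sfq_{p-\ell,j}\sfp_{\ell-1,j}.
\end{align}
The inner sum is exactly the combination appearing in Lemma~\ref{lem:pq_rel} evaluated at $x = a_j$, hence vanishes. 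The computation for $\widetilde{A}_j$ is identical after an analogous index relabeling, using $(\widetilde{A}_j)_{k,k'} = (-1)^{j+p-k+1}\sfq_{p-k-1,j}\sfp_{k',j}$.

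The remaining case is $A_\infty$ and $\widetilde{A}_\infty$, where the diagonal entries can be read off directly from the displayed matrices: for $A_\infty$ they are $(-u_0,0,\ldots,0,v_0)$, and for $\widetilde{A}_\infty$ they are $(-v_0,0,\ldots,0,u_0)$. In both cases the trace reduces to $\pm(v_0-u_0)$, which vanishes by the identity~\eqref{eq:u0=v0}.

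No step should present a serious obstacle; the only care needed is bookkeeping of the sign $(-1)^{j+p-k+1}$ under the substitution $\ell = p-k$, to make sure the result matches the form in Lemma~\ref{lem:pq_rel} up to the overall constant $(-1)^{j+1}$. Once this is verified, both tracelessness statements follow in one line each.
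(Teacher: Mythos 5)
Your proof is correct and follows the same route as the paper's: the paper's proof simply cites Lemma~\ref{lem:pq_rel} for $j=1,\ldots,2m$ and the relation \eqref{eq:u0=v0} for $j=\infty$, which is exactly the reduction you carry out, with the index bookkeeping made explicit. The sign computation under the relabeling $\ell = p-k$ checks out, so nothing is missing.
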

\begin{proof}
It follows from Lemma~\ref{lem:pq_rel} for $j = 1,\ldots,2m$, and from the relation \eqref{eq:u0=v0} for $j = \infty$.
\end{proof}

\begin{lemma}
The total derivative of the Fredholm determinant is given in terms of the matrices $(A_j)_{j = 1,\ldots,2m,\infty}$ as follows,
\begin{align}
    \dd \log F(\underline{a}) = \frac{1}{2} \sum_{i \neq j}^{2m} \tr A_i A_j \frac{\dd a_i - \dd a_j}{a_i - a_j} + \sum_{j = 1}^{2m} \tr A_j A_\infty\Big|_{x = a_j} \dd a_j
    \, .
    \label{eq:dF}
\end{align}
\end{lemma}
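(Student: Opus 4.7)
The plan is to start from the dependence $\dd \log F = \sum_{j=1}^{2m} (-1)^{j-1} H_j \, \dd a_j$ established in \eqref{eq:logF_dependence}, with $H_j$ given explicitly by \eqref{eq:Hj}, and to match it term-by-term against the right-hand side of \eqref{eq:dF}. The Hamiltonian $H_j$ splits naturally into a ``local'' piece (depending only on quantities at $a_j$ together with the integrals $u_k,v_k$) and a ``coupling'' piece $-\sum_{\ell(\neq j)}(-1)^\ell (a_\ell - a_j)\,R_{j\ell}R_{\ell j}$. These two pieces will be identified respectively with $\sum_j \tr A_j A_\infty|_{x=a_j}\,\dd a_j$ and with $\tfrac{1}{2}\sum_{i\neq j}\tr A_i A_j\,(\dd a_i-\dd a_j)/(a_i-a_j)$.

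For the coupling piece, I would first establish the bilinear identity
\[
  \tr A_i A_j \;=\; (-1)^{i+j+1}\,(a_i-a_j)^2\, R_{ij} R_{ji}\qquad (i \neq j),
\]
by inserting $(A_j)_{k,k'}=(-1)^{j+p-k'+1}\sfq_{k,j}\sfp_{p-k'-1,j}$ into the trace and recognizing the resulting double sum, after the reindexing $k\mapsto p-k-1$, as the product of the two resolvent sums in \eqref{eq:RL_xy}. Because $(a_i-a_j)^{-1}\tr A_i A_j$ and $(\dd a_i - \dd a_j)$ are both antisymmetric under $i\leftrightarrow j$, their product is symmetric, so the factor $\tfrac{1}{2}$ collapses the unordered sum and collecting the coefficient of $\dd a_j$ yields $\sum_{\ell\neq j}(-1)^{j+\ell}(a_\ell-a_j)\,R_{j\ell}R_{\ell j}$, which is precisely $(-1)^{j-1}$ times the coupling part of $H_j$.

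For the local piece, I would expand $\tr A_j A_\infty$ directly using the sparse structure of $A_\infty$. The superdiagonal of ones produces the $\sfq_{p-k,j}\sfp_{k,j}$ contributions; the first column of $-u_k$'s produces the $u_{p-k}\sfq_{0,j}\sfp_{k-1,j}$ contributions; the last row outside its first entry produces the $v_{k-1}\sfq_{p-k,j}\sfp_{0,j}$ contributions; and the distinguished corner entry $(A_\infty)_{p-1,0}=x-u_{p-1}+(-1)^{p-1}v_{p-1}$, evaluated at $x=a_j$, simultaneously delivers the $a_j\sfq_{0,j}\sfp_{0,j}$ term and fills in the $k=p$ boundary of the $u$- and $v$-sums that were otherwise absent. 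Reassembling these with the correct overall sign $(-1)^{j-1}$ should reproduce the local part of \eqref{eq:Hj} exactly.

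The main obstacle is the sign bookkeeping in the second step: the factors $(-1)^{j+p-k'+1}$ coming from $A_j$ must be aligned with the alternating signs $(-1)^{p-k'-1}$ of the last row of $A_\infty$ under a consistent reindexing $k'\mapsto p-k$, and one must verify that the ``anomalous'' entry $(A_\infty)_{p-1,0}$ contributes the $x$-linear term and closes the index range of the $u_{p-1},v_{p-1}$ sums without overcounting. The identity $u_0=v_0$ from \eqref{eq:u0=v0}, which underlies the tracelessness of $A_\infty$ at infinity, should drop out naturally as a consistency check in this matching, after which \eqref{eq:dF} follows by summing over $j=1,\dots,2m$.
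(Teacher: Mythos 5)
Your proposal is correct and follows essentially the same route as the paper: it establishes the two trace identities $\tr A_iA_j = (-1)^{i+j+1}(a_i-a_j)^2R_{ij}R_{ji}$ and the expansion of $\tr A_jA_\infty\big|_{x=a_j}$, identifies them with the coupling and local pieces of $H_j$ in \eqref{eq:Hj}, and substitutes into \eqref{eq:logF_dependence}, with the antisymmetry argument that collapses the factor $\tfrac{1}{2}$ being exactly what the paper leaves implicit. The only nitpick is an indexing slip: the corner entry of $A_\infty$ supplies the $k=1$ term (the $u_{p-1}$ contribution) of the $u$-sum and the $k=p$ term (the $v_{p-1}$ contribution) of the $v$-sum, not the $k=p$ boundary of both, but this does not affect the argument.
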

\begin{proof}
From the matrices $(A_j)_{j = 1,\ldots,2m,\infty}$, we obtain
\begin{subequations}
\begin{align}
    (-1)^{j-1} \tr A_j A_\ell & = (-1)^{\ell} (a_j - a_\ell)^2 R_{j\ell} R_{\ell j}
    \, , \\
    (-1)^{j-1} \tr A_j A_\infty\Big|_{x = a_j} & = - a_j \sfq_{0,j} \sfp_{0,j} - \sum_{k=1}^{p-1} (-1)^k \sfq_{p-k,j} \sfp_{k,j} 
    \nonumber \\ & \qquad 
    + \sum_{k=1}^p (-1)^k \qty( \sfq_{p-k,j} \sfp_{0,j} v_{k-1} - \sfq_{0,j} \sfp_{k-1,j} u_{p-k} ) 
    \, .
\end{align}
\end{subequations}
Hence, the Hamiltonian \eqref{eq:Hj} is given by
\begin{align}
    (-1)^{j-1} H_j & = \sum_{\ell (\neq j)}^{2m} \frac{\tr A_j A_\ell}{a_j - a_\ell} + \tr A_j A_\infty\Big|_{x = a_j}
    \, .
    \label{eq:Hamltonian_Amatrix}
\end{align}
Substituting this expression to \eqref{eq:logF_dependence}, we obtain \eqref{eq:dF}.
\end{proof}

\begin{proposition}[Schlesinger equation]
The matrix coefficients obey the Schlesinger equation,
\begin{align}
    \pdv{A_i}{a_j} = \frac{\comm{A_i}{A_j}}{a_i - a_j}
    \, , \qquad
    \pdv{A_j}{a_j} = - \sum_{\ell (\neq j)}^{2m} \frac{\comm{A_j}{A_\ell}}{a_j - a_\ell} -\comm{A_j}{A_\infty}\Big|_{x = a_j}
    \, .
    \label{eq:Schlesinger_eq}
\end{align}
\end{proposition}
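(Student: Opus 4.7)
The plan is to derive the Schlesinger equations~\eqref{eq:Schlesinger_eq} as the zero-curvature (compatibility) condition of the Lax pair established in the preceding lemma. Writing $L(x) = \sum_{\ell=1}^{2m} A_\ell/(x-a_\ell) + A_\infty(x)$ and $M_j(x) = -A_j/(x-a_j)$, the two systems $\partial_x \mathsf{Q} = L \mathsf{Q}$ and $\partial_{a_j}\mathsf{Q} = M_j \mathsf{Q}$ originate from the same auxiliary wave function, so their mixed partials must agree, forcing the identity $\partial_{a_j} L - \partial_x M_j + \comm{L}{M_j} = 0$ as a rational function of $x$ with matrix coefficients. The two Schlesinger equations will then appear by matching residues at the different poles of this identity.

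First I would observe that the $(x-a_j)^{-2}$ singularity in $\partial_{a_j} L$ (produced by $\partial_{a_j}[A_j/(x-a_j)]$) cancels exactly against $\partial_x M_j = A_j/(x-a_j)^2$, so only simple-pole contributions remain. Next I would apply the partial-fraction identity
\begin{equation*}
    \frac{1}{(x-a_\ell)(x-a_j)} = \frac{1}{a_\ell - a_j}\left(\frac{1}{x-a_\ell} - \frac{1}{x-a_j}\right) \qquad (\ell \neq j)
\end{equation*}
to put the commutator in standard form:
\begin{equation*}
    \comm{L}{M_j} = -\sum_{\ell \neq j} \frac{\comm{A_\ell}{A_j}}{a_\ell - a_j}\left(\frac{1}{x-a_\ell} - \frac{1}{x-a_j}\right) - \frac{\comm{A_\infty(x)}{A_j}}{x-a_j}.
\end{equation*}
Extracting the residue at $x = a_i$ with $i \neq j$ immediately yields $\partial_{a_j} A_i = \comm{A_i}{A_j}/(a_i-a_j)$, and extracting the residue at $x = a_j$, where $A_\infty(x)$ contributes $A_\infty\big|_{x=a_j}$, produces the second Schlesinger equation after elementary sign rearrangement.

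The main obstacle is that $A_\infty(x)$ depends linearly on $x$ through its $(p-1,0)$ entry, so one must also match the \emph{regular} (constant-in-$x$) part of the compatibility equation. Decomposing $A_\infty(x) = xE + A_\infty^{(0)}$ with $E$ the rank-one matrix supported at position $(p-1, 0)$, this regular part reads $\partial_{a_j} A_\infty^{(0)} = \comm{E}{A_j}$, a consistency relation sitting alongside the Schlesinger equations. Verifying it requires combining the parameter derivatives~\eqref{eq:uv_dependence} of $(u_k, v_k)$ with the formulas~\eqref{eq:n_lower} expressing $(\sfq_p, \sfp_p)$ through the lower-order auxiliary wave functions, and then comparing entries; the computation is mechanical but calls for careful index and sign bookkeeping. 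Once this regular-part consistency is in hand, the residue-matching argument closes the argument and establishes~\eqref{eq:Schlesinger_eq}.
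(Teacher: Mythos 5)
Your route — deriving \eqref{eq:Schlesinger_eq} as the residue decomposition of a zero-curvature identity $\partial_{a_j}L-\partial_x M_j+\comm{L}{M_j}=0$ — is genuinely different from the paper's, and your residue bookkeeping (cancellation of the double pole at $a_j$, the partial-fraction step, the extraction of $A_\infty\big|_{x=a_j}$ at the pole $x=a_j$, and the observation that the $x$-linear part of $A_\infty$ forces a separate constant-in-$x$ consistency relation) is all correct. However, there is a genuine gap at the very first step: the fact that the single column vector $\mathsf{Q}$ satisfies both $\partial_x\mathsf{Q}=L\mathsf{Q}$ and $\partial_{a_j}\mathsf{Q}=M_j\mathsf{Q}$ only gives, from equality of mixed partials, $\bigl(\partial_{a_j}L-\partial_x M_j+\comm{L}{M_j}\bigr)\mathsf{Q}=0$, i.e.\ that the curvature matrix annihilates one particular vector at each $x$ — not that it vanishes. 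To conclude $\Omega(x)\equiv 0$ one needs an invertible fundamental matrix of common solutions, which the Lax lemma does not construct (it only produces the one solution built from the auxiliary wave functions $\sfq_k$, plus the row solution $\mathsf{P}$, which together still do not span). The zero-curvature condition is in fact \emph{equivalent} to the Schlesinger equations here, so assuming it is essentially assuming the conclusion.

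The paper avoids this by verifying the equations directly from the explicit rank-one form of the residue matrices: for $i\neq j$ it differentiates $(A_i)_{k,k'}=(-1)^{i+p-k'+1}\sfq_{k,i}\sfp_{p-k'-1,i}$ using the parameter dependence \eqref{lem:pq_par_dependence_ij} and the resolvent formula for $R_{ij}$, recognizing the result as $(A_iA_j-A_jA_i)_{k,k'}/(a_i-a_j)$; for $i=j$ it invokes the Hamilton equations \eqref{eq:Hamilton_eq} together with the expression \eqref{eq:Hamltonian_Amatrix} of $H_j$ through traces of the $A$-matrices. To salvage your argument you would either have to carry out essentially that same direct computation (at which point the zero-curvature framing is scaffolding rather than proof), or first establish that the deformation equations admit a full fundamental system — e.g.\ via a Riemann--Hilbert characterization — which is machinery the paper does not set up. The regular-part relation $\partial_{a_j}A_\infty^{(0)}=\comm{E}{A_j}$ you identify is a true consequence that does follow from \eqref{eq:uv_dependence} and \eqref{eq:n_lower}, but checking it does not repair the missing invertibility step.
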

\begin{proof}
For $i \neq j$, we may use the relation \eqref{lem:pq_par_dependence_ij} to obtain
\begin{align}
    \pdv{(A_i)_{k,k'}}{a_j} & = (-1)^{i+j+p-k'+1} \qty( R_{ij} \sfq_{k,j} \sfp_{p-k'-1,i} + \sfq_{k,i} \sfp_{p-k'-1,j} R_{ji})
    \nonumber \\
    & = \frac{(-1)^{i+j+k'}}{a_i - a_j} \sum_{\ell = 0}^{p-1} (-1)^\ell \qty( \sfq_{k,i} \sfp_{p-\ell-1,i} \sfq_{\ell,j} \sfp_{p-k'-1,j} - (i \leftrightarrow j) )
    \nonumber \\
    & = \frac{(A_i A_j - A_j A_i)_{k,k'}}{a_i - a_j}
    \, .
\end{align}
For $i = j$, we use the Hamilton equations \eqref{eq:Hamilton_eq} to obtain
\begin{align}
    \pdv{(A_j)_{k,k'}}{a_j} & = (-1)^{j-1} \qty( (-1)^{k + k'} \pdv{H_j}{\sfp_{p-k-1,j}} \sfp_{p-k'-1,j} - \sfq_{k,j} \pdv{H_j}{\sfq_{k',j}} )
    \, .
\end{align}
Then, evaluating the derivative of the Hamiltonian with $(\sfq_{k,j}, \sfp_{k,j})$ based on the expression \eqref{eq:Hamltonian_Amatrix}, we arrive at the Schlesinger equation \eqref{eq:Schlesinger_eq}.
\end{proof}

\subsection{Interlude: Comment on unitary matrix integral}

It has been known that the Fredholm determinant associated with the Schur measure kernel studied in \S\ref{sec:Schur_measure} provides an alternative form of the unitary matrix model~\cite{Borodin:2000IEOT}:  
\begin{align}
    \frac{\mathcal{Z}_N}{Z(\mathsf{X},\mathsf{Y})} = \det (1 - K)_{ [N+\frac{1}{2}, \infty) }
\end{align}
where the partition function of the unitary matrix model is given by
\begin{align}
  \mathcal{Z}_N 
  & = \int_{\mathrm{U}(N)} \hspace{-1em} \dd{U} \exp\qty( \sum_{n=1}^\infty \qty( t'_n \tr U^n + \tilde{t}'_n \tr U^{-n}) )
  % \nonumber \\ &
  = \sum_{\lambda_1 \le N} s_\lambda(\mathsf{X}) s_\lambda(\mathsf{Y}).
  \label{pf}
\end{align}
The second equality is obtained by the character expansion with another set of the Miwa variables compared to the previous case~\eqref{eq:Miwa_var},
\begin{align}
    t'_n = \frac{(-1)^{n-1}}{n} \sum_{i=1}^\infty \mathsf{x}_i^n
    \, , \qquad
    \tilde{t}'_n = \frac{(-1)^{n-1}}{n} \sum_{i=1}^\infty \mathsf{y}_i^n
    \, .
    \label{eq:Miwa_var_mod}
\end{align}
In our previous works \cite{Kimura:2020sud, kimura2021unitary}, we considered the $p$-even higher asymptotic analysis of the random partitions and related unitary matrix models in order to study the phase structure of the corresponding physical models of interest. 
This analysis is based on the fact that the scaling limit considered in~\S\ref{sec:scaling} for even $p$ corresponds to the edge scaling limit, namely the large $N$ limit of the unitary matrix model.
For the $p$-odd case, on the other hand, the relation between the cusp scaling limit and the large $N$ limit is not clear at this moment.
We remark that it has been reported that the phase transition does not happen for the unitary matrix model in the specific scaling limit, which may correspond to the cusp limit~\cite{Hisakado:1996di,Hisakado:1996zg,Hisakado:1997ef}.

\if0
In this paper, however, we consider the $p$-odd case and one could naturally ask for the corresponding unitary matrix models, their asymptotic analysis and possible implications for phase transition. 
Indeed, in the odd case, imposing the constraint on the largest entry of the partition, the summation over the random partition is rewritten as a particular unitary matrix integral, i.e. the generalized Gross-Witten-Wadia model, with two different sets of couplings $t_n'$ and $\tilde{t}_n'$ in the potential~\cite{Borodin:2000IEOT}:  
\begin{align}
  \mathcal{Z}_N 
  = \sum_{\lambda_1 \le N} s_\lambda(\mathsf{X}) s_\lambda(\mathsf{Y})
  = \int_{\mathrm{U}(N)} \hspace{-1em} \dd{U} \exp\qty( \sum_{n=1}^\infty \qty( t'_n \tr U^n + \tilde{t}'_n \tr U^{-n}) ),
  \label{pf}
\end{align}
where we use another set of the Miwa variables compared to the previous case~\eqref{eq:Miwa_var},
\begin{align}
    t'_n = \frac{(-1)^{n-1}}{n} \sum_{i=1}^\infty \mathsf{x}_i^n
    \, , \qquad
    \tilde{t}'_n = \frac{(-1)^{n-1}}{n} \sum_{i=1}^\infty \mathsf{y}_i^n
    \, .
    \label{eq:Miwa_var_mod}
\end{align}
In principle, to reveal a possible phase structure, one needs to study the asymptotic behavior of the  partition function and the free energy at the vicinity of the critical point. In this case, that is the generating function of the Pearcey process, and its asymptotics at the vicinity of the two sides of the cusp point. The dynamics of the unitary matrix model \eqref{pf} can be studied using the integrable operator formalism and in particular the Fredholm determinants~\cite{kimura2021unitary}. In our case, the dynamics near the cusp point is possibly given by the asymptotic expansion of the Fredholm determinant with the Pearcey kernel obtained in~\cite{dai2021asymptotics}.  Heuristically speaking, we conjecture the same asymptotic behavior on two sides of the cusp point and thus a continuous free energy through the cusp point. This possibly indicates that the cusp point is a smooth point and no phase transition is associated with that. We postpone the details of such analysis to future works. Using other plausible techniques, similar analysis is performed for unitary matrix model with different couplings ($t_n \neq \tilde{t}_n$) and no phase transition associated with the cusp scaling limit is found~\cite{Hisakado:1996di,Hisakado:1996zg,Hisakado:1997ef}. 
\fi

\section{Level spacing distribution: odd $p$}\label{sec:level_spacing}

Based on the general formalism of the gap probability discussed in \S\ref{sec:gap_prob}, which is for the determinantal point process defined by the limiting kernel, we consider the specific case with the interval
\begin{align}
    I = [-s,s]
    \, , \qquad 
    (a_1,a_2) = (-s,s)
    \, ,
    \label{eq:spacing_interval}
\end{align}
from which we obtain the level spacing distribution by taking the second derivative.
We now focus on the case for odd $p$.
In this case, we have the following auxiliary functions,
\begin{align}
    (\sfq_k, \sfp_k) := (\sfq_{k,2}, \sfp_{k,2}) = (-1)^k (\sfq_{k,1}, -\sfp_{k,1})
    \, .
    \label{eq:qp_parity}
\end{align}
Namely, $(\sfq_{2k},\sfp_{2k-1})$ are even, $(\sfq_{2k-1},\sfp_{2k})$ are odd functions.
\begin{lemma}\label{lem:uv_der}
Denoting the $s$-derivative of a function $f$ by
\begin{align}
 \dv{f}{s} = f'
 \, ,
\end{align}
we have the following relations for the auxiliary functions,
\begin{subequations}
\begin{gather}
    u_{2k-1}' = 2 \sfq_{2k-1} \sfp
    \, , \qquad
    v_{2k-1}' = 2 \sfq \sfp_{2k-1}
    \, , \label{eq:uv_der} \\
    u_{2k} = v_{2k} = 0
    \, .
 \label{eq:uv_even_zero}
\end{gather}
\end{subequations}
\end{lemma}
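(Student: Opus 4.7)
The plan is to differentiate $u_k$ and $v_k$ with respect to $s$ via the chain rule applied to their parameter dependence \eqref{eq:uv_dependence}, then exploit the parity relations \eqref{eq:qp_parity} inherited from the symmetry of the interval $I=[-s,s]$.

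First, since $a_1=-s$ and $a_2=s$, so that $\dv*{a_1}{s}=-1$ and $\dv*{a_2}{s}=+1$, the chain rule together with \eqref{eq:uv_dependence} gives
\begin{align*}
 u_k'(s) & = \pdv{u_k}{a_2} - \pdv{u_k}{a_1}
 = \sfq_k(s)\sfp_0(s) + \sfq_k(-s)\sfp_0(-s)
 \, , \\
 v_k'(s) & = \pdv{v_k}{a_2} - \pdv{v_k}{a_1}
 = \sfq_0(s)\sfp_k(s) + \sfq_0(-s)\sfp_k(-s)
 \, .
\end{align*}

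Next, I would invoke the parity relations \eqref{eq:qp_parity}: these say that $\sfq_{2\ell}$ and $\sfp_{2\ell-1}$ are even while $\sfq_{2\ell-1}$ and $\sfp_{2\ell}$ are odd. In particular $\sfp_0=\sfp$ is odd and $\sfq_0=\sfq$ is even. Substituting into the formulas above:
\begin{itemize}
\item For $k=2\ell-1$: both $\sfq_{2\ell-1}(-s)=-\sfq_{2\ell-1}(s)$ and $\sfp(-s)=-\sfp(s)$, so the two terms add and yield $u_{2\ell-1}'=2\sfq_{2\ell-1}\sfp$; similarly $\sfp_{2\ell-1}(-s)=+\sfp_{2\ell-1}(s)$ and $\sfq(-s)=+\sfq(s)$ give $v_{2\ell-1}'=2\sfq\,\sfp_{2\ell-1}$.
\item For $k=2\ell$: the two products have opposite signs (even times odd versus even times odd flipped twice), so $u_{2\ell}'=v_{2\ell}'=0$.
\end{itemize}
This gives \eqref{eq:uv_der} and shows that $u_{2k}$ and $v_{2k}$ are $s$-independent.

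Finally, to upgrade $u_{2k}',v_{2k}'=0$ into $u_{2k}=v_{2k}=0$, I would use the boundary behaviour at $s\to 0$: as $I$ collapses to a point, the integral representation \eqref{eq:uv_integral} shows $u_k,v_k\to 0$, fixing the constant of integration to zero. (Equivalently, one may invoke the argument already used in Lemma~\ref{lem:IOM}, or directly note that for any $s$, the parity $\phi$ even, $\psi$ odd, together with the symmetry $R(-x,-y)=R(x,y)$ inherited from $K(-x,-y)=K(x,y)$ and the symmetry of $I$, makes the integrand in \eqref{eq:uv_integral} odd under $(x,y)\to(-x,-y)$ when $k$ is even, so the integral over the symmetric domain vanishes.) The only mildly delicate point is checking this parity of $R$, but it follows immediately because $\hat K=K\Pi_I$ commutes with the parity operator when $I$ is symmetric, hence so does the resolvent $R=\hat K/(1-\hat K)$.
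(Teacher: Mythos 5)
Your proposal is correct and follows essentially the same route as the paper: differentiate via the parameter dependence \eqref{eq:uv_dependence} with $\dv*{a_1}{s}=-1$, $\dv*{a_2}{s}=+1$, apply the parity relations \eqref{eq:qp_parity}, and fix the constant for even index by the vanishing of $u_k,v_k$ as $s\to 0$. The extra remark about the parity of the resolvent is a valid alternative for the last step but is not needed.
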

\begin{proof}
From \eqref{eq:uv_dependence} and \eqref{eq:qp_parity}, the total variations of the auxiliary functions are given by
\begin{subequations}
\begin{align}
    \dd u_\ell %= \sum_{j=1}^{2m} (-1)^j \sfq_{k,j} \sfp_{0,j} \dd a_j 
    & = - \sfq_{\ell,1} \sfp_{0,1} \dd {(-s)} + \sfq_{\ell,2} \sfp_{0,2} \dd{s}
    =
    \begin{cases}
    2 \sfq_\ell \sfp \dd{s} & (\ell:~\text{odd}) \\
    0 & (\ell:~\text{even})
    \end{cases} \\[.5em]
    \dd v_\ell %= \sum_{j=1}^{2m} (-1)^j \sfq_{k,j} \sfp_{0,j} \dd a_j 
    & = - \sfq_{0,1} \sfp_{\ell,1} \dd {(-s)} + \sfq_{0,2} \sfp_{\ell,2} \dd{s}
    =
    \begin{cases}
    2 \sfq \sfp_\ell \dd{s} & (\ell:~\text{odd}) \\
    0 & (\ell:~\text{even})
    \end{cases}
\end{align}
\end{subequations}
Recalling $u_\ell$, $v_\ell \xrightarrow{s \to 0} 0$, we conclude that $u_\ell$, $v_\ell = 0$ for even $\ell$.
\end{proof}

%\paragraph{Derivatives of $(\sfq_k,\sfp_k)$}

We then consider the $s$-dependence of the auxiliary wave functions.
The total derivatives of the auxiliary wave functions are given in \eqref{eq:qp_dependence}.
However, as the interval parameters are taken as in \eqref{eq:spacing_interval}, we have to take into account both the contributions of $a_{j=1,2}$. 
\begin{lemma}\label{lem:s-derivative_qp}
The $s$-derivative of the auxiliary wave functions is given as follows,
\begin{subequations}
\begin{align}
    \sfq_k' & = \sfq_{k+1} - u_k \sfq + 2 (-1)^k R \sfq_k \, , \\
    \sfp_k' & = \sfp_{k+1} - v_k \sfp - 2 (-1)^k R \sfp_k \, ,
\end{align}
\end{subequations}
where we define
\begin{align}
 R & := R_{12} = R_{21}
% \nonumber \\ &
 = \frac{1}{2 s} \sum_{k = 1}^p \sfq_{p - k} \sfp_{k - 1}
 \, .
 \label{eq:R12}
\end{align}
\end{lemma}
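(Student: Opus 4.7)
The plan is to combine the general parameter dependence of the auxiliary wave functions, equation \eqref{eq:qp_dependence} together with Lemma~\ref{lem:pq_par_dependence}, with the chain rule dictated by the specific parametrization $(a_1, a_2) = (-s, s)$ of the interval. Since both endpoints depend on $s$, I have
\begin{align*}
    \dv{}{s} = -\pdv{}{a_1} + \pdv{}{a_2}
    \, .
\end{align*}
Applying this operator to $\sfq_{k,2}$ and using \eqref{eq:qp_dependence} for the $a_2$-derivative together with Lemma~\ref{lem:pq_par_dependence} for the $a_1$-derivative (noting that $\ell \neq j = 2$ forces $\ell = 1$ in the sum, and similarly for the off-diagonal piece), I obtain
\begin{align*}
    \sfq_k' = \sfq_{k+1,2} - \sfq_{0,2} u_k + R_{21} \sfq_{k,1} + R_{21} \sfq_{k,1}
    \, ,
\end{align*}
where the two identical $R_{21}\sfq_{k,1}$ contributions come from the two signs $(-1)^1 = -1$ multiplied by the outer sign in the chain rule. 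The analogous computation for $\sfp_k$ produces $2\sfp_{k,1} R_{12}$.

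The second step is to convert the auxiliary functions evaluated at $a_1 = -s$ into those evaluated at $a_2 = s$ by means of the parity relation~\eqref{eq:qp_parity}, namely $\sfq_{k,1} = (-1)^k \sfq_k$ and $\sfp_{k,1} = -(-1)^k \sfp_k$. Substituting produces exactly the claimed formulas $\sfq_k' = \sfq_{k+1} - u_k \sfq + 2 (-1)^k R \sfq_k$ and $\sfp_k' = \sfp_{k+1} - v_k \sfp - 2 (-1)^k R \sfp_k$ once I identify $R := R_{12} = R_{21}$.

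For the representation of $R$, I start from the Christoffel--Darboux-type formula~\eqref{eq:RL_xy} for the resolvent $R(x,y)$ and evaluate it at $(x,y) = (a_2, a_1) = (s, -s)$, so that the prefactor $1/(x-y)$ becomes $1/(2s)$. The values $\sfq_{p-k}(s) = \sfq_{p-k}$ are unchanged, whereas $\sfp_{k-1}(-s) = \sfp_{k-1,1} = (-1)^k \sfp_{k-1}$ via~\eqref{eq:qp_parity}; this combines with the factor $(-1)^k$ in the sum to give the claimed $\frac{1}{2s}\sum_k \sfq_{p-k}\sfp_{k-1}$. The equality $R_{12} = R_{21}$ is verified by the same computation starting from $(a_1, a_2)$, where the sign change in $1/(x-y)$ is cancelled by an extra $(-1)^p = -1$ (since $p$ is odd) coming from the parity of $\sfq_{p-k,1}$.

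The only non-routine step is tracking the parity signs: the fact that $p$ is odd is essential for the symmetry $R_{12} = R_{21}$, and the coefficient $2(-1)^k$ that multiplies $R$ in the final formulas emerges from combining the chain-rule doubling with the parity weight attached to $\sfq_{k,1}$ or $\sfp_{k,1}$. Once the sign conventions from~\eqref{eq:qp_parity} are applied consistently, no further nontrivial input is needed.
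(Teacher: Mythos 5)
Your proposal is correct and follows essentially the same route as the paper: the paper also computes the total variation of $\sfq_k(x)$ by adding the $x$-derivative \eqref{eq:der_pa} to the parameter derivatives from Lemma~\ref{lem:pq_par_dependence}, sets $x \to s$ with $(\dd{a_1},\dd{a_2}) = (-\dd{s},\dd{s})$ so the two off-diagonal resolvent terms double up, and then applies the parity relation \eqref{eq:qp_parity} to produce the factor $2(-1)^k R$. Your explicit verification of the formula for $R$ from \eqref{eq:RL_xy}, including the role of odd $p$ in $R_{12}=R_{21}$, is a correct filling-in of a step the paper states without derivation.
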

\begin{proof}
The total variation of $\sfq_{k}(x)$ is now given by
\begin{align}
    \dd \sfq_{k}(x) & = \qty(\sfq_{k+1}(x) - \sfq(x) u_k + R(x,a_1) \sfq_{k,1} - R(x,a_2) \sfq_{k,2} ) \dd{x}
%    \nonumber \\ & \qquad 
    - R(x,a_1) \sfq_{k,1} \dd{a_1}
    + R(x,a_2) \sfq_{k,2} \dd{a_2}
    \nonumber \\
    & \xrightarrow[\eqref{eq:spacing_interval}]{x \to s}
    \qty( \sfq_{k+1} - \sfq u_k + 2 (-1)^k R \sfq_k
    ) \dd{s}
    \, .
\end{align}
A similar expression holds for $\sfp_k$, and hence we obtain the formulas above.
\end{proof}
\begin{corollary}\label{cor:s-derivative_qp}
The $s$-derivative of $(\sfq_{p-1},\sfp_{p-1})$ is given as follows,
\begin{subequations}
\begin{align}
    \sfq_{p-1}' & = + s \sfq - \sum_{m=1}^{\lfloor p/2 \rfloor} \sfq_{p-2m} v_{2m-1} + 2 R \sfq_{p-1} \, , \\
    \sfp_{p-1}' & = - s \sfp - \sum_{m=1}^{\lfloor p/2 \rfloor} \sfp_{p-2m} u_{2m-1} - 2 R \sfp_{p-1} \, .
\end{align}
\end{subequations}
\end{corollary}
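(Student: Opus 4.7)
The proof is a direct application of Lemma~\ref{lem:s-derivative_qp} specialized to $k = p-1$, combined with the ``reduction'' identities \eqref{eq:n_lower} expressing $\sfq_p$, $\sfp_p$ in terms of the lower-index auxiliary wave functions, and with the parity identities from Lemma~\ref{lem:uv_der}. Since $p$ is odd, the index $p-1$ is even, so $(-1)^{p-1} = +1$. Hence Lemma~\ref{lem:s-derivative_qp} immediately yields
\begin{align*}
\sfq_{p-1}' &= \sfq_p - u_{p-1}\,\sfq + 2 R\,\sfq_{p-1}, \qquad
\sfp_{p-1}' = \sfp_p - v_{p-1}\,\sfp - 2 R\,\sfp_{p-1}.
\end{align*}

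The first simplification is to kill the linear $\sfq$ and $\sfp$ terms: by Lemma~\ref{lem:uv_der} we have $u_\ell = v_\ell = 0$ for every even $\ell$, and in particular $u_{p-1} = v_{p-1} = 0$ (recall $p$ is odd). The second step is to rewrite $\sfq_p$ and $\sfp_p$ via \eqref{eq:n_lower}. Evaluated at $x = a_2 = s$, and using $(-1)^p = -1$,
\begin{align*}
\sfq_p &= s\,\sfq - \sum_{k=1}^p (-1)^k\, \sfq_{p-k}\, v_{k-1}, \qquad
\sfp_p = - s\,\sfp - \sum_{k=1}^p (-1)^k\, u_{k-1}\, \sfp_{p-k}.
\end{align*}
Invoking Lemma~\ref{lem:uv_der} again, every term with even $k-1$ vanishes, so only even values of $k$ survive; for those, $(-1)^k = +1$. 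Reindexing by $k = 2m$ with $m = 1, \ldots, \lfloor p/2 \rfloor$ gives
\begin{align*}
\sfq_p = s\,\sfq - \sum_{m=1}^{\lfloor p/2 \rfloor} \sfq_{p-2m}\, v_{2m-1}, \qquad
\sfp_p = - s\,\sfp - \sum_{m=1}^{\lfloor p/2 \rfloor} u_{2m-1}\, \sfp_{p-2m}.
\end{align*}

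Substituting these two identities into the expressions for $\sfq_{p-1}'$ and $\sfp_{p-1}'$ above (and using $u_{p-1} = v_{p-1} = 0$) produces exactly the stated formulas. There is no genuine obstacle here: the entire proof is bookkeeping, and the only point where one has to be careful is the sign $(-1)^p = -1$ in the reduction formula for $\sfp_p$, which accounts for the minus sign in front of $s\,\sfp$ in the second line of the corollary. The parity structure encoded in \eqref{eq:qp_parity}, which was the source of Lemma~\ref{lem:uv_der}, is what ultimately eliminates the even-indexed $u_\ell$ and $v_\ell$ contributions and makes the final expressions so clean.
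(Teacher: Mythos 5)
Your proposal is correct and follows exactly the route the paper takes: specialize Lemma~\ref{lem:s-derivative_qp} to $k=p-1$, substitute \eqref{eq:n_lower} for $\sfq_p,\sfp_p$, and use the vanishing of the even-indexed $u_\ell, v_\ell$ from \eqref{eq:uv_even_zero} to drop the $u_{p-1}\sfq$, $v_{p-1}\sfp$ terms and reindex the surviving sum over even $k=2m$. The paper's proof is just a one-line citation of these same three ingredients, so your write-up is simply a fully expanded version of it, with all signs handled correctly.
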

\begin{proof}
This follows from Lemma~\ref{lem:s-derivative_qp} together with the relations \eqref{eq:n_lower} and \eqref{eq:uv_even_zero}.
\end{proof}

%\rem{Lax matrix for $(\mathsf{Q},\mathsf{P})$}

\begin{lemma}\label{lem:sR_derivative}
The following relation holds,
\begin{align}
 \qty(s R)' 
 %& = \sum_{k=1}^{p-1} \sfq_{p-k} \sfp_{k} - \sum_{m=1}^{\lfloor p/2 \rfloor} \qty( u_{2m-1} \sfq \sfp_{p-2m} + v_{2m-1} \sfq_{p-2m} \sfp ) \nonumber \\
 & = \sum_{k=1}^{p-1} \sfq_{p-k} \sfp_{k} - \frac{1}{2} \sum_{m=1}^{\lfloor p/2 \rfloor} \qty( u_{2m-1} v'_{p-2m} + v_{2m-1} u'_{p-2m} )
 \, .
\end{align}
\end{lemma}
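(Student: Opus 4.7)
The plan is to differentiate the definition $sR = \tfrac{1}{2}\sum_{k=1}^{p}\sfq_{p-k}\sfp_{k-1}$ directly and reorganize the result using the various $s$-derivative formulas already established. Writing $S := \sum_{k=1}^{p}\sfq_{p-k}\sfp_{k-1}$, the product rule gives
\begin{align}
  S' = \sum_{k=1}^{p}\sfq_{p-k}'\sfp_{k-1} + \sum_{k=1}^{p}\sfq_{p-k}\sfp_{k-1}'.
\end{align}
For the interior indices ($k\ge 2$ in the first sum, $k\le p-1$ in the second), I will substitute the formulas of Lemma~\ref{lem:s-derivative_qp}; for the boundary terms $\sfq_{p-1}'\sfp$ (the $k=1$ term) and $\sfq\,\sfp_{p-1}'$ (the $k=p$ term), I will use Corollary~\ref{cor:s-derivative_qp}.

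First I would check that the $R$-proportional contributions cancel. The two halves contribute respectively $2(-1)^{p-k}R\sfq_{p-k}\sfp_{k-1}$ and $-2(-1)^{k-1}R\sfq_{p-k}\sfp_{k-1}$; writing $p=2n-1$ odd gives $(-1)^{p-k}=(-1)^{k-1}$, and after noting that the boundary $2R\sfq_{p-1}\sfp$ and $-2R\sfq\,\sfp_{p-1}$ are exactly the $k=1$ and $k=p$ entries of the extended sums with the correct signs, the two telescoping sums are equal and cancel.

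Next, the $s\sfq\sfp$ terms arising from the two applications of the corollary have opposite signs and cancel. The remaining ``shift'' contributions $\sum_{k=2}^{p}\sfq_{p-k+1}\sfp_{k-1}$ and $\sum_{k=1}^{p-1}\sfq_{p-k}\sfp_{k}$ coincide after reindexing $j=k-1$ in the first, so together they give $2\sum_{k=1}^{p-1}\sfq_{p-k}\sfp_{k}$.

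It remains to treat the $u$ and $v$ contributions, namely $-\sfp\sum_{k=1}^{p-1}v_{k-1}\sfq_{p-k}-\sfp\sum_{m}v_{2m-1}\sfq_{p-2m}$ and the analogous $\sfq$-terms from the other half. Since Lemma~\ref{lem:uv_der} gives $u_{2\ell}=v_{2\ell}=0$, only odd indices survive, and reindexing reduces each of the two double sums to $\sum_{m=1}^{\lfloor p/2\rfloor}v_{2m-1}\sfq_{p-2m}\sfp$ (resp.\ $u_{2m-1}\sfq\sfp_{p-2m}$), so the two contributions are identical and combine to $-2\sfp\sum_{m}v_{2m-1}\sfq_{p-2m}-2\sfq\sum_{m}u_{2m-1}\sfp_{p-2m}$. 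Finally, applying $u_{p-2m}'=2\sfq_{p-2m}\sfp$ and $v_{p-2m}'=2\sfq\,\sfp_{p-2m}$ (again from Lemma~\ref{lem:uv_der}, valid since $p-2m$ is odd for odd $p$) rewrites this as $-\sum_{m}(u_{2m-1}v_{p-2m}'+v_{2m-1}u_{p-2m}')$. Dividing by $2$ yields the announced formula.

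The main obstacle is purely bookkeeping: keeping track of the parity signs $(-1)^{p-k}$ under the assumption that $p$ is odd, properly separating the boundary cases treated by Corollary~\ref{cor:s-derivative_qp} from the interior cases treated by Lemma~\ref{lem:s-derivative_qp}, and recognizing that the relation $(-1)^{p-k}=(-1)^{k-1}$ for odd $p$ is precisely what makes the $R$-terms telescope away.
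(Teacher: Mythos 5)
Your proposal is correct and follows essentially the same route as the paper: differentiate $2sR=\sum_{k=1}^{p}\sfq_{p-k}\sfp_{k-1}$, observe that the $R$-terms cancel termwise for odd $p$ and that the $\pm s\,\sfq\sfp$ contributions cancel, and reduce the $u,v$ terms via the parity relations $u_{2k}=v_{2k}=0$ and $u_{2k-1}'=2\sfq_{2k-1}\sfp$, $v_{2k-1}'=2\sfq\sfp_{2k-1}$. The only cosmetic difference is that you invoke Corollary~\ref{cor:s-derivative_qp} for the boundary indices directly, whereas the paper keeps $\sfq_p,\sfp_p$ throughout and substitutes \eqref{eq:n_lower} at the end via the factor $(1+(-1)^k)$.
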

\begin{proof}
This can be shown as follows,
\begin{align}
 \qty(2 s R)'
 & = \sum_{k = 1}^p \qty( \sfq_{p - k} \sfp_{k - 1} )'
 \nonumber \\
 & = \sum_{k=1}^p \qty( \sfq_{p-k+1} \sfp_{k-1} + \sfq_{p-k} \sfp_{k} - \qty( u_{p-k} \sfq \sfp_{k-1} + v_{k-1} \sfq_{p-k} \sfp ) )
 \nonumber \\
 & = \sfq \sfp_p + \sfq_p \sfp + 2 \sum_{k=1}^{p-1} \sfq_{p-k} \sfp_{k} - \sum_{k=1}^p \qty( u_{k-1} \sfq \sfp_{p-k} + v_{k-1} \sfq_{p-k} \sfp )
 \nonumber \\
 & \stackrel{\eqref{eq:n_lower}}{=}
 2 \sum_{k=1}^{p-1} \sfq_{p-k} \sfp_{k} - \sum_{k=1}^p (1 + (-1)^k)\qty( u_{k-1} \sfq \sfp_{p-k} + v_{k-1} \sfq_{p-k} \sfp )
 \, .
\end{align}
Applying \eqref{eq:uv_der}, we obtain the relation above.
\end{proof}

\paragraph{Integral of motion}

The integral of motion shown in Lemma~\ref{lem:IOM} is given as follows for the current case,
\begin{align}
    \mathsf{I}_\ell = u_\ell - (-1)^\ell v_\ell + \sum_{k = 1}^\ell (-1)^k \qty( u_{\ell - k} v_{k-1} + (1 - (-1)^\ell) \sfq_{\ell-k} \sfp_{k-1} )
    \, .
\end{align}
For even $\ell$, this becomes trivial.
For odd $\ell$, on the other hand, we have
\begin{align}
   \mathsf{I}_\ell & = u_\ell + v_\ell + \sum_{k = 1}^\ell (-1)^k \qty( u_{\ell - k} v_{k-1} + 2 \sfq_{\ell-k} \sfp_{k-1} ) 
   \nonumber \\
   & = u_\ell + v_\ell + \sum_{m = 1}^{\lfloor \ell/2 \rfloor} u_{\ell - 2m} v_{2m-1} + 2 \sum_{k = 1}^\ell (-1)^k \sfq_{\ell-k} \sfp_{k-1}
   \, .
\end{align}
The lower order examples are given as follows,
\begin{subequations}
\begin{align}
    \mathsf{I}_1 & = u_1 + v_1 - 2 \sfq \sfp 
    \label{eq:IOM1}
    \, , \\
    \mathsf{I}_3 & = u_3 + v_3 + u_1 v_1 - 2 \qty(\sfq_2 \sfp - \sfq_1 \sfp_1 + \sfq \sfp_2)
    \, , \\
    \mathsf{I}_5 & = u_5 + v_5 + u_3 v_1 + u_1 v_3 - 2 \qty( \sfq_4 \sfp - \sfq_3 \sfp_1 + \sfq_2 \sfp_2 - \sfq_1 \sfp_3 + \sfq \sfp_4)
    \, .
\end{align}
\end{subequations}
For $\ell = p$, we obtain the following from \eqref{eq:IOM_highest} (also from Lemma~\ref{lem:pq_rel}),
\begin{align}
    \mathsf{I}_p & = 2 \sum_{k=1}^p (-1)^k \sfq_{p-k} \sfp_{k-1}
    \, .
\end{align}
Hence, using the first integral of motion~\eqref{eq:IOM1} together with the relation \eqref{eq:uv_even_zero}, we may rewrite the function $R$ \eqref{eq:R12} in terms of $(u_{2m-1},v_{2m-1})_{m=1,\ldots,\lfloor p/2 \rfloor}$,
\begin{align}
    R & % = \frac{1}{2s} \sum_{k=1}^p \qty(1 + (-1)^k) \sfq_{p-k} \sfp_{k-1} \nonumber \\
    = \frac{1}{s} \sum_{m=1}^{\lfloor p/2 \rfloor} \sfq_{p-2m} \sfp_{2m-1}
    %= \frac{-1}{s} \sum_{m=0}^{\lfloor p/2 \rfloor} \sfq_{p-2m-1} \sfp_{2m} \nonumber \\ &
    = \frac{1}{2s(u_1 + v_1)} \sum_{m=1}^{\lfloor p/2 \rfloor} u_{p-2m}' v_{2m-1}'
    \, .
    \label{eq:R_uv}
\end{align}

\subsection{Hamiltonian structure}

For the current case, we have the Hamiltonian \eqref{eq:Hj} as follows,
\begin{align}
 H & := H_1 = H_2
 \nonumber \\
 & = - s \sfq \sfp - \sum_{k = 1}^{p-1} (-1)^k \sfq_{p-k} \sfp_{k}
   - \sum_{k=1}^p (-1)^k \qty( u_{k-1} \sfq \sfp_{p-k} - v_{k-1} \sfq_{p-k} \sfp ) - 2 s R^2
   \nonumber \\
   & = - s \sfq \sfp - \sum_{k = 1}^{p-1} (-1)^k \sfq_{p-k} \sfp_{k}
   - \sum_{m=1}^{\lfloor p/2 \rfloor} \qty( u_{2m-1} \sfq \sfp_{p-2m} - v_{2m-1} \sfq_{p-2m} \sfp ) - 2 s R^2
 \, .
\end{align}
In this case, however, we cannot apply the previous expression \eqref{eq:H_dependence} for the $s$-derivative of the Hamiltonian since the derivatives of the auxiliary wave functions are modified as in Lemma~\ref{lem:s-derivative_qp} and Corollary~\ref{cor:s-derivative_qp}.
\begin{lemma}\label{lem:Hamiltonian_derivative}
The $s$-derivative of the Hamiltonian is given as follows,
\begin{align}
    H' & = - \sfq \sfp + 2R^2
    \nonumber \\
    & = - \frac{u_1 + v_1}{2} + \frac{1}{2s^2(u_1 + v_1)^2} \qty( \sum_{m=1}^{\lfloor p/2 \rfloor} u_{p-2m}' v_{2m-1}' )^2
    \, .
\end{align}
\end{lemma}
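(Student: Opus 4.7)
The plan is to compute $H'$ by extending the operator-formalism derivation of Lemma~\ref{lem:Ham_derivative}, which only gives the \emph{diagonal} derivative $\pdv{R_{jj}}{a_j}$, to include the \emph{cross} partial $\pdv{R_{22}}{a_1}$, and then combine via the chain rule induced by $(a_1,a_2)=(-s,s)$. Since $H=H_2=R_{22}$ and both endpoints move with $s$, one has
\begin{equation*}
    H' = -\pdv{R_{22}}{a_1} + \pdv{R_{22}}{a_2},
\end{equation*}
where each term is a partial derivative with the other endpoint held fixed. The obstruction flagged in the text (that Lemma~\ref{lem:s-derivative_qp} and Corollary~\ref{cor:s-derivative_qp} modify the derivatives of the auxiliary functions) is thereby sidestepped.

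First, I would derive the cross partial. Only $\hat K$ depends on $a_1$ in $R_{22}=\bra{a_2}(1-\hat K)^{-1}\hat K\ket{a_2}$, so
\begin{equation*}
    \pdv{R_{22}}{a_1} = \bra{a_2}(1-\hat K)^{-1}\pdv{\hat K}{a_1}(1-\hat K)^{-1}\ket{a_2}.
\end{equation*}
Using $\pdv{\hat K}{a_1}=-K\ket{a_1}\bra{a_1}$ from \eqref{eq:K_derivative_aj}, the key identity $\ket{a_1}=\Pi_I\ket{a_1}$ (so that $K\ket{a_1}=\hat K\ket{a_1}$), and $\bra{a_1}(1-\hat K)^{-1}\ket{a_2}=R_{12}+\braket{a_1}{a_2}=R_{12}$, one gets $\pdv{R_{22}}{a_1}=-R_{21}R_{12}=-R^{2}$. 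Next, Lemma~\ref{lem:Ham_derivative} gives $\pdv{R_{22}}{a_2}=-\sfq_{0,2}\sfp_{0,2}-(-1)^{1}R_{21}R_{12}=-\sfq\sfp+R^{2}$, where the parity identification \eqref{eq:qp_parity} makes $\sfq_{0,2}\sfp_{0,2}=\sfq\sfp$. Substituting both into the chain rule yields $H'=R^{2}+(-\sfq\sfp+R^{2})=-\sfq\sfp+2R^{2}$, which is the first claimed equality.

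For the second equality, I would apply the first integral of motion $\mathsf{I}_1=u_{1}+v_{1}-2\sfq\sfp=0$ from \eqref{eq:IOM1} to replace $\sfq\sfp=(u_{1}+v_{1})/2$, and then substitute the closed expression for $R$ in terms of the $(u_{2m-1},v_{2m-1})$ derivatives given in \eqref{eq:R_uv}. Squaring and multiplying by $2$ produces exactly the stated rational function of $s$ and the $u_{p-2m}'\,v_{2m-1}'$.

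The main obstacle is the cross partial derivative computation: one has to verify carefully that $\ket{a_1}\in\operatorname{Im}\Pi_I$ (so that $K\ket{a_1}=\hat K\ket{a_1}$ and hence the two natural expressions $\bra{a_2}(1-\hat K)^{-1}K\ket{a_1}$ and $R_{21}$ coincide), and that the inner product $\braket{a_1}{a_2}$ drops out because $a_1\neq a_2$. Everything else — the parity bookkeeping coming from \eqref{eq:qp_parity} and the elimination of $\sfq\sfp$ and $R$ in favour of the $u_\ell,v_\ell$ — is routine once the two partial derivatives are in hand.
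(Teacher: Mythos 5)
Your argument is correct and follows essentially the same route as the paper: both are operator-formalism chain-rule computations of $\dv*{R_{22}}{s}$ using $\pdv*{\hat K}{a_j}=(-1)^j K\ket{a_j}\bra{a_j}$, the identity $K\ket{a_j}=\hat K\ket{a_j}$, and then $\mathsf{I}_1=0$ together with \eqref{eq:R_uv} for the second equality. The only cosmetic difference is that you reuse Lemma~\ref{lem:Ham_derivative} for the diagonal piece and add the cross partial $\pdv*{R_{22}}{a_1}=-R^2$ separately, whereas the paper folds both endpoint contributions into a single evaluation of $\comm{P}{\hat K}+\pdv*{\hat K}{s}$; the signs and the final result $H'=-\sfq\sfp+2R^2$ agree.
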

\begin{proof}
We can show this similarly to Lemma~\ref{lem:Ham_derivative}.
In this case, the $s$-derivative of the operator $\hat{K}$ is obtained from \eqref{eq:K_derivative_aj} as follows,
\begin{align}
    \pdv{}{s} \hat{K} = K \ket{-s} \bra{-s} + K \ket{s} \bra{s}
    \, .
\end{align}
Hence, we obtain
\begin{align}
    \dv{H}{s}
    & = \dv{}{s} \qty( \bra{s} \frac{\hat{K}}{1 - \hat{K}} \ket{s} )
    \nonumber \\
%    & = \bra{a_j} (1 - \hat{K})^{-1} \comm{P}{\hat{K}} (1 - \hat{K})^{-1} + (1 - \hat{K})^{-1} \qty( \pdv{\hat{K}}{a_j} ) (1 - \hat{K})^{-1} \ket{a_j} \nonumber \\
    & = \bra{s} (1 - \hat{K})^{-1} \Big( - \ket{\phi}\bra{\psi} \Pi_I + 2 K \ket{-s}\bra{-s} \Big) (1 - \hat{K})^{-1} \ket{s}
    \nonumber \\
    & = - \sfq \sfp + 2R^2 
    \, .
\end{align}
The expression in terms of $(u_{2m-1},v_{2m-1})_{m=1,\ldots,\lfloor p/2 \rfloor}$ is then obtained from the integral of motion $\mathsf{I}_1$ \eqref{eq:IOM1} and the expression \eqref{eq:R_uv}.
\end{proof}

\begin{remark}[Hamiltonian structure]
We may write the Hamiltonian as follows,
\begin{align}
    H 
    = \sum_{k=1}^p (-1)^k \dv{\sfq_{p-k}}{s} \sfp_{k-1}
    = \sum_{k=1}^p (-1)^{k-1} \sfq_{p-k} \dv{\sfp_{k-1}}{s}
    \, ,
\end{align}
which comes from \eqref{eq:Hamiltonian_Rjj}.
Hence, even though the $s$-derivative of the auxiliary functions are modified in this case, the same Hamilton equations are still available as in the form of \eqref{eq:Hamilton_eq}.
\end{remark}

\paragraph{Fredholm determinant}

In this case, the logarithmic derivative of the Fredholm determinant \eqref{eq:logF_dependence} is given by
\begin{align}
    \dd \log F(s) & = H_1 \dd{(-s)} - H_2 \dd{s}  = - 2 H \dd{s}
    \, .
\end{align}
Denoting the Hamiltonian as a function of the $s$-variable by $H(s)$ and noticing $F(s) \xrightarrow{s \to 0} 1$, we obtain the Fredholm determinant in terms of the Hamiltonian as follows,
\begin{align}
 F(s) & = \exp \qty( - 2 \int_0^s \dd{\sigma} H(\sigma) )
 \nonumber \\
 & = \exp \qty( - 2 \int_0^s \dd{\sigma} (\sigma - s) \qty( \sfq \sfp - 2 R^2 ) )
 \, .
\end{align}
The second expression is analogous to the Tracy--Widom distribution in terms of the Hastings–-McLeod solution to the Painlevé II equation~\cite{Hastings:1980ARMA} and its higher analogs.

\subsection{Nonlinear differential equations}\label{sec:nlin_eqs}

We derive the closed differential equations for $(\underline{u},\underline{v})$ by using the functional relations obtained above.
The strategy that we apply is to rewrite all the bilinear forms $(\sfq_k \sfp_l)_{k,l=0,\ldots,p-1}$ in terms of $(\underline{u},\underline{v})$ recursively.

\begin{proposition}
All the bilinear terms $(\sfq_k \sfp_l)_{k,l=0,\ldots,p-1}$ are expressed in terms of the auxiliary functions $(\underline{u},\underline{v})$.
\end{proposition}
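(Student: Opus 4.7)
The plan is to induct on the total index $n = k + l$, building up all bilinears $\sfq_k\sfp_l$ layer by layer from lower ones, using the inventory of identities already assembled in this section. The base $n = 0$ is immediate from the first integral of motion $\mathsf{I}_1 = 0$ in \eqref{eq:IOM1}, which gives $\sfq\sfp = (u_1 + v_1)/2$. For $n = 1$, Lemma~\ref{lem:uv_der} (equation~\eqref{eq:uv_der}) hands me $\sfq_{2k-1}\sfp = u'_{2k-1}/2$ and $\sfq\sfp_{2k-1} = v'_{2k-1}/2$ directly, and dividing by the base bilinear then yields every odd--odd product,
\begin{align*}
    \sfq_{2i-1}\sfp_{2j-1} = \frac{u'_{2i-1}\, v'_{2j-1}}{2(u_1 + v_1)}.
\end{align*}

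For the inductive step, suppose every bilinear $\sfq_a\sfp_b$ with $a + b \le n$ is already expressed rationally in $\underline{u}, \underline{v}$ and their $s$-derivatives. Differentiating $\sfq_k\sfp_l$ for some $k + l = n$ and applying both derivative formulas of Lemma~\ref{lem:s-derivative_qp}, I get
\begin{align*}
    (\sfq_k\sfp_l)' = \sfq_{k+1}\sfp_l + \sfq_k\sfp_{l+1} - u_k\sfq\sfp_l - v_l\sfq_k\sfp + 2R\bigl((-1)^k - (-1)^l\bigr)\sfq_k\sfp_l,
\end{align*}
where $R$ itself is a known function of $\underline{u}, \underline{v}$ via \eqref{eq:R_uv}. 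Hence this determines the sum $\sfq_{k+1}\sfp_l + \sfq_k\sfp_{l+1}$ in terms of inductively known data. Writing $x_i := \sfq_i\sfp_{n+1-i}$ for $i = 0, \ldots, n+1$, the $n+1$ relations for $k = 0, \ldots, n$ take the form $x_i + x_{i+1} = c_i$, a first-order linear chain in which any single known $x_i$ determines all the others.

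To break the one-parameter ambiguity, I plug in a directly computable member of the chain. When $n+1$ is odd, Lemma~\ref{lem:uv_der} supplies the endpoints $x_0 = \sfq\sfp_{n+1} = v'_{n+1}/2$ and $x_{n+1} = \sfq_{n+1}\sfp = u'_{n+1}/2$; when $n+1$ is even, the odd--odd formula above supplies the interior values $x_i = \sfq_{2i-1}\sfp_{n+2-2i}$. Either way the chain propagates to recover every $\sfq_k\sfp_l$ with $k+l = n+1$, up to $k, l \le p - 1$. Whenever the recursion would produce $\sfq_p$ or $\sfp_p$, I replace these by~\eqref{eq:n_lower}, which reintroduces only bilinears already lying in the targeted range, so the induction closes.

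The delicate point will be the consistency of this over-determined chain: both endpoints (when $n+1$ odd) or several interior entries (when $n+1$ even) are handed to me directly, while a single known value would already fix the rest. I expect the required consistency to follow from Lemma~\ref{lem:pq_rel} together with the higher integrals of motion of Lemma~\ref{lem:IOM} and the vanishings $u_{2k} = v_{2k} = 0$ recorded in \eqref{eq:uv_even_zero}, so that the over-determined equations collapse to compatible ones. A secondary concern is the denominator $u_1 + v_1 = 2\sfq\sfp$ in the odd--odd formula, which I handle under the generic assumption $\sfq\sfp \ne 0$, any zero (e.g.\ the trivial limit $s \to 0$) being treated as a boundary case separately.
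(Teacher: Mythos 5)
Your proof is correct and follows essentially the same route as the paper: the same seeds ($\mathsf{q}\mathsf{p}$ from $\mathsf{I}_1$, the products $\mathsf{q}_{2k-1}\mathsf{p}$, $\mathsf{q}\mathsf{p}_{2k-1}$ from Lemma~\ref{lem:uv_der}, and the odd--odd formula), the same differentiation recursion producing $\mathsf{q}_{k+1}\mathsf{p}_l+\mathsf{q}_k\mathsf{p}_{l+1}$ with $\mathsf{q}_p,\mathsf{p}_p$ eliminated via \eqref{eq:n_lower}, merely reorganized into antidiagonal chains instead of the paper's explicit paths through the $(k,l)$ grid. The consistency worry you flag is not actually needed for the statement, since every relation used is an identity satisfied by the true functions, so any single propagation path already yields a valid expression.
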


\begin{proof}
First of all, the lowest bilinear term is obtained from the first integral of motion $\mathsf{I}_p$ \eqref{eq:IOM1} as follows,
\begin{align}
    \sfq \sfp = \frac{u_1 + v_1}{2}
    \, .
\end{align}
From Lemma~\ref{lem:uv_der}, we then obtain
\begin{align}
    \sfq_{2m-1} \sfp = \frac{1}{2} u_{2m-1}'
    \, , \qquad
    \sfq \sfp_{2m-1} = \frac{1}{2} v_{2m-1}'
    \, ,
    \label{eq:qp_0o}
\end{align}
and 
\begin{align}
    \sfq_{2m-1} \sfp_{2n-1} = \frac{u_{2m-1}' v_{2n-1}'}{2(u_1 + v_1)}
    \, .
    \label{eq:qp_oo}
\end{align}

We then apply the recursion relations.
Assume that the bilinear term $\sfq_k \sfp_l$ is known: We can write it in terms of $(\underline{u},\underline{v})$.
From Lemma~\ref{lem:s-derivative_qp} and Corollary~\ref{cor:s-derivative_qp}, the $s$-derivative of $\sfq_k \sfp_l$ gives rise to two possibly unknown terms, $\sfq_{k+1} \sfp_{l}$ and $\sfq_k \sfp_{l+1}$, 
\begin{align}
    (\sfq_k \sfp_l)' = \sfq_{k+1} \sfp_{l} + \sfq_k \sfp_{l+1} + \cdots
    \, ,
\end{align}
where the symbol $\cdots$ means the known terms.
We remark that the expression of the function $R$ in terms of $(\underline{u},\underline{v})$ is also known as in \eqref{eq:R_uv} as a consequence of the highest integral of motion $\mathsf{I}_p$ \eqref{eq:IOM_highest}.
If $\sfq_k \sfp_{l+1}$ is already known, we may write
\begin{align}
    \sfq_{k+1} \sfp_{l} = (\sfq_k \sfp_l)' + \cdots
    \, .
\end{align}
If $\sfq_{k+1} \sfp_{l}$ is known, we can similarly determine the other bilinear term $\sfq_k \sfp_{l+1}$.
When $(k,l) = (\text{odd},\text{odd})$ or $(\text{even},\text{even})$, we cannot determine them in this way since both $\sfq_{k+1} \sfp_{l}$ and $\sfq_k \sfp_{l+1}$ are unknown.
We should fix them from another path.

We start this algorithm from $\sfq_{2m-1} \sfp$ and $\sfq \sfp_{2m-1}$, whose expressions are known as \eqref{eq:qp_0o}:
\begin{subequations}
\begin{equation}
    \begin{tikzcd}
    \sfq_{2m-1} \sfp \arrow[r] & \sfq_{2m} \sfp \arrow[r] & \sfq_{2m} \sfp_1 \arrow[r] & \sfq_{2m} \sfp_2 \arrow[r] & \cdots \arrow[r] & \sfq_{2m} \sfp_{2m}
    \end{tikzcd}
\end{equation}
\begin{equation}
    \begin{tikzcd}
    \sfq \sfp_{2m-1} \arrow[r] & \sfq \sfp_{2m} \arrow[r] & \sfq_1 \sfp_{2m} \arrow[r] & \sfq_{2} \sfp_{2m} \arrow[r] & \cdots \arrow[r] & \sfq_{2m} \sfp_{2m}
    \end{tikzcd}    
\end{equation}
\end{subequations}
Meanwhile, we have $\sfq_{2m} \sfp_{2n}$ and $\sfq_{2n} \sfp_{2m}$ for $n \le m$.
In this case, we can fix $\sfq_{2m+1} \sfp_{2n}$ and $\sfq_{2n} \sfp_{2m+1}$ from $\sfq_{2m+1} \sfp_{2n-1}$ and $\sfq_{2n-1} \sfp_{2m+1}$, respectively.
\end{proof}
We show the flow diagram to determine the bilinear terms $(\sfq_k \sfp_l)_{0 \le k,l \le p-1}$ for $p = 7$:
\begin{equation}
\begin{tikzcd}
 \sfq_{} \sfp_{} & \sfq_{} \sfp_{1} \arrow[r] & \sfq_{} \sfp_{2} \arrow[d] & \sfq_{} \sfp_{3} \arrow[r] & \sfq_{} \sfp_{4} \arrow[d] & \sfq_{} \sfp_{5} \arrow[r] & \sfq_{} \sfp_{6} \arrow[d] \\
 \sfq_{1} \sfp_{} \arrow[d] & \sfq_{1} \sfp_{1} & \sfq_{1} \sfp_{2} \arrow[d] & \sfq_{1} \sfp_{3} \arrow[d] & \sfq_{1} \sfp_{4} \arrow[d] & \sfq_{1} \sfp_{5} \arrow[d] & \sfq_{1} \sfp_{6} \arrow[d] \\
 \sfq_{2} \sfp_{2} \arrow[r] & \sfq_{2} \sfp_{1} \arrow[r] & \sfq_{2} \sfp_{2} & \sfq_{2} \sfp_{3} & \sfq_{2} \sfp_{4} \arrow[d] & \sfq_{2} \sfp_{5} & \sfq_{2} \sfp_{6} \arrow[d] \\
 \sfq_{3} \sfp_{} \arrow[d] & \sfq_{3} \sfp_{1} \arrow[r] & \sfq_{3} \sfp_{2} & \sfq_{3} \sfp_{3} & \sfq_{3} \sfp_{4} \arrow[d] & \sfq_{3} \sfp_{5} \arrow[d] & \sfq_{3} \sfp_{6} \arrow[d] \\
 \sfq_{4} \sfp_{} \arrow[r] & \sfq_{4} \sfp_{1} \arrow[r] & \sfq_{4} \sfp_{2} \arrow[r] & \sfq_{4} \sfp_{3} \arrow[r] & \sfq_{4} \sfp_{4} & \sfq_{4} \sfp_{5} & \sfq_{4} \sfp_{6} \arrow[d] \\
 \sfq_{5} \sfp_{} \arrow[d] & \sfq_{5} \sfp_{1} \arrow[r] & \sfq_{5} \sfp_{2} & \sfq_{5} \sfp_{3} \arrow[r] & \sfq_{5} \sfp_{4} & \sfq_{5} \sfp_{5} & \sfq_{5} \sfp_{6} \arrow[d] \\
 \sfq_{6} \sfp_{} \arrow[r] & \sfq_{6} \sfp_{1} \arrow[r] & \sfq_{6} \sfp_{2} \arrow[r] & \sfq_{6} \sfp_{3} \arrow[r] & \sfq_{6} \sfp_{4} \arrow[r] & \sfq_{6} \sfp_{5} \arrow[r] & \sfq_{6} \sfp_{6} \\
\end{tikzcd}
\end{equation}
Based on these expressions, we obtain the closed differential equations for the auxiliary functions $(\underline{u},\underline{v})$ as follows.
\begin{proposition}\label{prop:nlin_eqs}
The auxiliary functions $(\underline{u},\underline{v}) = (u_{2m-1},v_{2m-1})_{m=1,\ldots,\lfloor p/2 \rfloor}$ obey the closed coupled nonlinear differential equations.
The number of the differential equations agrees with the number of the auxiliary functions, $2 \lfloor p/2 \rfloor = p-1$.
\end{proposition}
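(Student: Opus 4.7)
The plan is to reduce the proposition to the following expressibility claim: every bilinear form $\sfq_k \sfp_l$ with $0 \le k, l \le p-1$ can be written as a differential polynomial in the $p-1$ auxiliary functions $(\underline{u},\underline{v}) = (u_{2m-1}, v_{2m-1})_{m=1,\ldots,(p-1)/2}$ (with coefficients that are rational in $(u_1+v_1)$ and $s$). Granted this, differentiating the identities $u_{2m-1}' = 2\sfq_{2m-1}\sfp$ and $v_{2m-1}' = 2\sfq\,\sfp_{2m-1}$ from Lemma~\ref{lem:uv_der} once more and invoking Lemma~\ref{lem:s-derivative_qp} expresses $u_{2m-1}''$ and $v_{2m-1}''$ as combinations of bilinears $\sfq_k\sfp_l$ with $k+l \le 2m$. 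Substituting the expressibility claim yields a closed second-order system, one pair of equations for each $m = 1,\ldots,(p-1)/2$, giving exactly $p-1$ coupled ODEs for the $p-1$ unknowns.

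The expressibility claim is proved by induction over the cells of the $p \times p$ grid $\{(k,l) : 0 \le k, l \le p-1\}$. The seed data are known: $\sfq\,\sfp = (u_1 + v_1)/2$ from the integral of motion $\mathsf{I}_1 = 0$, together with \eqref{eq:qp_0o} and \eqref{eq:qp_oo} for the crosses $\{(2m-1, 0)\}$, $\{(0,2m-1)\}$ and the odd-odd grid $\{(2m-1, 2n-1)\}$. The recursion step uses Lemma~\ref{lem:s-derivative_qp} in the form
\begin{align}
 (\sfq_k \sfp_l)' = \sfq_{k+1}\sfp_l + \sfq_k \sfp_{l+1} - u_k\, \sfq\,\sfp_l - v_l\, \sfq_k \sfp + 2\bigl((-1)^k - (-1)^l\bigr)\, R\, \sfq_k \sfp_l,
\end{align}
in which $R$ is already expressed in $(\underline{u},\underline{v})$ by \eqref{eq:R_uv}. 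Whenever one of $\sfq_{k+1}\sfp_l$ and $\sfq_k\sfp_{l+1}$ has already been reduced, the other is determined; and when $k$ or $l$ hits $p$, the relations \eqref{eq:n_lower} fold $\sfq_p, \sfp_p$ back into the grid so the iteration never escapes it.

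The main obstacle is the parity-degenerate case, in which both $\sfq_{k+1}\sfp_l$ and $\sfq_k\sfp_{l+1}$ are simultaneously unknown so that the single derivation above determines only their sum. The flow diagram for $p=7$ exhibited in the paper already indicates the correct remedy: one reaches such a cell along two independent paths and solves a $2\times 2$ linear system. For instance, an even-even bilinear $\sfq_{2m}\sfp_{2n}$ with $m \neq n$ is obtainable either from $\sfq_{2m-1}\sfp_{2n}$ (``left'' path, starting from the seed $\sfq_{2m-1}\sfp_0$) or from $\sfq_{2m}\sfp_{2n-1}$ (``bottom'' path, starting from the seed $\sfq_0\sfp_{2n-1}$), giving two independent equations for the same unknown pair. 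The technical heart of the argument is to order the induction so that two such paths always exist for each cell where the parity obstruction arises, and to supply the remaining linear constraints from the intermediate odd-order integrals of motion $\mathsf{I}_\ell = 0$ of Lemma~\ref{lem:IOM} whenever the two-path method degenerates. Once this bookkeeping is completed, substitution back into $u_{2m-1}''$ and $v_{2m-1}''$ produces the $p-1$ closed coupled nonlinear ODEs asserted by the proposition.
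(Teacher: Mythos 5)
Your reduction to the expressibility claim, and the inductive grid argument for that claim (the seeds $\sfq\sfp=(u_1+v_1)/2$, \eqref{eq:qp_0o}, \eqref{eq:qp_oo}, the single-step recursion from $(\sfq_k\sfp_l)'$, the two-path resolution at parity-degenerate cells, the fold-back via \eqref{eq:n_lower}), all match the paper's preceding proposition. The gap is in how you then extract the $p-1$ closed equations. You propose to differentiate $u_{2m-1}'=2\sfq_{2m-1}\sfp$ and $v_{2m-1}'=2\sfq\sfp_{2m-1}$ once more and substitute the reductions, obtaining a ``closed second-order system.'' But the reduction of $\sfq_{2m}\sfp$ (resp.\ $\sfq\sfp_{2m}$) to $(\underline{u},\underline{v})$ is obtained \emph{precisely from} the relation $u_{2m-1}''=2(\sfq_{2m-1}\sfp)'=2\sfq_{2m}\sfp+2\sfq_{2m-1}\sfp_1+\cdots$ (it is the only path into the cell $(2m,0)$), so substituting that reduction back into $u_{2m-1}''$ returns the tautology $u_{2m-1}''=u_{2m-1}''$. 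The same circularity affects $v_{2m-1}''$. The paper's own $p=3$ example makes this explicit: $u''$ and $v''$ are used as \emph{definitions} of $2\sfq_2\sfp$ and $2\sfq\sfp_2$, and the actual closed system for $(u+v,u-v)$ is of order $4$ and $5$, not second order.

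The nontrivial equations live exactly in the redundancies of the reduction, and the paper isolates three sources with the right count: (i) each diagonal even-even bilinear $\sfq_{2m}\sfp_{2m}$ is reached along two distinct paths, from $(\sfq_{2m-1}\sfp_{2m})'$ and from $(\sfq_{2m-1}\sfp_{2m-1})'$, and equating the two expressions gives $\lfloor p/2\rfloor$ equations; (ii) the $s$-derivative of the top cell $\sfq_{p-1}\sfp_{p-1}$, evaluated via Corollary~\ref{cor:s-derivative_qp} (i.e.\ the fold-back \eqref{eq:n_lower} at $k=p$), gives one more; (iii) the unused integrals of motion $\mathsf{I}_{2m-1}=0$ for $m=2,\ldots,\lfloor p/2\rfloor$ give the remaining $\lfloor p/2\rfloor-1$, totalling $p-1$. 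In your writeup the two-path redundancy is spent on solving a $2\times 2$ system for unknowns, and the integrals of motion are relegated to bookkeeping inside the expressibility induction, so neither survives as an equation of the final system; you would need to redirect them to that purpose (and drop the claim that the system is second order) for the argument to close.
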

\begin{proof}
For $m = 1,\ldots,\lfloor p/2 \rfloor$, the bilinear term $\sfq_{2m} \sfp_{2m}$ has two different forms,
\begin{align}
    \sfq_{2m} \sfp_{2m} = (\sfq_{2m-1} \sfp_{2m})' + \text{\color{c1} known}
    \, , \qquad
    \sfq_{2m} \sfp_{2m} = (\sfq_{2m-1} \sfp_{2m-1})' + \text{\color{magenta} known}
    \, .
\end{align}
By equating these expressions, we obtain the following equation
\begin{align}
    \sfq_{2m} \sfp_{2m} : \quad 
    (\sfq_{2m-1} \sfp_{2m})' + \text{\color{c1} known} =  (\sfq_{2m-1} \sfp_{2m-1})' + \text{\color{magenta} known}
    \, .
\end{align}
For $\sfq_{p-1} \sfp_{p-1}$, in particular, we have another equation from its derivative,
\begin{align}
    (\sfq_{p-1} \sfp_{p-1})' : \quad
    &
    \frac{1}{2} \qty[
    (\sfq_{p-2} \sfp_{p-1})' + \text{\color{c1} known}
    + (\sfq_{p-1} \sfp_{p-2})' + \text{\color{magenta} known} ]'
    \nonumber \\
    & = - s (\sfq_{p-1} \sfp - \sfq \sfp_{p-1}) - \sum_{m=1}^{\lfloor p/2 \rfloor} (u_{2m-1} \sfq_{p-1} \sfp_{p-2m} + v_{2m-1} \sfq_{p-2m} \sfp_{p-1})
    \, ,
\end{align}
where the RHS is obtained from Corollary~\ref{cor:s-derivative_qp}.
In addition, the remaining integrals of motion provide the equations, $\mathsf{I}_{2m-1} = 0$ for $m = 2,\ldots,\lfloor p/2 \rfloor$.
Writing all the bilinear terms $(\sfq_k \sfp_l)_{0 \le k,l \le p-1}$ in terms of the auxiliary functions $(\underline{u},\underline{v})$, we obtain differential equations for them.
The total number of the equations is given by $\lfloor p/2 \rfloor + 1 + (\lfloor p/2 \rfloor - 1) = 2 \lfloor p/2 \rfloor = p-1$.
\end{proof}

\subsection{Large gap behavior $s \to \infty$}\label{sec:large_gap}

As shown before, the logarithmic derivative of the Fredholm determinant is given by the Hamiltonian.
Moreover, the $s$-derivative of the Hamiltonian is written in terms of the auxiliary functions $(u_{2m-1},v_{2m-1})_{m=1,\ldots,\lfloor p/2 \rfloor}$ as shown in Lemma~\ref{lem:Hamiltonian_derivative}.
In particular, using the integral of motion \eqref{eq:IOM1}, we obtain
\begin{align}
 \dv[2]{}{s} \log F(s) & = - 2H' = u_1 + v_1 - 4 R^2
 \, .
\end{align}
Hence, the large gap behavior of the Fredholm determinant is obtained from the asymptotic behavior of these auxiliary functions.

\begin{lemma}\label{lem:H_der_asymp}
In the large $s$ limit, the $s$-derivative of the Hamiltonian behaves as follows,
\begin{align}
    H' \ \xrightarrow{s \to \infty} \ O(s^{\frac{2}{p}})
    \, .
    \label{eq:H_der_asymp}
\end{align}
\end{lemma}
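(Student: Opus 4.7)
The plan is to reduce \eqref{eq:H_der_asymp} to a power counting problem, using the closed form
\begin{equation*}
H' = -\frac{u_1+v_1}{2} + \frac{1}{2s^2(u_1+v_1)^2}\left(\sum_{m=1}^{\lfloor p/2 \rfloor} u'_{p-2m}\, v'_{2m-1}\right)^2
\end{equation*}
derived in Lemma~\ref{lem:Hamiltonian_derivative}. Thus everything comes down to controlling the leading $s$-scaling of $(u_{2m-1}, v_{2m-1})_{m=1,\ldots,\lfloor p/2 \rfloor}$.

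I would adopt the self-consistent power-law ansatz
\begin{equation*}
u_{2m-1}(s),\ v_{2m-1}(s) \ \sim \ s^{2m/p}, \qquad m=1,\ldots,\lfloor p/2 \rfloor,
\end{equation*}
motivated by two inputs: the density asymptotics $\bar\rho(x)\sim|x|^{1/p}$ of Proposition~\ref{prop:density_fn_asymptotic}, and the Forrester--Chen--Eriksen--Tracy heuristic predicting $\log F(s)\sim -C_p\, s^{2+2/p}$, which combined with $\log F(s) = -2\int_0^s H(\sigma)\,d\sigma$ already forces $H'\sim s^{2/p}$. I would verify the ansatz by substituting it into the coupled nonlinear system of Proposition~\ref{prop:nlin_eqs}, the integrals of motion $\mathsf{I}_{2m-1}=0$, and the derivative relations of Lemma~\ref{lem:uv_der}, matching the leading powers of $s$ term by term.

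Once the ansatz is established, the conclusion follows by direct power counting: the first term of $H'$ is of order $s^{2/p}$; meanwhile, $u'_{p-2m}\sim s^{(1-2m)/p}$ and $v'_{2m-1}\sim s^{(2m-p)/p}$, so each summand satisfies $u'_{p-2m}\, v'_{2m-1}\sim s^{-1+1/p}$, giving a squared sum of order $s^{-2+2/p}$ and a total second term of order $s^{-2+2/p}/(s^2\cdot s^{4/p})=s^{-4-2/p}$, which is strictly subleading. The main obstacle will be rigorously justifying the power-law ansatz, in particular ruling out logarithmic or anomalous corrections coming from resolvent contributions at the boundary $x=\pm s$; a fully rigorous argument would require a Riemann--Hilbert analysis of the Lax system from \S\ref{sec:Fredholm_det}, which lies outside the heuristic scope adopted in this paper. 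The leading-order estimate established above is nevertheless sufficient for the subsequent Proposition~\ref{prop:large_gap}.
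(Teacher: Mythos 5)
Your starting point, the closed form of $H'$ from Lemma~\ref{lem:Hamiltonian_derivative}, is the same as the paper's, but the way you estimate the two terms has a genuine gap. First, the power-law ansatz $u_{2m-1},v_{2m-1}\sim s^{2m/p}$ is motivated by the Forrester--Chen--Eriksen--Tracy prediction $\log F(s)\sim -C_p s^{2+2/p}$, which via $\log F=-2\int_0^s H$ already encodes $H'\sim s^{2/p}$; since Lemma~\ref{lem:H_der_asymp} exists precisely to \emph{derive} the FCET-consistent behavior (Proposition~\ref{prop:large_gap}), this is circular, and the promised ``verification by substitution'' into the nonlinear system of Proposition~\ref{prop:nlin_eqs} is not carried out (self-consistency of a power law in a nonlinear ODE system would not exclude other solution branches anyway). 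The paper avoids the ansatz entirely: it estimates $u_1+v_1$ directly from the integral representation \eqref{eq:uv_integral} (boundary term $2\psi(s)\phi(s)=O(s^{-1+1/p})$ plus resolvent contributions handled by integration by parts via \eqref{eq:P_K_commutator}), and estimates $R=R(-s,s)$ from the Neumann series \eqref{eq:R_sum} together with the kernel asymptotics $K(s,s)=O(s^{1/p})$, $K(-s,s)=O(s^{-1})$ and projectivity, all of which rest only on the saddle-point asymptotics of $\operatorname{Ai}_p$, $\widetilde{\operatorname{Ai}}_p$ from \S\ref{sec:Ai_asymp1}--\ref{sec:Ai_asymp2}.

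Second, your power counting of the $R^2$ term is incorrect. You differentiate the ansatz as if $u'\sim u/s$, but by Lemma~\ref{lem:uv_der} the derivatives $u'_{2m-1}=2\sfq_{2m-1}\sfp$ are products of wave functions that oscillate with phase $\sim s^{1+1/p}$ and carry derivative-enhanced amplitudes $\sim s^{k/p}$; for such quantities a bounded oscillatory component of $u$ contributes a term \emph{larger} by $s^{1/p}$ to $u'$, not smaller by $s^{-1}$, so the naive counting fails. Concretely, your estimates give, via \eqref{eq:R_uv}, $R\sim s^{-2-1/p}$ and hence a second term of order $s^{-4-2/p}$, whereas the paper's direct computation yields $R=O(s^{1/p})$, so that $2R^2=O(s^{2/p})$ is of the \emph{same} order as $-(u_1+v_1)/2$, not subleading. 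The final $O(s^{2/p})$ bound survives only because the first term happens to saturate it, but that first term is exactly the piece whose size your argument assumes rather than proves; the estimate $u_1+v_1=O(s^{2/p})$ must be derived from the kernel asymptotics as in the paper (or, as you note, by a Riemann--Hilbert analysis) before the lemma can be considered established.
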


\begin{proof}
We first consider the large $s$ behavior of the function $R$, 
\begin{align}
    R = R(-s,s) = K(-s,s) + \sum_{n=1}^\infty \int_I \dd{z}_1 \cdots \int_I \dd{z}_n K(-s,z_1) K(z_1,z_2) \cdots K(z_{n-1},z_n) K(z_n,s)
    \, .
    \label{eq:R_sum}
\end{align}
From the Cristoffel--Darboux formula \eqref{eq:CD_formula} together with the asymptotic behavior of the higher Airy functions, \eqref{eq:Ai_asymp1} and \eqref{eq:Ai_asymp2}, we have the following asymptotic behavior of the kernel,
\begin{subequations}\label{eq:kernel_asymp}
\begin{align}
    K(s,s) \ \xrightarrow{s \to \infty} \ & O(s^{\frac{1}{p}}) \, , \\
    K(-s,s) \ \xrightarrow{s \to \infty} \ & O(s^{-1}) \, .
\end{align}
\end{subequations}
Hence, we obtain
\begin{align}
    \int_I \dd{z} K(-s,z) K(z,s) \ \xrightarrow{s \to \infty} \ O(s^{\frac{1}{p}})
    \, .
\end{align}
Recalling the projectivity property of the kernel \eqref{eq:projectivity_Airy},
\begin{align}
    \lim_{s \to \infty} \int_I \dd{z} K(x,z) K(z,y) = K(x,y)
    \, ,
\end{align}
we see that the $n > 1$ terms in \eqref{eq:R_sum} provide the same order contribution, from which we obtain
\begin{align}
    R \ \xrightarrow{s \to \infty} \ O(s^{\frac{1}{p}})
    \, .
\end{align}

We then consider the following term,
\begin{align}
    u_1 + v_1 & = \int_I \dd{x} \qty( \psi(x) \phi'(x) + \psi'(x) \phi(x) ) + \int_I \dd{x} \int_I \dd{y} \qty( \psi(x) R(x,y) \phi'(x) + \psi'(x) R(x,y) \phi(y) )
    \nonumber \\
    & = 2 \psi(s) \phi(s) + \int_I \dd{x} \int_I \dd{y} \qty( \psi(x) K(x,y) \phi'(x) + \psi'(x) K(x,y) \phi(y) ) + \cdots
    \, .
\end{align}
The asymptotic behavior of the first term is immediately obtained from \eqref{eq:Ai_asymp1} and \eqref{eq:Ai_asymp2},
\begin{align}
    \psi(s) \phi(s) \ \xrightarrow{s \to \infty} \ O(s^{-1+\frac{1}{p}})
    \, .
\end{align}
The integral term behaves as follows,
\begin{align}
    & \int_I \dd{x} \int_I \dd{y} \qty( \psi(x) K(x,y) \phi'(x) + \psi'(x) K(x,y) \phi(y) )
    \nonumber \\
    & = \int_I \dd{x} \int_I \dd{y} \qty[ \qty( \pdv{}{x} + \pdv{}{y} ) \psi(x) K(x,y) \phi(y)
    - \psi(x) \qty( \pdv{}{x'} + \pdv{}{y'} ) K(x',y')\Big|_{(x',y') \to (x,y)} \psi(y)
    ] \nonumber \\
    & \stackrel{\eqref{eq:P_K_commutator}}{=}
    \qty[\int_I \dd{y} \psi(x) K(x,y) \phi(y)]^{x=+s}_{x=-s} + \qty[\int_I \dd{y} \psi(x) K(x,y) \phi(y)]^{y=+s}_{y=-s}
    + \qty( \int_I \dd{x} \psi(x) \phi(x) )^2
    \nonumber \\
    & = 2 \psi(s) \qty( \int_I \dd{y} K(s,y) \phi(y) ) + 2 \qty( \int_I \dd{x} \psi(x) K(x,s) ) \psi(s)
    % \nonumber \\ &
    \ \xrightarrow[\eqref{eq:kernel_asymp}]{s \to \infty} \ O(s^{\frac{2}{p}})
    \, .
\end{align}
From the projectivity of the kernel, the multiple integral terms provide the same order contributions as before.

Summarizing all these contributions, we conclude the asymptotic behavior of $H'$ as shown in \eqref{eq:H_der_asymp}.
\end{proof}

From Lemma~\ref{lem:H_der_asymp}, we immediately obtain the large gap behavior of the Fredholm determinant as follows.
\begin{proposition}\label{prop:large_gap}
The Fredholm determinant behaves as
\begin{align}\label{eq:large_gap}
    F(s) \ \xrightarrow{s \to \infty} \ \exp \qty( - C_p s^{\frac{2}{p}+2} )
\end{align}
with a $p$-dependent positive constant $C_p$.
\end{proposition}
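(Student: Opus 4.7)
The plan is to integrate the asymptotic estimate of Lemma~\ref{lem:H_der_asymp} twice. From the identity $\dd \log F(s) = -2 H \dd{s}$ established just after Lemma~\ref{lem:Hamiltonian_derivative}, together with the expression $\dv[2]{}{s}\log F(s) = -2 H'(s) = u_1 + v_1 - 4 R^2$, the problem reduces to integrating the asymptotics of $-2 H'(s)$.

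First I would strengthen Lemma~\ref{lem:H_der_asymp} from the big-$O$ bound to an asymptotic equivalence $-2 H'(s) \sim \kappa_p\, s^{2/p}$ as $s \to \infty$, for some constant $\kappa_p$. The leading coefficient $\kappa_p$ is obtained by summing the $O(s^{2/p})$ contributions already isolated in the proof of Lemma~\ref{lem:H_der_asymp}, namely the one from $u_1 + v_1$ (coming via projectivity of the $p$-Airy kernel applied to the boundary terms produced by $\comm{P}{K} = -\ket{\phi}\bra{\psi}$) and the one from $-4 R^2$; each is evaluated using the saddle-point asymptotics of $\operatorname{Ai}_p$ and $\widetilde{\operatorname{Ai}}_p$ from \S\ref{sec:Ai_asymp1}--\S\ref{sec:Ai_asymp2}. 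Integrating once, using that $H(0)$ is finite, yields $-2 H(s) \sim \frac{\kappa_p\, p}{p+2}\, s^{2/p+1}$. Integrating a second time, using $F(0) = 1$ so that $\log F(0) = 0$, gives
\begin{align*}
\log F(s) \ \sim \ \frac{\kappa_p\, p^2}{2(p+1)(p+2)}\, s^{2/p+2},
\end{align*}
and exponentiating produces the claimed form $F(s) \sim \exp\bigl(-C_p\, s^{2/p+2}\bigr)$ with $C_p = -\frac{\kappa_p\, p^2}{2(p+1)(p+2)}$.

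The main obstacle is verifying $C_p > 0$, i.e., $\kappa_p < 0$. Rather than chasing signs through the saddle-point expansion in Lemma~\ref{lem:H_der_asymp}, I would argue this structurally: since $F(s)$ is the gap probability of a determinantal point process whose limiting density $\bar{\rho}(x) = K_{p\text{-Airy}}(x,x)$ grows as $|x|^{1/p}$ at infinity by Proposition~\ref{prop:density_fn_asymptotic}, the expected number of particles in $[-s,s]$ diverges, so $F(s) \to 0$ and $\log F(s) \to -\infty$. Combined with the asymptotic derived above, this imposes $C_p > 0$. As a consistency check, the exponent $2/p+2 = 2(1+1/p)$ is exactly twice the growth exponent of the mean particle count, in agreement with the Forrester--Chen--Eriksen--Tracy conjecture.
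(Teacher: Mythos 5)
Your proposal follows essentially the same route as the paper: the paper likewise obtains $\dv[2]{}{s}\log F(s) = -2H' = u_1 + v_1 - 4R^2$, invokes Lemma~\ref{lem:H_der_asymp} for the $O(s^{2/p})$ behavior of $H'$, and integrates twice to reach \eqref{eq:large_gap}. Your two refinements --- upgrading the big-$O$ to an asymptotic equivalence before integrating, and deducing $C_p>0$ structurally from the divergence of the expected particle number in $[-s,s]$ --- are sensible supplements to an argument the paper itself concedes is heuristic at this level of rigor, but they do not change the underlying approach.
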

\begin{remark}
This result is consistent with the Forrester--Chen--Eriksen--Tracy conjecture~\cite{Forrester:1993vtx,Chen:1995uy} on the large gap asymptotics of the Fredholm determinant $F(s) \xrightarrow{s \to \infty} \exp \qty(- C s^{2\beta + 2})$ for the density of state behaving as $\rho(s) \sim s^\beta$ ($\beta = \frac{1}{p}$ in our case).
\end{remark}

Based on the current framework basead on the saddle point analysis, it seems to be difficult to provide a rigorous argument and to determine the constant $C_p$ in general.
It would be plausible to apply the Riemann--Hilbert analysis to this issue as discussed in~\cite{Bleher:2006CMP} for the case $p = 3$.
See~\cite{dai2021asymptotics} for detailed analysis on the gap probability asymptotics.

\subsection{Example: $p=3$}

Let us consider the simplest example $p=3$ for more detail.
We simply write $(u_1,v_1) = (u,v)$.
The $s$-derivatives of the auxiliary functions are given as follows,\\
\begin{subequations}
\begin{minipage}{.5\textwidth}
\begin{align}
    \sfq' & = \sfq_1 + 2 R \sfq 
    \, , \\
    \sfq_1' & = \sfq_2 - u \sfq - 2 R \sfq_1
    \, , \\
    \sfq_2' & = s \sfq - v \sfq_1 + 2 R \sfq_2
    \, ,
\end{align}
\end{minipage}
\begin{minipage}{.5\textwidth}
\begin{align}
    \sfp' & = \sfp_1 - 2 R \sfp 
    \, , \\
    \sfp_1' & = \sfp_2 - v \sfp + 2 R \sfp_1
    \, , \\
    \sfp_2' & = - s \sfp - u \sfp_1 - 2 R \sfp_2
    \, ,
\end{align}
\end{minipage}
\begin{align}
    u' = 2 \sfq_1 \sfp 
    \, , \qquad
    v' = 2 \sfq \sfp_1
    \, .
\end{align}
\end{subequations}
Hence, we have the Lax matrix form,
\begin{align}
    \dv{}{s}
    \begin{pmatrix}
    \sfq \\ \sfq_1 \\ \sfq_2
    \end{pmatrix}
    =
    \begin{pmatrix}
    2 R & 1 & 0 \\
    - u & - 2R & 1 \\
    s & - v & 2 R
    \end{pmatrix}
    \begin{pmatrix}
    \sfq \\ \sfq_1 \\ \sfq_2
    \end{pmatrix}
    \, , \qquad
    \dv{}{s}
    \begin{pmatrix}
    \sfp \\ \sfp_1 \\ \sfp_2
    \end{pmatrix}
    =
    \begin{pmatrix}
    - 2 R & 1 & 0 \\
    - v & 2R & 1 \\
    - s & - u & - 2 R
    \end{pmatrix}
    \begin{pmatrix}
    \sfp \\ \sfp_1 \\ \sfp_2
    \end{pmatrix}    
\end{align}
The integrals of motion are given by
\begin{subequations}
\begin{align}
    \mathsf{I}_1 & = u + v - 2 \sfq \sfp \, , \\
    \mathsf{I}_3 & = \sfq_2 \sfp - \sfq_1 \sfp_1 + \sfq \sfp_2 \, .
\end{align}
\end{subequations}
The Hamiltonian and its $s$-derivative are given by
\begin{subequations}
\begin{align}
    H & = - s \sfq \sfp + \sfq_2 \sfp_1 - \sfq_1 \sfp_2 - u \sfq \sfp_1 + v \sfq_1 \sfp - 2 s R^2
    \, ,\\
    H' & = - \sfq \sfp + 2 R^2 = - \frac{u+v}{2} + \frac{1}{2} \qty( \frac{u' v'}{s(u+v)} )^2
    \, ,
\end{align}
\end{subequations}
where the function $R$ is given by
\begin{align}
    R = \frac{u' v'}{2s(u+v)}
    \, .
    \label{eq:R_uv_p=3}
\end{align}
From Lemma~\ref{lem:sR_derivative}, we obtain
\begin{align}
    (2 s R)' = 2 \sfq_2 \sfp_1 + 2 \sfq_1 \sfp_2 - u v' - u' v
    \, .
\end{align}

\paragraph{Nonlinear differential equations}
We consider the derivatives of the auxiliary functions $(u,v)$ recursively:
\begin{subequations}
\begin{align}
    u'' & = 2 \sfq_2 \sfp - u (u+v) - 4 R u' + 2 s R
   &  \implies  \quad
    2 \sfq_2 \sfp & = u'' + u(u+v) + 4 R u' - 2 s R
    \\
    v'' & = 2 \sfq \sfp_2 - v (u+v) - 4 R v' + 2 s R
    & \implies  \quad
    2 \sfq \sfp_2 & = v'' + v(u+v) + 4 R v' - 2 s R
\end{align}
\end{subequations}
which is equivalent to
\begin{subequations}
\begin{align}
    2 \sfq_2 \sfp + 2 \sfq \sfp_2 & = u'' + v'' + (u + v)^2 + 4 R (u' + v') - 4 s R \, , \\
    2 \sfq_2 \sfp - 2 \sfq \sfp_2 & = u'' - v'' + (u^2 - v^2) + 4 R (u' - v')
    \, .
\end{align}
\end{subequations}
Hence, $\sfq_2 \sfp$ and $\sfq \sfp_2$ are written only in terms of $(u,v)$ since the function $R$ is given as \eqref{eq:R_uv_p=3}.
We then obtain
\begin{subequations}
\begin{align}
    (2 \sfq_2 \sfp)' & = +s(u+v) - u' v + 2 \sfq_2 \sfp_1
    &  \implies  \quad
    2 \sfq_2 \sfp_1 & = (2 \sfq_2 \sfp)' - s(u+v) + u' v \\
    (2 \sfq \sfp_2)' & = -s(u+v) - u v' + 2 \sfq_1 \sfp_2
    &  \implies  \quad
    2 \sfq_1 \sfp_2 & = (2 \sfq \sfp_2)' + s(u+v) + u v'
\end{align}
\end{subequations}
which yields the expressions of $\sfq_2 \sfp_1$ and $\sfq_1 \sfp_2$ in terms of $(u,v)$.
We may also write
\begin{subequations}
\begin{align}
    2 \sfq_2 \sfp_1 + 2 \sfq_1 \sfp_2 & = 
    (2 \sfq_2 \sfp)' + (2 \sfq \sfp_2)' + u' v + u v' 
    \, , \\
    2 \sfq_2 \sfp_1 - 2 \sfq_1 \sfp_2 & = (2 \sfq_2 \sfp)' - (2 \sfq \sfp_2)' - 2 s (u + v)
    \, .
\end{align}
\end{subequations}
Similarly, we have
\begin{subequations}
\begin{align}
    (2 \sfq_2 \sfp_1)' & = 2 \sfq_2 \sfp_2 + 8 R \sfq_2 \sfp_1 - 2 v \sfq_2 \sfp - 2 s v R + s v'
    \\
    (2 \sfq_1 \sfp_2)' & = 2 \sfq_2 \sfp_2 - 8 R \sfq_1 \sfp_2 - 2 u \sfq \sfp_2 - 2 s u R - s u'
\end{align}
\end{subequations}
from which we obtain
\begin{align}
    4 \sfq_2 \sfp_2 & = (2 \sfq_2 \sfp_1)' + (2 \sfq_1 \sfp_2)' - 8 R (\sfq_2 \sfp_1 + \sfq_1 \sfp_2) - 2 (u \sfq \sfp_2 + v \sfq_2 \sfp) - 2 s (u + v) R - s (u' - v')
    \, .
\end{align}
Therefore, we obtain the coupled differential equations for the auxiliary functions $(u,v)$ from the following relations,%
\footnote{%
We suspect that the differential equations shown in \cite[eqs.(3.25) and (3.26)]{Brezin:1998PREb} need to be improved.}
\begin{subequations}
\begin{align}
    (2 \sfq_2 \sfp_1)' - (2 \sfq_1 \sfp_2)' & = 8 R (\sfq_2 \sfp_1 + \sfq_1 \sfp_2) + 2 (u \sfq \sfp_2 - v \sfq_2 \sfp) - 2 s (u - v) R - s (u' + v') \, ,
    \\
    (\sfq_2 \sfp_2)' & = s (\sfq \sfp_2 - \sfq_2 \sfp) - v \sfq_1 \sfp_2 - u \sfq_2 \sfp_1 \, .
\end{align}
\end{subequations}
Introducing the symmetric and anti-symmetric variables,
\begin{align}
    x := u + v \, , \qquad y := u - v
    \, ,
\end{align}
we obtain
\begin{subequations}
\begin{align}
    y^{(4)} & = 
    -\frac{x'^6}{s^2x^3}-\frac{x'^5}{s^3x^2}+\frac{x'^5}{2sx^3}+\frac{2y'^2x'^4}{s^2x^3}-\frac{2y'x'^4}{sx^3}+\frac{3x''x'^4}{s^2x^2}+\frac{2y'^2x'^3}{s^3x^2}-\frac{y'^2x'^3}{sx^3}+\frac{yx'^3}{sx} \nonumber \\ &
    -\frac{2y'x'^3}{s^2x^2}-\frac{x''x'^3}{sx^2}-\frac{2y'y''x'^3}{s^2x^2}+\frac{2y''x'^3}{sx^2}+\frac{5x'^3}{s}-\frac{y'^4x'^2}{s^2x^3}+\frac{2y'^3x'^2}{sx^3}-\frac{5yx'^2}{8x}-\frac{yy'x'^2}{sx} \nonumber \\ &
    -\frac{y'x'^2}{s}-\frac{2y'x'^2}{s^3x}-\frac{4y'^2x''x'^2}{s^2x^2}+\frac{5y'x''x'^2}{sx^2}+\frac{y'y''x'^2}{sx^2}+\frac{2y''x'^2}{s^2x}+\frac{x^{(3)}x'^2}{sx}-\frac{y^{(3)}x'^2}{sx} \nonumber \\ &
    -\frac{y'^4x'}{s^3x^2}+\frac{y'^4x'}{2sx^3}+\frac{2y'^3x'}{s^2x^2}-\frac{yy'^2x'}{sx}-\frac{5y'^2x'}{s}+\frac{3}{2}sx'-4y'x'+\frac{y'^2x''x'}{sx^2}+\frac{4y'x''x'}{s^2x} \nonumber \\ &
    +\frac{2y'^3y''x'}{s^2x^2}-\frac{6y'^2y''x'}{sx^2}-\frac{4x''y''x'}{sx}-\frac{2y'x^{(3)}x'}{sx}+\frac{yy'^3}{sx}+\frac{y'^3}{s}+\frac{2y'^3}{s^3x}+\frac{5yy'^2}{8x}-\frac{2y'x''^2}{sx} \nonumber \\ &
    +\frac{6y'y''^2}{sx}+2x+\frac{y'^4x''}{s^2x^2}-\frac{y'^3x''}{sx^2}-\frac{y'^3y''}{sx^2}-\frac{6y'^2y''}{s^2x}-4xy''-\frac{y'^2x^{(3)}}{sx}+\frac{3y'^2y^{(3)}}{sx}
    \, ,
    \end{align}
    \begin{align}
    x^{(5)} & =
    \frac{3x'^7}{s^2x^4}+\frac{4x'^6}{s^3x^3}+\frac{9x'^6}{2sx^4}-\frac{6y'^2x'^5}{s^2x^4}-\frac{12x''x'^5}{s^2x^3} \nonumber \\ &
    +\frac{3x'^5}{s^4x^2}+\frac{11x'^5}{2s^2x^3}-\frac{3x'^5}{x^4}-\frac{8y'^2x'^4}{s^3x^3}-\frac{3y'^2x'^4}{sx^4}-\frac{11x''x'^4}{s^3x^2}-\frac{39x''x'^4}{2sx^3} \nonumber \\ &
    +\frac{8y'y''x'^4}{s^2x^3}+\frac{3x^{(3)}x'^4}{s^2x^2}+\frac{2x'^4}{sx}+\frac{6x'^4}{s^3x^2}+\frac{3y'^4x'^3}{s^2x^4}-\frac{6y'^2x'^3}{s^4x^2}-\frac{5y'^2x'^3}{s^2x^3}+\frac{3y'^2x'^3}{x^4} \nonumber \\ &
    +\frac{12x''^2x'^3}{s^2x^2}-\frac{2y''^2x'^3}{s^2x^2}+\frac{4yy'x'^3}{sx^2}+\frac{16y'^2x''x'^3}{s^2x^3}-\frac{20x''x'^3}{s^2x^2}+\frac{9x''x'^3}{x^3}+\frac{8y'y''x'^3}{s^3x^2} \nonumber \\ &
    +\frac{8y'y''x'^3}{sx^3}+\frac{8x^{(3)}x'^3}{sx^2}-\frac{2y'y^{(3)}x'^3}{s^2x^2}+\frac{6x'^3}{s^4x}-\frac{x'^3}{x}-\frac{4x'^3}{s^2}+\frac{4y'^4x'^2}{s^3x^3}-\frac{3y'^4x'^2}{2sx^4} \nonumber \\ &
    -\frac{4y'^2x'^2}{sx}-\frac{6y'^2x'^2}{s^3x^2}+\frac{24x''^2x'^2}{sx^2}-\frac{5y''^2x'^2}{sx^2}+\frac{4yy'x'^2}{s^2x}-\frac{2y'x'^2}{x}+\frac{14y'^2x''x'^2}{s^3x^2} \nonumber \\ &
    +\frac{7y'^2x''x'^2}{sx^3}+\frac{12x''x'^2}{s}-\frac{18x''x'^2}{s^3x}-\frac{8y'^3y''x'^2}{s^2x^3}-\frac{4yy''x'^2}{sx}+\frac{11y'y''x'^2}{s^2x^2}-\frac{6y'y''x'^2}{x^3} \nonumber \\ &
    -\frac{14y'x''y''x'^2}{s^2x^2}-\frac{4y'^2x^{(3)}x'^2}{s^2x^2}+\frac{8x^{(3)}x'^2}{s^2x}-\frac{7x^{(3)}x'^2}{2x^2}-\frac{5y'y^{(3)}x'^2}{sx^2}-\frac{2x^{(4)}x'^2}{sx}+\frac{3y'^4x'}{s^4x^2} \nonumber \\ &
    -\frac{y'^4x'}{2s^2x^3}-\frac{4yy'^3x'}{sx^2}-4x^2x'-4y^2x'-\frac{6y'^2x'}{s^4x}+\frac{y'^2x'}{x}+\frac{4y'^2x'}{s^2}-\frac{8y'^2x''^2x'}{s^2x^2}+\frac{18x''^2x'}{s^2x} \nonumber \\ &
    -\frac{6x''^2x'}{x^2}+\frac{6y'^2y''^2x'}{s^2x^2}-\frac{6y''^2x'}{s^2x}+\frac{3y''^2x'}{x^2}-\frac{4y'^4x''x'}{s^2x^3}+\frac{8y'^2x''x'}{s^2x^2}-\frac{3y'^2x''x'}{x^3} \nonumber \\ &
    -\frac{8yy'x''x'}{sx}-12x''x'-\frac{8y'^3y''x'}{s^3x^2}+\frac{4y'^3y''x'}{sx^3}-\frac{8y'y''x'}{s}+\frac{12y'y''x'}{s^3x}-\frac{14y'x''y''x'}{sx^2} \nonumber \\ &
    -\frac{2y'^2x^{(3)}x'}{sx^2}-\frac{16x''x^{(3)}x'}{sx}+\frac{2y'^3y^{(3)}x'}{s^2x^2}-\frac{6y'y^{(3)}x'}{s^2x}+\frac{3y'y^{(3)}x'}{x^2}+\frac{6y''y^{(3)}x'}{sx}+\frac{x^{(4)}x'}{x} \nonumber \\ &
    +\frac{2y'y^{(4)}x'}{sx}+\frac{2y'^4}{sx}-\frac{4yy'^3}{s^2x}+\frac{2y'^3}{x}-\frac{6x''^3}{sx}-\frac{2y'^2x''^2}{sx^2}-\frac{3y'^2y''^2}{sx^2}+\frac{6x''y''^2}{sx}-8xyy' \nonumber \\ &
    +\frac{y'}{2}-\frac{3y'^4x''}{s^3x^2}+\frac{y'^4x''}{2sx^3}-\frac{4y'^2x''}{s}+\frac{6y'^2x''}{s^3x}+\frac{y'^3y''}{s^2x^2}+\frac{12yy'^2y''}{sx}-\frac{3}{2}sy''+\frac{6y'^3x''y''}{s^2x^2} \nonumber \\ &
    -\frac{12y'x''y''}{s^2x}+\frac{3y'x''y''}{x^2}+\frac{y'^4x^{(3)}}{s^2x^2}-\frac{2y'^2x^{(3)}}{s^2x}+\frac{y'^2x^{(3)}}{2x^2}-6xx^{(3)}+\frac{3x''x^{(3)}}{x}+\frac{4y'y''x^{(3)}}{sx} \nonumber \\ &
    -\frac{y'^3y^{(3)}}{sx^2}-2yy^{(3)}+\frac{6y'x''y^{(3)}}{sx}-\frac{3y''y^{(3)}}{x}-\frac{y'y^{(4)}}{x}
    \, .
\end{align}
\end{subequations}
Although we have such an explicit form of the differential equations, it is not clear how it would be helpful to characterize the level spacing distribution. 
It would be also interesting to compare with the differential equations obtained in \cite{Adler:2012ega}.

\appendix
\section{Single Hamiltonian analysis}\label{sec:single_H}

In this Appendix, we analyze the Hamiltonian system only with a single time variable, which corresponds to the Fredholm determinant with the interval
\begin{align}
    I = [s,\infty)
    \qquad
    \qty(a_1 = s)  
    \, .
\end{align}
For even $p$, the corresponding Fredholm determinant leads to the higher Tracy--Widom distribution, while for odd $p$, this is a formal calculation.
In this case, the derivative of the Fredholm determinant~\eqref{eq:logF_dependence} is given by
\begin{align}
    \dd{\log F(s)} = - H(s) \dd{s}
    \, ,
\end{align}
where the $s$-dependence of the Hamiltonian~\eqref{eq:H_dependence} is given by
\begin{align}
    \dv{H}{s} = - \sfq(s) \sfp(s)
    \, ,
    \label{eq:H_der_TW}
\end{align}
with
\begin{align}
    \sfq(s) = \sfq_0(s)
    \, , \qquad
    \sfp(s) = \sfp_0(s)
    \, .
\end{align}
Therefore, the Fredholm determinant is written as follows,
\begin{align}
    F(s) = \exp \qty( - \int_s^\infty \dd{\sigma} (\sigma - s) \sfq(\sigma) \sfp(\sigma) )
    \, .
\end{align}
For even $p$, since $\sfp(s) = \sfq(s)$, it is given by
\begin{align}
    F(s) = \exp \qty( - \int_s^\infty \dd{\sigma} (\sigma - s) \sfq(\sigma)^2 )
    \, ,
\end{align}
which reproduces the higher Tracy--Widom distribution of even $p$~\cite{Claeys:2009CPAM,Akemann:2012bw,LeDoussal:2018dls,Cafasso:2019IMRN}.

\paragraph{Derivatives of $(\sfq_k,\sfp_k)$}

From the parameter dependence of the wave functions~\eqref{eq:qp_dependence}, the $(k+1)$-st function is written in terms of the lower degree functions,
\begin{subequations}
\begin{align}
    \dv{\sfq_k}{s} & = \sfq_{k+1} - u_k \sfq 
    \quad \implies \quad 
    \sfq_{k+1} = \dv{\sfq_k}{s} + u_k \sfq 
    \, , \\
    \dv{\sfp_k}{s} & = \sfp_{k+1} - v_k \sfp
    \quad \implies \quad 
    \sfp_{k+1} = \dv{\sfp_k}{s} + v_k \sfp
    \, .
\end{align}
\end{subequations}
Similarly, the $s$-dependence of the auxiliary function~\eqref{eq:uv_dependence} is given by
\begin{align}
    \dv{u_k}{s} = - \sfp \sfq_k
    \, , \qquad
    \dv{v_k}{s} = - \sfp_k \sfq
    \, .
\end{align}
Then, the lower degree cases are given as follows,%
\footnote{%
We denote the $s$-variable derivative by $\displaystyle \dv{\sfq}{s} = \sfq'$, etc.
}
\begin{subequations}\label{eq:qp_lower_ex}
\begin{align}
    \sfq_1 & = \sfq' + u_0 \sfq 
    \\
    \sfq_2 & = \sfq'' + u_0 \sfq'  + u_1 \sfq - \sfp \sfq^2
    \\
    \sfq_3 & = \sfq''' + u_0 \sfq'' + u_1 \sfq' + u_2 \sfq - 4 \sfq \sfq' \sfp - \sfq^2 \sfp' - u_0 \sfq^2 \sfp  
    \, ,
    \label{eq:q3_lower}
    \\
    \sfp_1 & = \sfp' + v_0 \sfp
    \, ,
    \\
    \sfp_2 & = \sfp'' + v_0 \sfp' + v_1 \sfp - \sfp^2 \sfq
    \, ,
    \\ 
    \sfp_3 & = \sfp''' + v_0 \sfp'' + v_1 \sfp' + v_2 \sfp - 4 \sfp \sfp' \sfq - \sfp^2 \sfq' - v_0 \sfp^2 \sfq
    \, .
    \label{eq:p3_lower}
\end{align}
\end{subequations}

\subsection{Integral of motion}

We consider the integral of motion as discussed in Lemma~\ref{lem:IOM}.
In this case, we introduce the integral of motion for $\ell \le p$,
\begin{align}
    \mathsf{I}_\ell = u_\ell - (-1)^\ell v_\ell + \sum_{k=1}^\ell (-1)^k \qty( u_{\ell-k} v_{k-1} - \sfq_{\ell - k} \sfp_{k-1})
    \, .
\end{align}
Since all the functions are zero in the limit $s \to \infty$, the $s$-independent constant shall be zero,
\begin{align}
    \mathsf{I}_\ell = 0
    \, .
\end{align}
The lower degree cases are explicitly given as follows,
\begin{subequations}\label{eq:IOM_ex}
\begin{align}
    \mathsf{I}_0 & = u_0 - v_0
    \, ,
    \\
    \mathsf{I}_1 & = u_1 + v_1 - u_0 v_0 + \sfq \sfp 
    \, , \label{eq:IOM_ex1}
    \\
    \mathsf{I}_2 & = u_2 - v_2 + u_0 v_1 - u_1 v_0 - \sfq \sfp_1 + \sfq_1 \sfp
    \, .
\end{align}
\end{subequations}

\subsection{Nonlinear differential equations}\label{sec:nlin_eq}

We can derive closed differential equations for the wave functions $(\sfq,\sfp)$ by equating $(\sfq_p,\sfp_p)$ obtained as in \eqref{eq:qp_lower_ex} with another expressions~\eqref{eq:n_lower}.
For even $p = 2n$, the resulting differential equation for $\sfq(s) = \sfp(s)$ is known to be the $n$-th equation of the Painlev\'e II hierarchy~\cite{Claeys:2009CPAM,Akemann:2012bw,LeDoussal:2018dls}.
The lower degree equations are given by%
\footnote{%
We use the convention for the $s$-derivatives, $\sfq^{(1)}=\sfq'$, $\sfq^{(2)}=\sfq''$, etc.
}
\begin{subequations}
\begin{align}
    p = 2: \hspace{.9em} \sfq'' & = \sfq \qty(s + 2 \sfq^2) = s \sfq + 2\sfq^3 
    \, , \\
    p = 4: \ \sfq^{(4)} & = \sfq \qty(s - 6 \sfq^4 + 10 \sfq'^2 + 10 \sfq \sfq'')
    \, , \\
    p = 6: \ \sfq^{(6)} & = \sfq \left(s+ 20 \sfq^6-140 \sfq^2 \sfq'^2-70 \sfq^3 \sfq''+42
    \sfq''^2+56 \sfq' \sfq''' + 14 \sfq \sfq''''\right)
    +70 \sfq'^2 \sfq''
    \, .
\end{align}
\end{subequations}

For odd $p$, we instead obtain coupled nonlinear equations.
We first consider the simplest example of odd degree, $p = 3$.
In this case, from \eqref{eq:n_lower}, the wave functions for $k = 3$ are written in terms of $(\sfq_0,\sfp_0)=(\sfq,\sfp)$,
\begin{subequations}
\begin{align}
    \sfq_3 & = v_0 \sfq'' + (u_0 v_0 - v_1) \sfq' + (+ s + v_2 - u_0 v_1 + u_1 v_0) \sfq - v_0 \sfq^2 \sfp
    \, ,
    \\
    \sfp_3 & = u_0 \sfp'' + (u_0 v_0 - u_1) \sfp' + (- s + u_2 - u_1 v_0 + u_0 v_1) \sfp - u_0 \sfq \sfp^2
    \, .
\end{align}
\end{subequations}
Equating these expressions with~\eqref{eq:q3_lower} and \eqref{eq:p3_lower}, together with the integrals of motion~\eqref{eq:IOM_ex}, we obtain coupled nonlinear differential equations for the wave functions $(\sfq,\sfp)$,
\begin{subequations}
\begin{align}
    \sfq^{(3)} & = + s \sfq + 6 \sfp\sfq\sfq'
    \, ,
    \\
    \sfp^{(3)} & = - s \sfp + 6 \sfq\sfp\sfp'
    \, .
\end{align}
\end{subequations}
We may apply the same process to obtain differential equations for higher degree cases.
\begin{itemize}
    \item $p = 5$
\begin{subequations}
\begin{align} 
    \sfq^{(5)} & = \sfq \left(+s+10 \sfq' \sfp''+10 \sfp' \sfq''+10 \sfp \sfq'''\right) -30 \sfp^2 \sfq^2 \sfq' +10 \sfq' \left(\sfp' \sfq' +2 \sfp \sfq''\right)
    \, , \\
    \sfp^{(5)} & = \sfp \left(-s+10 \sfq' \sfp'' +10 \sfp' \sfq''+10 \sfq \sfp'''\right) -30 \sfp^2 \sfq^2 \sfp'+10 \sfp' \left(\sfp' \sfq'+2 \sfq \sfp''\right)
    \, .
\end{align}
\end{subequations}

\item $p = 7$
\begin{subequations}
\begin{align}
    \sfq^{(7)} & = \sfq \left(+s+42 \sfp'' \sfq^{(3)}+14 \left(-20 \sfp^2 \sfq' \sfq''+2 \sfq'' \sfp^{(3)}+\sfq' \sfp^{(4)}+2 \sfp' \sfq^{(4)}+\sfp \left(-20 \sfp' \sfq'^2+\sfq^{(5)}\right)\right)\right) 
    \nonumber \\
    & \quad
    + 140 \sfp^3 \sfq^3 \sfq'-70 \sfq^2 \left(\sfp'^2 \sfq'+2 \sfp \sfp' \sfq''+\sfp \left(2 \sfq' \sfp''+\sfp \sfq^{(3)}\right)\right)
    \nonumber \\
    & \quad
    +14 \left(-5 \sfp^2
\sfq'^3+5 \sfp' \sfq''^2+2 \sfq'^2 \sfp^{(3)}+\sfq' \left(8 \sfp'' \sfq''+7 \sfp' \sfq^{(3)}\right) + \sfp \left(5 \sfq'' \sfq^{(3)}+3 \sfq'   \sfq^{(4)}\right)\right)
    \, , \\
    \sfp^{(7)} & = \sfp
\left(-s+42 \sfq'' \sfp^{(3)}+14 \left(-20 \sfq^2 \sfp' \sfp''+2 \sfp'' \sfq^{(3)}+2
    \sfq' \sfp^{(4)}+\sfp' \sfq^{(4)}+\sfq \left(-20 \sfp'^2
    \sfq'+\sfp^{(5)}\right)\right)\right)
    \nonumber \\
    & \quad
    + 140 \sfp^3 \sfq^3 \sfp'-70 \sfp^2 \left(\sfp' \left(\sfq'^2+2 \sfq \sfq''\right)+\sfq
    \left(2 \sfq' \sfp''+\sfq \sfp^{(3)}\right)\right)
    \nonumber \\
    & \quad
    +14 \left(-5
\sfq^2 \sfp'^3+5 \sfq' \sfp''^2+\sfp' \left(8 \sfp'' \sfq''+7 \sfq' \sfp^{(3)}+2 \sfp'
    \sfq^{(3)}\right)+\sfq \left(5 \sfp'' \sfp^{(3)}+3 \sfp'
    \sfp^{(4)}\right)\right)
    \, .
\end{align}
\end{subequations}

\if0
\item $n = 9$
\begin{subequations}
\begin{align}
    q^{(9)} & = 
    q \left(x+6 \left(210 p^3 q'^3-434 p'^2 q' q''+19 q^{(3)} p^{(4)}+23
   p^{(3)} q^{(4)}-42 p^2 \left(5 q'' q^{(3)}+3
q' q^{(4)}\right)
    \right.\right.
    \nonumber \\
    & \quad
    +9 q'' p^{(5)}+19 p'' q^{(5)}+3 q'
   p^{(6)}+p' \left(-350 q'^2 p''+9 q^{(6)}\right)
    \nonumber \\
    & \quad
    \left.\left.
   +p \left(-28
\left(13 p' q''^2+6 q'^2 p^{(3)}+q' \left(22 p'' q''+19 p'
   q^{(3)}\right)\right)+3 q^{(7)}\right)\right)\right)
   \nonumber \\
    & \quad
    -630 p^4 q^4 q'-798 p'^2 q'^3+420 p q^3 \left(3 p'^2 q'+3
   p p' q''+p \left(3 q' p''+p q^{(3)}\right)\right)
   \nonumber \\
    & \quad
   +42
q^2 \left(60 p^3 q' q''-3 q' \left(3 p''^2+4 p'
    p^{(3)}\right)-11 p' \left(2 p'' q''+p' q^{(3)}\right)
    \right.
    \nonumber \\
    & \quad
    \left.
    -6 p
\left(2 q'' p^{(3)}+3 p'' q^{(3)}+q' p^{(4)}+2 p'
    q^{(4)}\right)+p^2 \left(90 p' q'^2-3 q^{(5)}\right)\right)
    \nonumber \\
    & \quad
    +6
p' \left(63 q^{(3)}^2+14 q'' \left(-44 p q'^2+7 q^{(4)}\right)+44
q' q^{(5)}\right)
    \nonumber \\
    & \quad
    +6 \left(63 q''^2 p^{(3)}-7 p^2
q' \left(31 q''^2+23 q' q^{(3)}\right)+2 q'' \left(77 p''
   q^{(3)}+27 q' p^{(4)}\right)
   \right.
   \nonumber \\
   & \quad
   \left.
   +q' \left(86 p^{(3)} q^{(3)}+82
p'' q^{(4)}+9 q' p^{(5)}\right)+2 p \left(-84 q'^3 p''+21
   q^{(3)} q^{(4)}+14 q'' q^{(5)}+6 q' q^{(6)} \right)\right)
\end{align}
\end{subequations}
\fi

\end{itemize}

We remark that these nonlinear differential equations have been recently obtained in the context of the Riemann--Hilbert problem associated with the Pearcey kernel~\cite{Chouteau:2023MPAG}.

\subsection{Asymptotic behavior and the boundary condition}

We consider the asymptotic behavior in the large $s$ limit.
Recalling that the resolvent kernel behaves as $R$, $L \to 0$ in the limit $s \to +\infty$ ($I \to \emptyset$), together with \eqref{eq:qp_RL}, we see that the auxiliary wave functions behave as 
\begin{align}
    \sfq(s) \ \xrightarrow{s \to \infty} \ \phi(s)
    \, , \qquad
    \sfp(s) \ \xrightarrow{s \to \infty} \ \psi(s)
    \, .
    \label{eq:qp_asympt}
\end{align}
We then obtain the asymptotic behavior of the bilinear form $\sfq^{(m)} \sfp^{(n)}$ from \eqref{eq:Ai_asymp1} and \eqref{eq:Ai_asymp2} as follows,
\begin{align}
    \sfq^{(m)} \sfp^{(n)} \ \xrightarrow{s \to \infty} \ O(s^{-1+\frac{1+m+n}{p}})
    \, ,
\end{align}
which goes to zero in the limit $s \to \infty$ if $n + m < p - 1$.
In fact, in the limit $s \to \infty$, all the nonlinear differential equations obtained in \S\ref{sec:nlin_eq} are asymptotically reduced to
\begin{align}
    \dv[p]{\sfq}{s} = s \sfq
    \, , \qquad
    \dv[p]{\sfp}{s} = (-1)^p s \sfp
    \, ,
    \label{eq:qp_asympt_ODE}
\end{align}
which is consistent with the asymptotic behavior \eqref{eq:qp_asympt}.
\begin{proposition}
The auxiliary wave functions $(\sfq,\sfp)$ obey the differential equation \eqref{eq:qp_asympt_ODE} in the limit $s \to \infty$, and hence they are asymptotic to the $p$-Airy function \eqref{eq:qp_asympt} for arbitrary $p \ge 2$.
\end{proposition}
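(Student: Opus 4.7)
The plan is to combine the resolvent/operator picture of $\sfq,\sfp$ with the asymptotics of the $p$-Airy functions that were worked out in \S\ref{sec:Ai_asymp1}--\S\ref{sec:Ai_asymp2}. First I would prove \eqref{eq:qp_asympt} directly from the definition of the auxiliary wave functions. For $I=[s,\infty)$ we have $\Pi_I \to 0$ pointwise as $s\to\infty$, so $\hat{K}=K\Pi_I$ and $\check{K}=\Pi_I K$ tend to the zero operator in the relevant operator norm, and consequently the Neumann series $(1-\hat{K})^{-1} = \sum_{n\ge 0} \hat{K}^n$ converges to the identity in that limit. Applying this to $\sfq_k(x)=\bra{x}(1-\hat{K})^{-1}\ket{\phi_k}$ and $\sfp_k(x)=\bra{\psi_k}(1-\check{K})^{-1}\ket{x}$, and specializing at $x=s$, gives at once $\sfq(s)=\phi(s)+o(\phi(s))$ and $\sfp(s)=\psi(s)+o(\psi(s))$, which is \eqref{eq:qp_asympt}.

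Next I would establish \eqref{eq:qp_asympt_ODE} by showing that, in the large-$s$ regime, the nonlinear terms appearing in the closed differential equations for $(\sfq,\sfp)$ are negligible relative to the linear terms. The equations are obtained by equating the two available expressions for $\sfq_p$ and $\sfp_p$: the ``polynomial in the derivatives of $\sfq$'' expression obtained by iterating $\sfq_{k+1}=\sfq_k'+u_k\sfq$ (and its $\sfp$-counterpart), and the expression \eqref{eq:n_lower} in terms of $Q\ket{\phi_0}$. Schematically,
\begin{align}
\sfq^{(p)} = (-1)^{p+1} s\,\sfq + \mathcal{N}_p[\sfq,\sfp] \, , \qquad
\sfp^{(p)} = (-1)^{p}\, s\,\sfp + \widetilde{\mathcal{N}}_p[\sfq,\sfp] \, ,
\end{align}
where $\mathcal{N}_p, \widetilde{\mathcal{N}}_p$ are polynomials in the auxiliary functions $(u_k,v_k)$ and in products of the form $\sfq^{(m)}\sfp^{(n)}$ with $m+n\le p-1$ (as witnessed by the examples in \S\ref{sec:nlin_eq}). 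Using \eqref{eq:qp_asympt} together with the saddle-point asymptotics of $\operatorname{Ai}_p$ and $\widetilde{\operatorname{Ai}}_p$, each bilinear $\sfq^{(m)}\sfp^{(n)}$ behaves as $O\bigl(s^{-1+(1+m+n)/p}\bigr)$, while the linear term $s\,\sfq$ has the much larger size $s\cdot|\sfq|$. A term-by-term comparison then shows $\mathcal{N}_p[\sfq,\sfp]=o(s\,\sfq)$, and similarly for $\widetilde{\mathcal{N}}_p$. The auxiliary functions $u_k,v_k$ given in \eqref{eq:uv_integral} are themselves integrals over $I=[s,\infty)$ of products $\psi\phi_k$ plus resolvent corrections, so they vanish as $s\to\infty$ by the same kernel-decay argument, which controls the remaining monomials in $\mathcal{N}_p$.

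The main obstacle I anticipate is the last step for \emph{odd} $p$: there $\widetilde{\operatorname{Ai}}_p(z)$ grows as $z\to+\infty$, so the individual factors $\sfp^{(n)}$ are not small, and one must use the fact that the nonlinear terms always appear in balanced combinations $\sfq^{(m)}\sfp^{(n)}$ whose \emph{product} is controlled by $\operatorname{Ai}_p\widetilde{\operatorname{Ai}}_p\to 0$. To formalize this one can differentiate the bilinear asymptotics from \S\ref{sec:density_fn} and use $\dv{}{s}K_{p\text{-Airy}}(s,s)=-\operatorname{Ai}_p(s)\widetilde{\operatorname{Ai}}_p(s)$ to bootstrap bounds on mixed derivatives. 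Once the nonlinear pieces are controlled, dropping them from the coupled system leaves exactly \eqref{eq:qp_asympt_ODE}, and consistency with \eqref{eq:qp_asympt} is then automatic since $\phi$ and $\psi$ solve precisely these linear $p$-Airy equations by \eqref{eq:Airy_ODE}. This completes the argument for arbitrary $p\ge 2$.
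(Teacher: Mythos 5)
Your proposal follows essentially the same route as the paper: the resolvent corrections vanish as $I=[s,\infty)\to\emptyset$, giving $\sfq\to\phi$, $\sfp\to\psi$, and the bilinear estimate $\sfq^{(m)}\sfp^{(n)}=O\bigl(s^{-1+(1+m+n)/p}\bigr)$ from the saddle-point asymptotics of $\operatorname{Ai}_p$, $\widetilde{\operatorname{Ai}}_p$ then kills the nonlinear terms, reducing the coupled system to \eqref{eq:qp_asympt_ODE}. Your extra care about odd $p$ (where $\widetilde{\operatorname{Ai}}_p$ grows individually but only the balanced products $\sfq^{(m)}\sfp^{(n)}$ ever appear) is a worthwhile refinement of a point the paper leaves implicit, but it is not a different argument.
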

It has been already established that this statement holds for even $p$.
One can specify the boundary condition of the nonlinear differential equations, which is an analog of the Hastings--McLeod solution to the Painlevé II equation asymptotic to the Airy function~\cite{Hastings:1980ARMA}.

\begin{proposition}
We have the large gap behavior of the Fredholm determinant,
\begin{align}
    F(s) \ \xrightarrow{s \to - \infty} \ \exp \qty( - \tilde{C}_p |s|^{\frac{2}{p}+2})
\end{align}
with a $p$-dependent positive constant $\tilde{C}_p$.
\end{proposition}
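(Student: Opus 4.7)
The plan is to adapt the strategy of Proposition~\ref{prop:large_gap} to the single-endpoint setting using the representation
\begin{align*}
 \log F(s) \ = \ - \int_s^\infty \dd{\sigma} \, (\sigma - s) \, \sfq(\sigma) \, \sfp(\sigma)
\end{align*}
already given in this appendix, which follows from $\dv{H}{s} = - \sfq(s)\sfp(s)$ together with the boundary condition $F(s) \to 1$ as $s \to + \infty$. The problem then reduces to controlling the bilinear product $\sfq(\sigma)\sfp(\sigma)$ in the regime $\sigma \to - \infty$. The key step is to establish
\begin{align*}
 \sfq(\sigma) \, \sfp(\sigma) \ \xrightarrow{\sigma \to - \infty} \ c_p \, |\sigma|^{\frac{2}{p}}
\end{align*}
with a positive $p$-dependent constant $c_p$; this is the natural boundary analogue of the bulk estimate $H'(s) = O(s^{2/p})$ used in Lemma~\ref{lem:H_der_asymp}. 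Granting this, the elementary integral
\begin{align*}
 \int_s^0 (\sigma - s) \, |\sigma|^{\frac{2}{p}} \dd{\sigma}
 \ = \ \frac{p^2}{(p+2)(2p+2)} \, |s|^{\frac{2}{p}+2}
\end{align*}
yields $\log F(s) \to - \tilde{C}_p |s|^{\frac{2}{p}+2}$ with $\tilde{C}_p = c_p \, p^2 / [(p+2)(2p+2)] > 0$, consistent with the Forrester--Chen--Eriksen--Tracy conjecture for exponent $\beta = 1/p$ and reproducing the Tracy--Widom exponent $|s|^3$ at $p = 2$ (with $c_2 = 1/2$ giving $\tilde{C}_2 = 1/12$).

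For even $p$, $\sfq = \sfp$ satisfies the $(p/2)$-th member of the Painlev\'e~II hierarchy derived in \S\ref{sec:nlin_eq}, and the Hastings--McLeod-type pole-free solution is known to satisfy $\sfq(\sigma)^2 \sim c_p |\sigma|^{2/p}$ as $\sigma \to -\infty$ by the classical Riemann--Hilbert analysis~\cite{Claeys:2009CPAM, Cafasso:2019IMRN}. For odd $p$, the coupled nonlinear system derived in this appendix does not reduce to a Painlev\'e equation, and a pure power-law ansatz $\sfq \sim A |\sigma|^\alpha$, $\sfp \sim B|\sigma|^\beta$ substituted into, e.g., the $p=3$ equations $\sfq''' = \sigma \sfq + 6 \sfp \sfq \sfq'$ and $\sfp''' = -\sigma \sfp + 6 \sfq \sfp \sfp'$ is inconsistent at leading order, signalling that an oscillatory modulation inherited from the $p$-Airy functions on the negative axis must be retained. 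The exponent $2/p$ is nevertheless dictated by the density asymptote $\bar\rho(\sigma) = K_{p\text{-Airy}}(\sigma,\sigma) \sim |\sigma|^{1/p}/\pi$ from Proposition~\ref{prop:density_fn_asymptotic}, via the identity $\sfq(s)\sfp(s) = - \dv[2]{}{s} \log F(s)$ combined with the Forrester--Chen--Eriksen--Tracy scaling.

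The main obstacle is a rigorous proof of the leading asymptote of $\sfq \sfp$ for odd $p$, together with an explicit determination of $\tilde{C}_p$. A Riemann--Hilbert analysis analogous to~\cite{Bleher:2006CMP, dai2021asymptotics} for the Pearcey kernel would be required both to construct the oscillatory Hastings--McLeod-type solution of the coupled system and to extract the prefactor. Once the pre-asymptote is granted, however, the power-counting step producing $|s|^{2/p+2}$ is elementary and exactly parallel to the $[-s,s]$ case of Proposition~\ref{prop:large_gap}, so that the heuristic saddle-point framework employed throughout the paper already suffices to identify the leading exponent in the claim.
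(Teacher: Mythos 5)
Your reduction of the claim to the single estimate $\sfq(\sigma)\sfp(\sigma) = O(|\sigma|^{2/p})$ as $\sigma \to -\infty$, followed by the double integration $\log F(s) = -\int_s^\infty (\sigma-s)\,\sfq(\sigma)\sfp(\sigma)\dd{\sigma}$, matches the skeleton of the paper's argument. But the key estimate itself is exactly where your proposal has a genuine gap, and the paper closes it by a step you are missing: the integral of motion \eqref{eq:IOM_ex1}, $\mathsf{I}_1 = u_1 + v_1 - u_0 v_0 + \sfq\sfp = 0$, which lets one trade the bilinear $\sfq\sfp$ for $u_0 v_0 - (u_1+v_1)$. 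The point of this substitution is that $(u_0,v_0,u_1,v_1)$ have the explicit integral representations \eqref{eq:uv_integral} in terms of the \emph{original} $p$-Airy functions $\phi,\psi$ and the resolvent, so their $s\to-\infty$ growth can be read off from the known asymptotics \eqref{eq:Ai_asymp1}--\eqref{eq:Ai_asymp2} together with the projectivity of the kernel, word for word as in Lemma~\ref{lem:H_der_asymp}. This yields $H' = -\sfq\sfp = O(|s|^{2/p})$ uniformly for even and odd $p$, within the paper's (admittedly heuristic) saddle-point framework.

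By contrast, your attempt to estimate $\sfq\sfp$ directly does not succeed for odd $p$, and you say so yourself: the power-law ansatz in the coupled system is inconsistent (correctly — e.g.\ for $p=3$ balancing $\sigma\sfq$ against $6\sfp\sfq\sfq'$ forces $\sfq\sfp \sim |\sigma|^{2}$, not $|\sigma|^{2/3}$, so the oscillations are essential), and the fallback of inferring the exponent from the density via the Forrester--Chen--Eriksen--Tracy scaling is circular, since the proposition is meant to be an independent check of consistency with that conjecture, not a consequence of it. A secondary caution: the boundary condition $\sfq(s)\to\phi(s)$ holds as $s\to+\infty$ (where the resolvent vanishes), so one cannot simply substitute the $p$-Airy asymptotics for $\sfq,\sfp$ at $s\to-\infty$, where $I\to\mathbb{R}$ and the resolvent corrections are not small; this is precisely why the detour through $(u_k,v_k)$ and the kernel is needed. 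Your even-$p$ citation of the Painlev\'e~II hierarchy Hastings--McLeod asymptotics is legitimate but external to the paper's self-contained route, and the elementary integral producing $|s|^{2/p+2}$ is fine once the pre-asymptote is granted.
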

\begin{proof}
The proof is parallel with Proposition~\ref{prop:large_gap}.
In this case, we rewrite the $s$-derivative of the Hamiltonian~\eqref{eq:H_der_TW} in terms of $(u_k,v_k)_{k = 0,1}$ using the integral of motion~\eqref{eq:IOM_ex1}.
Then, we see that $H' \xrightarrow{s \to - \infty} O(s^{\frac{2}{p}})$ as in Lemma~\ref{lem:H_der_asymp}.
Integrating this twice, we obtain the large gap behavior of the Fredholm determinant.
\end{proof}

%% References %%
%\bibliographystyle{utphys}
%\bibliographystyle{amsalpha_mod}
\bibliographystyle{ytamsalpha}
\bibliography{ref}

\end{document}